\pgfplotsset{compat=1.18}
\theoremstyle{plain}
\newtheorem{theorem}{Theorem}
\newtheorem{lemma}[theorem]{Lemma}
\newtheorem{corollary}[theorem]{Corollary}
\newtheorem{proposition}[theorem]{Proposition}
\theoremstyle{definition}
\theoremstyle{remark}
\newtheorem{remark}[theorem]{Remark}
\renewcommand{\mathbf}{\boldsymbol}
\renewcommand{\tilde}{\widetilde}
\renewcommand{\leq}{\leqslant}
\renewcommand{\geq}{\geqslant}
\newcommand{\set}[1]{\mathcal{#1}}
\newcommand{\fnc}[1]{\mathrm{#1}}
\newcommand{\rv}[1]{\mathsf{#1}}
\newcommand{\rvs}[1]{\mathbf{\mathsf{#1}}}
\newcommand{\defeq}{\coloneqq}
\newcommand{\defas}{\eqqcolon}
\newcommand{\abs}[1]{\left\lvert#1\right\rvert}
\newcommand{\norm}[1]{\left\lVert#1\right\rVert}
\newcommand{\size}[1]{\left\lvert#1\right\rvert}
\newcommand{\ceil}[1]{\left\lceil#1\right\rceil}
\newcommand{\floor}[1]{\left\lfloor#1\right\rfloor}
\DeclareMathOperator{\ID}{id} 
\newcommand{\tensor}{\otimes}
\DeclareMathOperator{\tri}{\raisebox{.3pt}{\mbox{\tiny$\blacktriangle$}}}
\DeclareMathOperator{\argmax}{argmax}
\DeclareMathOperator{\D}{d} 
\DeclareMathOperator{\supp}{supp}
\DeclarePairedDelimiterX{\infdiv}[2]{(}{)}{#1\delimsize\Vert#2}
\DeclarePairedDelimiterX{\inner}[2]{\langle}{\rangle}{#1,#2}
\newcommand{\tos}[2]{\stackrel{\mathclap{\small\mbox{#1}}}{#2}} 
\newcommand{\dscript}[2]{\genfrac{}{}{0pt}{2}{#1}{#2}} 
\NewDocumentCommand{\multiadjustlimits}{m}
 {
  \group_begin:
  \multiadjustlimits_measure:n { #1 }
  \multiadjustlimits_print:n { #1 }
  \group_end:
 }
\newcommand\ie{\textit{i.e.}}
\newcommand\eg{\textit{e.g.}}
\newcommand\cf{\textit{cf.}}
\newcommand\wrt{w.r.t.~}
\newcommand\vs{\textit{v.s.}}
\begin{document}

\title{Channel Simulation: Finite Blocklengths and Broadcast Channels}

\author{
    Michael X. Cao,~\IEEEmembership{Member, IEEE},
    \thanks{MC is with the Department of Electrical and Computer Engineering and the Centre for Quantum Technologies, National University of Singapore, Singapore, Email: \href{mailto:m.x.cao@ieee.org}{\url{m.x.cao@ieee.org}}.}
    Navneeth Ramakrishnan,
    \thanks{NR is with the Centre for Quantum Technologies, National University of Singapore and the Department of Computing, Imperial College London, United Kingdom}
    Mario Berta
    \thanks{MB is with the Institute for Quantum Information, RWTH Aachen University, Germany and the Department of Computing, Imperial College London, United Kingdom.}\\
    Marco Tomamichel,~\IEEEmembership{Senior Member, IEEE} 
    \thanks{MT is with the Department of Electrical and Computer Engineering and the Centre for Quantum Technologies, National University of Singapore, Singapore.}
    \thanks{Part of Section~\ref{sec:One-Shot(p2p)} was presented at the IEEE International Symposium on Information Theory 2022 as~\cite{cao2022one-shot}.}
    \thanks{Section~\ref{sec:broadcast_channels} was presented at the IEEE International Symposium on Information Theory 2023 as~\cite{cao2023broadcast}.}
    \thanks{This research is supported by the National Research Foundation, Prime Minister's Office, Singapore and the Ministry of Education, Singapore under the Research Centres of Excellence programme. MC and MT are supported by NUS startup grants (A-0009028-02-00). MB acknowledges funding from the European Research Council (ERC Grant Agreement No.~948139).}
}

\maketitle


\begin{abstract}
We study channel simulation under common randomness assistance in the finite-blocklength regime and identify the smooth channel max-information as a linear program one-shot converse on the minimal simulation cost for fixed error tolerance.
We show that this one-shot converse can be achieved exactly using no-signaling-assisted codes, and approximately achieved using common randomness-assisted codes. Our one-shot converse thus takes on an analogous role to the celebrated meta-converse in the complementary problem of channel coding, and we find tight relations between these two bounds.
We asymptotically expand our bounds on the simulation cost for discrete memoryless channels, leading to the second-order as well as the moderate deviation rate expansion, which can be expressed in terms of the channel capacity and channel dispersion known from noisy channel coding.
Our bounds imply the well-known fact that the optimal asymptotic rate of one channel to simulate another under common randomness assistance is given by the ratio of their respective capacities.
Additionally, our higher-order asymptotic expansion shows that this reversibility falls apart in the second order.
Our techniques extend to discrete memoryless broadcast channels.
In stark contrast to the elusive broadcast channel capacity problem, we show that the reverse problem of broadcast channel simulation under common randomness assistance allows for an efficiently computable single-letter characterization of the asymptotic rate region in terms of the broadcast channel's multipartite mutual information.
\end{abstract}

\tableofcontents


\section{Introduction}
\label{sec:introduction}
\IEEEPARstart{C}{hannel simulation} is the art of simulating the input-output conditional distribution of a target channel using a noiseless resource channel~\cite{bennett2002entanglement, bennett2014quantum}.
Given a target channel $W_{\rv{Y}|\rv{X}}$, we want to approximately simulate a system that, upon accepting an input $x$ at the receiver, outputs $\rv{Y}$ distributed according to the law $W_{\rv{Y}|\rv{X}}(\cdot|x)$.
More precisely, we want to simulate the input-output correlations of the channel $W_{\rv{Y}|\rv{X}}$ as a black box, using (unconstrained) common randomness and limited noiseless communication between the sender and the receiver.
The goal is to find the minimal amount of communication needed to perform this task, both in the one-shot setting and asymptotically, where the goal is to simulate the input-output correlations for many uses of a discrete memoryless channel (DMC) for an arbitrary input sequence.
Channel simulation is the dual problem (in a sense that we make more precise in the next paragraph) of Shannon's noisy channel coding task~\cite{shannon1948mathematical}, and its asymptotic characterization for DMCs is known as the reverse Shannon theorem~\cite[Theorem~2]{bennett2002entanglement}.

Both noisy channel coding and channel simulation can be seen as special cases of the general channel interconversion problem (see, \eg~\cite{sudan2019communication} and references therein).
The latter is to simulate a target channel $W$ using a resource channel $V$, where both channels can be noisy.
It is then a fundamental question to characterize the asymptotics of the minimal rate, $m/n$, where $m$ is the number of uses of the resource channel required for the faithful simulation of $n$ instances of the target channel under a worst-case error criterion, \ie, to determine the capacity $C(V\mapsto W)$ of one channel $V$ to simulate another channel $W$. 
On the one hand, for point-to-point DMCs, the channel simulation corresponds to the special case where $V=\ID$ is the identity channel.
The reverse Shannon theorem~\cite{bennett2002entanglement, bennett2014quantum} establishes that $C(\ID\mapsto W)=C(W)$, where $C(W)$ is the maximal mutual information of the channel $W$, \ie, the channel capacity.
On the other hand, the special case where $W=\ID$ corresponds to the channel coding problem under the maximum error constraint, and $C(V\mapsto\ID)^{-1} = C(V)$ is nothing but the channel capacity of $V$.
In this framework, channel simulation can be understood as the reverse task of channel coding.

Channel simulation is closely related, but differs in important aspects from an array of problems in distributed correlation synthesis.
The study of remote generation of correlations dates back to the pioneering work of Wyner~\cite{wyner1975common}, who studied the distributed generation of a pair of correlated random variables using a minimal amount of shared randomness between the two parties (and without communication).
The minimum amount of shared randomness needed for this task is asymptotically characterized by Wyner's common information.
A similar task, known as (strong) coordination~\cite{cuff2010coordination,cuff2013distributed} or channel synthesis~\cite{winter2002compression,harsha2010communication} concerns the realization of a joint distribution $p_\rv{XY}$ between Alice and Bob, where Alice receives $\rv{X}$ and Bob generates $\rv{Y}$ with the help of one-way communications from Alice to Bob (with or without additional shared resources). 
This can be achieved by simulating the channel $p_{\rv{Y}|\rv{X}}\defeq p_\rv{XY}/p_\rv{X}$, but not vice versa.
The latter is due to the fact that coordination tasks are source-specific, \ie, the protocol can make use of the input distribution $p_\rv{X}$, and only guarantees the approximation to $p_\rv{X}\cdot p_{\rv{Y}|\rv{X}}$.
In contrast, channel simulation requires the approximation to be valid for any input distribution.
The difference between coordination and channel simulation in the asymptotic case is even more pronounced, since the former only concerns the i.i.d.\ input sources.
In this sense, the channel simulation task can be viewed as a ``universal'' coordination of $\{p_\rv{X}\cdot p_{\rv{Y}|\rv{X}}\}_{p_\rv{X}\in\set{P}(\set{X})}$, where the word ``universal'' refers to the requirement that the protocol works for all $p_\rv{X}$ simultaneously without using any knowledge of $p_\rv{X}$.
This can also be viewed as a ``worst-case'' channel synthesis, as noted in~\cite[Footnote 17]{sudan2019communication}.
For the above two tasks, we refer to~\cite{yu2022common} for a survey.
The information task studied in this paper corresponds to the original reverse Shannon theorem~\cite[Theorem~2]{bennett2002entanglement} and the classical non-feedback simulation \wrt general sources with excess shared randomness in~\cite[Fig.~2]{bennett2014quantum}.
The above three tasks are illustrated in Fig.~\ref{fig:related:tasks}.
\begin{figure}
\begin{subfigure}[t]{0.3\textwidth}
    \centering \includegraphics{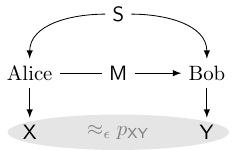}
    \caption{Distributed generation of two dependent random variables.}
\end{subfigure}
\hfill
\begin{subfigure}[t]{0.38\textwidth}
    \centering \includegraphics{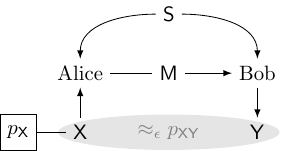}
    \caption{Strong coordination / channel synthesis / channel simulation on known sources}
\end{subfigure}
\hfill
\begin{subfigure}[t]{0.3\textwidth}
    \centering \includegraphics{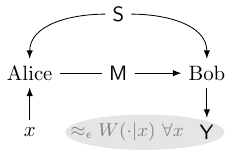}
    \caption{Channel simulation in this paper}
\end{subfigure}
\caption{Various tasks related to remote generation of correlations.}
\label{fig:related:tasks}
\end{figure}

It is noteworthy that a number of previous works are dedicated to the study of the trade-offs between the rate of common randomness and the rate of communication (\eg, see~\cite{cuff2013distributed, bennett2014quantum,yu2019exact}).
Despite the interest in this trade-off, we assume unconstrained common randomness in this paper, and leave this perspective to future studies.
One also has to distinguish between the feedback channel simulation, where the sender additionally gets a copy of the output of the channel, and the non-feedback setting as in our case (see~\cite[Fig.~2]{bennett2014quantum}).

The task of coordination is also generalized to broadcast channels~\cite{cuff2013distributed, haddadpour2016simulation} and multiple access channels~\cite{kurri2022multiple}.
In~\cite{cuff2013distributed}, the author considered the task of synthesizing a broadcast channel using a noiseless broadcast channel where each receiver gets the same message from the sender.
In~\cite{haddadpour2016simulation}, the author studies the same task, but with a noisy broadcast channel as resources.
To the best of our knowledge, the characterization of broadcast channel simulation with unlimited shared randomness leading to matching inner and outer bounds is novel to our work.

In this paper, we focus on the one-shot minimal communication cost for channel simulations given unlimited shared randomness, and its asymptotic expansions in the small and moderate-deviation regimes.
For the coordination problem, a finite-blocklength study is available in~\cite{cervia2021epsilon}.
In quantum information theory, one-shot bounds on measurement channel simulation are studied in~\cite{berta2014identifying}, and one-shot bounds on related tasks of quantum state splitting and state merging have been studied in~\cite{berta2011quantum} and~\cite{ramakrishnan2023moderate}.
However, restricting these bounds to classical channels leads to bounds that are not tight enough for deriving the finite-blocklength expansions that we are interested in here.
No-signaling-assisted quantum channel simulation is the topic of~\cite{fang2019quantum}, and a quantum version of the $\epsilon$-smooth channel max-information was defined in~\cite[Def.~5]{fang2019quantum}.
Our findings on the exact no-signaling achievability of the meta-converse for channel simulation are related to the results in~\cite[App.~A]{fang2019quantum}, but we give a more direct and conceptually different proof here for the classical case.
Moreover, the asymptotic expansion of the meta-converse given in \cite{fang2019quantum} is not second-order tight because it makes use of the so-called de Finetti reductions~\cite{christandl2009postselection}, also termed universal states \cite{hayashi2009universal}.

\smallskip

The remainder of the paper is organized as follows.
In Section~\ref{sec:overview} we first introduce our notation and collect the most important definitions.
We then formally introduce the task of channel simulation and state our main results.
In Section~\ref{sec:One-Shot(p2p)} we derive our one-shot converse and achievability bounds for the randomness-assisted point-to-point channel simulation.
In Section~\ref{sec:higher-order(p2p)} we provide various asymptotic expansions of these bounds in the small and moderate-deviation regimes.
In Section~\ref{sec:NS-channel_simulation} we consider codes with no-signaling assistance, and show that our one-shot converse is tight for such codes.
In Section~\ref{sec:broadcast_channels} we extend our results to the simulation of broadcast channels.
We conclude with a discussion of potential future work, open questions, and their difficulties in Section~\ref{sec:conclusion}.

\section{Overview of the main results}\label{sec:overview}

We summarize our notational conventions and important definitions here for easy reference.

\subsection{Notational conventions and definitions}

We use the following conventions and notations throughout the paper.
\begin{itemize}
    \item Logarithms (denoted by $\log$) in this paper are to base 2, unless otherwise stated.
    The natural logarithm is denoted by $\ln$.
    \item Sets are denoted by calligraphic fonts, \eg, $\set{X}$ reads ``set $\set{X}$''.
    \item Random variables are denoted in sans serif fonts, \eg, $\rv{X}$ reads ``random variable $\rv{X}$''.
    \item Vectors are denoted in boldface letters, \eg, $\mathbf{x}$ and $\rvs{X}$.
    In particular, we use the subscript and superscript to denote the starting and ending indexes of a vector.
    Namely, $\mathbf{x}_1^n\equiv (x_1, \ldots, x_n)$, and $\rvs{X}_1^n\equiv (\rv{X}_1,\ldots, \rv{X}_n)$.
    \item Given a discrete random variable $\rv{X}$, we denote its probability mass function (pmf) by $p_\rv{X}$. 
    \item $\Phi$ denotes the cumulative distribution function of the standard normal distribution.
    \item $\delta_{\cdot,\cdot}$ denotes the Kronecker delta, \ie, $\delta_{i,j}=1$ if $i=j$ and $\delta_{i,j}=0$ otherwise.
    \item For a positive integer $M$, we denote $[M]$ the set $\{1,\ldots, M\}$.
    \item Given a finite set $\set{X}$, $\set{P}(\set{X})$ denotes the set of all pmfs on $\set{X}$.
    Given an additional set $\set{Y}$, $\set{P}(\set{Y}|\set{X})$ denotes the set of all conditional pmfs on $\set{Y}$ conditioned on $\set{X}$.\end{itemize}

We also use the following definitions.
\begin{itemize}
    \item Given discrete random variables $\rv{X}$, $\rv{Y}$ and $\rv{Z}$, the Shannon entropy of $\rv{X}$ is given as $H(\rv{X}) \defeq - \sum_{x \in \set{X}} p_\rv{X}(x) \log p_\rv{X}(x)$, the the mutual information between $\rv{X}$ and $\rv{Y}$ is $I(\rv{X};\rv{Y}) \defeq H(\rv{X}) + H(\rv{Y}) - H(\rv{XY})$, and we define the common information between $\rv{X}$, $\rv{Y}$, and $\rv{Z}$ as 
    \begin{equation}
    I(\rv{X}; \rv{Y}; \rv{Z}) \defeq H(\rv{X}) + H(\rv{Y}) + H(\rv{Z}) - H(\rv{XYZ}).
    \end{equation}
    \item Given two pmfs $p$ and $q$ on the same finite alphabet, say $\set{X}$, we define
    \begin{itemize}
        \item the total variation distance (TVD) between $p$ and $q$: $\norm{p-q}_\fnc{tvd}\defeq\frac{1}{2}\sum_{x\in\set{X}}\abs{p(x)-q(x)}$;
        \item the max-divergence (or Rényi divergence of order $\infty$) between $p$ and $q$: 
        \begin{equation}
            D_{\max}\infdiv*{p}{q} \defeq \inf \big\{ \lambda \in \mathbb{R} \,\big|\, p(x) \leq 2^\lambda\cdot q(x) \ \forall\ x \in \set{X} \big\} ;
        \end{equation} 
        \item the $\epsilon$-information spectrum divergence between $p$ and $q$: 
        \begin{equation}\label{eq:def:Ds}
        D_{s+}^\epsilon\infdiv*{p}{q} \defeq \inf\left\{a\geq 0 \middle\vert \Pr\nolimits_{\rv{X}\sim p} \left[\log{\frac{p(\rv{X})}{q(\rv{X})}}>a\right]<\epsilon\right\} \!;
        \end{equation}
        \item the minimal type-II error given $\epsilon$-bound on type-I error for the hypothesis testing between $p$ and $q$:
        \begin{equation}\label{eq:def:beta}
        \beta^\star_\epsilon(p\|q)\defeq \inf\left\{ q(\set{A}): \set{A}\subset \set{X},\  p(\set{A}) \geq 1-\epsilon \right\}.
        \end{equation}
    \end{itemize}
    \item Given a conditional pmf $W_{\rv{Y}|\rv{X}}\in\set{P}(\set{Y}|\rv{X})$, also called a channel, we define 
    \begin{itemize}
        \item the channel mutual information: $C(W_{\rv{Y}|\rv{X}})\defeq \sup_{p_\rv{X}\in\set{P}(\set{X})} I(\rv{X}; \rv{Y})_{p_\rv{X}\cdot W_{\rv{Y}|\rv{X}}}$;
        \item the channel max-information: 
        \begin{equation}\label{eq:maxmutual}
        I_{\max}(W_{\rv{Y}|\rv{X}}) \defeq 
        \adjustlimits
        \sup_{p_\rv{X}\in\set{P}(\set{X})}
        \inf_{q_\rv{Y}\in\set{P}(\set{Y})}
        D_{\max}\infdiv*{p_\rv{X}\cdot {W}_{\rv{Y}|\rv{X}}}{p_\rv{X} \times q_\rv{Y}};
        \end{equation}
        \item the the $\epsilon$-smooth channel max-information \wrt some distance function $\Delta$ on $\set{P}(\set{Y}|\rv{X})$
        \begin{equation}
        I_{\max}^\epsilon(W_{\rv{Y}|\rv{X}}) \defeq
        \inf_{\tilde{W}_{\rv{Y}|\rv{X}}\in\set{P}(\set{Y}|\set{X}): \Delta(\tilde{W}_{\rv{Y}|\rv{X}}, W_{\rv{Y}|\rv{X}}) \leq\epsilon} I_{\max}(\tilde{W}_{\rv{Y}|\rv{X}});
        \end{equation}
        note that, throughout this paper, we consider a specific distance function $\Delta$ as in~\eqref{eq:choice:Delta};
        \item the $\epsilon$-channel dispersion (see, \eg,~\cite{polyanskiy2010channel, hayashi2009information}): 
        \begin{equation}\label{eq:def:V:eps}
        V_\epsilon(W_{\rv{Y}|\rv{X}}) \defeq \begin{cases}
        V_{\min}(W_{\rv{Y}|\rv{X}}) \defeq \min_{p_\rv{X}\in\Pi} V(p_\rv{X})
        & \text{if }\epsilon < \frac{1}{2} \\
        V_{\max}(W_{\rv{Y}|\rv{X}}) \defeq \max_{p_\rv{X}\in\Pi} V(p_\rv{X})
        & \text{if }\epsilon \geq \frac{1}{2}
        \end{cases}
        \end{equation}
        where
        \begin{align}
        \Pi & \defeq \left\{ p_{\rv{X}} \in \set{P}(\set{X}) \middle\vert \, I(\rv{X}; \rv{Y})_{p_{\rv{X}} \cdot W_{\rv{Y}|\rv{X}}} = C(W_{\rv{Y}|\rv{X}}) \right\} ,\\
        V(p_\rv{X}) &\defeq V\infdiv*{p_\rv{X}\cdot W_{\rv{Y}|\rv{X}}}{p_\rv{X}\times \sum_{x}p_\rv{X}(x)\cdot W_{\rv{Y}|\rv{X}}(\cdot|x)}\\
        V\infdiv{p}{q} &\defeq \fnc{Var}\left(\log{\frac{p(\rv{X})}{q(\rv{X})}}\right)= \sum_{x\in\set{X}} p(x)\cdot\left(\log{\frac{p(x)}{q(x)}}-D\infdiv{p}{q}\right)^2 \text{assuming $p \ll q$}.
        \end{align}
    \end{itemize}
    \item Given a discrete bipartite broadcast channel $W_{\rv{YZ}|\rv{X}}\in\set{P}(\set{Y}\times\set{Z}|\set{X})$, we define the bipartite channel mutual information $\tilde{C}(W_{\rv{YZ}|\rv{X}})$ (see~\cite{mcgill1954multivariate, watanabe1960information}) (also see the general form~\eqref{eq:def:K-partite:tilde:C}) as
    \begin{equation}\label{eq:def:tilde:C}
    \tilde{C}(W_{\rv{YZ}|\rv{X}}) \defeq \sup_{p_\rv{X}\in\set{P}(\set{X})} I(\rv{X};\rv{Y};\rv{Z})_{p_\rv{X}\cdot W_{\rv{YZ}|\rv{X}}}.
    \end{equation}
    \item A positive sequence $\{a_n\}_{n\in\mathbb{N}}$ is said to be moderate if $a_n\to 0$ but $\sqrt{n}\cdot a_n \to \infty$ as $n\to\infty$.
\end{itemize}

\begin{figure}
\begin{subfigure}[b]{0.33\textwidth}
    \centering \includegraphics{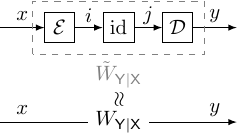}
    \caption{Non-assisted Channel Simulation}
    \label{fig:simulation:task:nonassisted}
\end{subfigure}
\begin{subfigure}[b]{0.33\textwidth}
    \centering \includegraphics{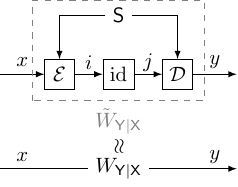}
    \caption{Random-assisted Channel Simulation}
    \label{fig:simulation:task:random}
\end{subfigure}
\begin{subfigure}[b]{0.33\textwidth}
    \centering \includegraphics{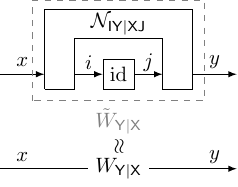}
    \caption{No-Signaling-Assisted Channel Simulation}
    \label{fig:simulation:task:NS}
\end{subfigure}
\caption{The task of (point-to-point) channel simulation.}
\label{fig:simulation:task}
\end{figure}


\subsection{Problem setup}
\label{sec:overview:setup}

Let us first formally introduce the channel simulation problem.
Let $W_{\rv{Y}|\rv{X}}\in\set{P}(\set{Y}|\set{X})$ be a channel.
Let $\Delta$ be a distance function on $\set{P}(\set{Y}|\set{X})$, and let $\epsilon\geq 0$.
We are interested in constructing a channel $\tilde{W}_{\rv{Y}|\rv{X}}\in\set{P}(\set{Y}|\set{X})$ that ``resembles'' $W_{\rv{Y}|\rv{X}}$, \ie,
\begin{equation}
\Delta(W_{\rv{Y}|\rv{X}},\tilde{W}_{\rv{Y}|\rv{X}}) \leq \epsilon
\end{equation}
using some $M$-alphabet-sized identity channel $\ID_M:[M]\times[M]\to\{0,1\}$ where $\ID_M(j|i)=\delta_{i,j}$.
In the randomness-assisted scenario, we allow encoders and decoders to be assisted by unlimited common randomness. In the no-signaling-assisted scenario, we allow more powerful joint encoders and decoders with the only restriction that they cannot communicate (except through the use of the noiseless channel).

\begin{itemize}
    \item In the \emph{random-assisted} scenario (also see Fig.~\ref{fig:simulation:task:random}), $\tilde{W}_{\rv{Y}|\rv{X}}$ is constructed as
    \begin{equation}
        \tilde{W}_{\rv{Y}|\rv{X}}(y|x) = \sum_{s\in\set{S}} p_\rv{S}(s) \cdot \sum_{i,j\in[M]} \mathcal{E}_{\rv{M}|\rv{XS}}(i|x,s) \cdot \ID_M(j|i) \cdot \mathcal{D}_{\rv{Y}|\rv{M}\rv{S}}(y|j,s)
        = \sum_{s\in\set{S}} p_\rv{S}(s) \cdot \sum_{i\in[M]} \mathcal{E}_{\rv{M}|\rv{XS}}(i|x,s) \cdot \mathcal{D}_{\rv{Y}|\rv{M}\rv{S}}(y|i,s)
    \end{equation}
    where $\mathcal{E}_{\rv{M}|\rv{XS}}\in\set{P}([M]|\set{X}\times\set{S})$ and $\mathcal{D}_{\rv{Y}|\rv{M}\rv{S}}\in\set{P}(\set{Y}|[M]\times\set{S})$ are some randomized encoding and decoding map on the sender and the receiver side, respectively; and $\rv{S}$ is some random variable shared between the sender and the receiver.
    \item In the \emph{no-signaling-assisted} scenario (also see Fig.~\ref{fig:simulation:task:NS}), $\tilde{W}_{\rv{Y}|\rv{X}}$ is constructed as
    \begin{equation}
        \tilde{W}_{\rv{Y}|\rv{X}}(y|x) = \sum_{i,j\in[M]} \mathcal{N}_{\rv{IY}|\rv{XJ}}(i,y|x,j) \cdot \delta_{i,j}
    \end{equation}
    where the joint encoding-decoding map $\mathcal{N}_{\rv{IY}|\rv{XJ}}\in\set{P}([M]\times\set{Y}|\set{X}\times[M])$ is no-signaling (see, \eg,~\cite{cubitt2011zero, matthews2012linear, fang2019quantum}), \ie
    \begin{align}
    \label{eq:ns:requirement:1}
    \sum_{y\in\set{Y}} N_{\rv{IY}|\rv{XJ}}(i,y|x,j) &= N_{\rv{I}|\rv{X}}(i|x) \qquad\forall j\in\set{J},\\
    \label{eq:ns:requirement:2}
    \sum_{i\in\set{I}} N_{\rv{IY}|\rv{XJ}}(i,y|x,j) &= N_{\rv{Y}|\rv{J}}(y|j) \qquad\forall x\in\set{X}.
    \end{align}
    The random variables $\rv{X}$ and $\rv{I}$ (or $\rv{J}$ and $\rv{Y}$) are the input and output of the encoder (or decoder), respectively.
    The requirements~\eqref{eq:ns:requirement:1} and~\eqref{eq:ns:requirement:2}, as the name suggests, prohibit the use of $\mathcal{N}_{\rv{IY}|\rv{XJ}}$ to send information between parties directly.
    No-signaling correlations in particular include correlations that can be achieved using quantum entanglement.
\end{itemize}
In both scenarios above, our goal is to characterize the minimal integer $M$ so that the above tasks are feasible.

Moreover, in this paper, we choose the distance function $\Delta$ as the maximal total variation distance (TVD) between output distributions, \ie,
\begin{equation} \label{eq:choice:Delta}
\Delta(W_{\rv{Y}|\rv{X}},\tilde{W}_{\rv{Y}|\rv{X}}) =  \norm{\tilde{W}_{\rv{Y}|\rv{X}}-W_{\rv{Y}|\rv{X}}}_\fnc{tvd} \defeq \sup_{x\in\set{X}}\norm{\tilde{W}_{\rv{Y}|\rv{X}}(\cdot|x)-W_{\rv{Y}|\rv{X}}(\cdot|x)}_\fnc{tvd}.
\end{equation}
In other words, we are working with a \emph{worst case error criterion} and require that
\begin{equation} \label{eq:channel-distance}
\norm{p_\rv{X}\cdot\tilde{W}_{\rv{Y}|\rv{X}}-p_\rv{X}\cdot W_{\rv{Y}|\rv{X}}}_\fnc{tvd} \leq\epsilon 
\end{equation}
\emph{for all} $p_\rv{X}\in\set{P}(\set{X})$, \ie, we ask for $\epsilon$-approximation of the joint input-output distribution of the channel for \emph{for every possible input distribution $p_\rv{X}\in\set{P}(\set{X})$}.

In Section~\ref{sec:higher-order(p2p)}, we further consider this problem in asymptotic setups.
Specifically, we consider the task of simulating $n$ uses of a DMC in the small and moderate deviation regimes. 
In Section~\ref{sec:broadcast_channels}, we extend the channel simulation problem to broadcast channels.
We defer the detailed description of these setups to Sections~\ref{sec:higher-order(p2p)} and~\ref{sec:broadcast_channels}, respectively.


\subsection{One-shot bounds}


\subsubsection{One-Shot meta-converse}\label{sec:metaconverse}

If the criterion~\eqref{eq:channel-distance} is satisfied in the aforementioned random-assisted scenario, we call the triple $(\mathcal{E}_{\rv{M}|\rv{XS}},\mathcal{D}_{\rv{Y}|\rv{MS}},\rv{S})$ a size-$M$ $\epsilon$-simulation code for $W_{\rv{Y}|\rv{X}}$.
An integer $M$ is said to be \emph{attainable} for a given $\epsilon\in[0,1]$ if there exists a size-$M$ $\epsilon$-simulation code.
As a first observation, if $M$ is attainable, so is any integer greater than $M$.
Our goal is thus to characterize $M_\epsilon^\star(W_{\rv{Y}\vert\rv{X}})$, the minimal attainable size of the $\epsilon$-simulation codes for a given DMC $W_{\rv{Y}\vert\rv{X}}$.
We find that the minimal simulation cost with fixed TVD tolerance $\epsilon\in(0,1)$, denoted by $M_\epsilon^\star(W_{\rv{Y}|\rv{X}})$, is lower bounded as (see Theorem~\ref{thm:one-shot:converse})
\begin{align}\label{eq:meta-converse}
{M_\epsilon^\star(W_{\rv{Y}|\rv{X}})} \geq 
 \ceil{ 2^{I_{\max}^\epsilon(W_{\rv{Y}|\rv{X}})} }, 
\end{align}
Note that the $\epsilon$-smooth channel max-mutual information is a linear program, which we call \emph{meta-converse for channel simulation}.
Our derivations are based on information inequalities for partially smoothed entropy measures~\cite{anshu2020partially}.

We then go on to show that $\ceil{2^{I_{\max}^\epsilon(W)}}$ corresponds exactly to the simulation cost with no-signaling-assisted codes.
This is in analogy to similar findings for the meta-converse for channel coding in terms of hypothesis testing.
Namely, with $N_\epsilon^\star(W_{\rv{Y}|\rv{X}})$ denoting the largest number of distinct messages that can be transmitted through $W_{\rv{Y}|\rv{X}}$ within the average error $\epsilon$, one has the meta-converse bound~\cite{polyanskiy2010channel} (also see~\cite{hayashi2009information})
\begin{equation}
N_\epsilon^\star(W_{\rv{Y}|\rv{X}}) \leq \floor{ \adjustlimits\sup_{p_\rv{X}\in\set{P}(\set{X})} \inf_{q_\rv{Y}\in\set{P}(\set{Y})}  {\frac{1}{\beta^\star_\epsilon(p_\rv{X}\cdot W_{\rv{Y}|\rv{X}},p_\rv{X}\times q_\rv{Y})}} },
\end{equation}
where the value on the right-hand side corresponds exactly to the number of distinct messages that can be transmitted over the channel with average error $\epsilon$ using no-signaling-assisted codes~\cite{matthews2012linear}.
\smallskip


\subsubsection{One-Shot Achievability via Rejection-Sampling}

We find that our meta-converse is also achievable up to small fudge terms with common randomness-assisted codes.
That is, combined with the meta-converse, we have (see Theorem~\ref{thm:one-shot:achievability})
\begin{equation}\label{eq:achievability}
I_{\max}^{\epsilon}(W_{\rv{Y}|\rv{X}}) \leq \log{M_\epsilon^\star(W_{\rv{Y}|\rv{X}})}\leq I_{\max}^{\epsilon-\delta}(W_{\rv{Y}|\rv{X}}) + \log\log{\frac{1}{\delta}} \quad \text{for any} \quad 0<\delta\leq\epsilon.
\end{equation}
These one-shot bounds are tight in a fixed error asymptotic analysis for memoryless channels up to logarithmic terms, since both the upper and lower bounds are stated in terms of the same quantity up to slack terms that grow slower than logarithmic in $n$.\footnote{To convince ourselves of this, we may recall that in an asymptotic analysis for memoryless channels with block length $n$, the parameter $\delta$ is commonly chosen of the order $\Omega(1/\sqrt{n})$ as this yields only a constant penalty in the second-order term. This then yields a gap between meta-converse and the above achievability bound that scales at most as $\log \log n$.}
Our achievability proofs are based on rejection sampling techniques, which is an established method in statistics (see, \eg,~\cite{vonNeumann1951Various} and~\cite[Chapter 2.3]{robert1999monte}, and has seen its use in various problems in coding and information theory (see, \eg,~\cite{jain2003direct}).


\subsection{Asymptotic results}


\subsubsection{Finite Block-Length Computation}

For DMCs, our meta-converse can be efficiently computed for small blocklength $n\in\mathbb{N}$.
In particular, $I_{\max}^\epsilon(W_{\rv{Y}|\rv{X}}^{\tensor n})$ for $n$ repetitions of a memoryless binary symmetric channel is a linear program whose computational complexity grows only polynomially in $n$.
We showcase this by a numerical example in Fig.~\ref{fig:BSC:NS:sim-intro}, where we plot the regularized $\epsilon$-smooth channel max-information for the binary symmetric channel.


\begin{figure}
\includegraphics{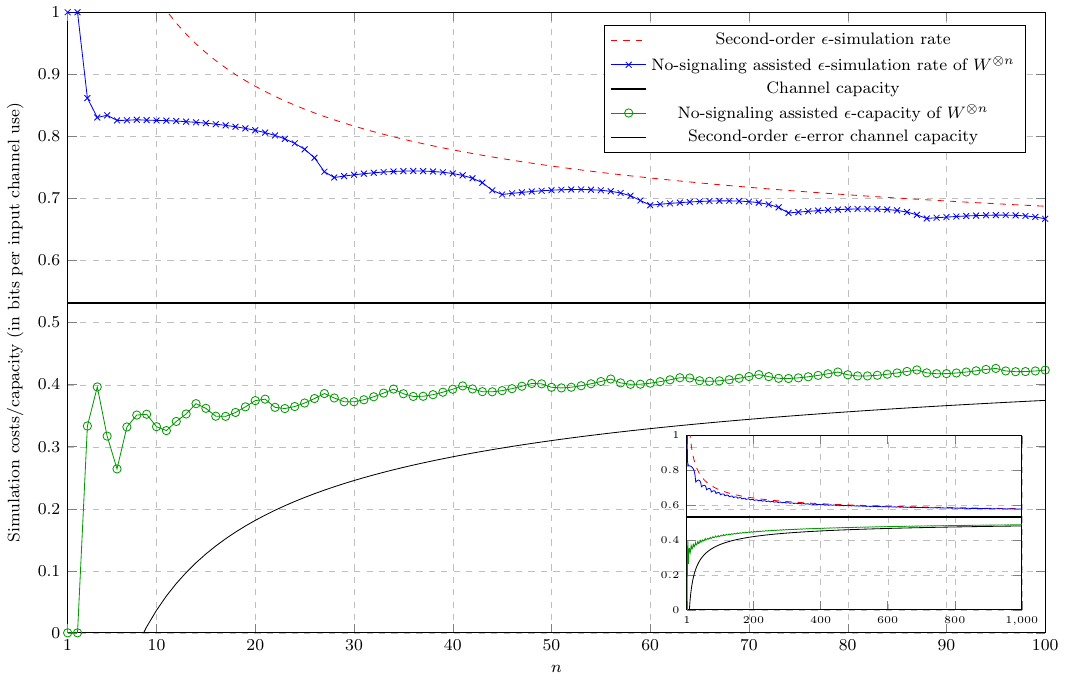}
\caption{Consider i.i.d.~copies of a binary symmetric channel $W$ with crossover probability $\delta = 0.1$ and an error parameter $\epsilon = 0.05$.
The figure shows the second-order approximation of the common randomness assisted $\epsilon$-error simulation cost and the exact no-signaling assisted $\epsilon$-error simulation cost. This is contrasted with the corresponding quantities for channel coding, as discussed in~\cite{polyanskiy2010channel}, for example. The capacity of the channel is achieved for both tasks in the asymptotic limit.}
\label{fig:BSC:NS:sim-intro}
\end{figure}


\subsubsection{Asymptotic Expansions}

For any $\epsilon\in(0,1)$ and $\delta>0$ small enough, we establish the following relationship between the channel simulation task and the channel coding task, \ie, (see Theorem~\ref{thm:cs:cc})
\begin{equation}
\log{N_{1-\epsilon-\delta}^\star(W_{\rv{Y}|\rv{X}})} + \log{\delta} \leq \log{M_\epsilon^\star(W_{\rv{Y}|\rv{X}})} \leq \log{N_{1-\epsilon+\delta}^\star(W_{\rv{Y}|\rv{X}})} + \log{\frac{2}{\delta}} + \log{\log{\frac{4}{\delta^2}}}.
\end{equation}
Using this relationship and asymptotic expansions of channel coding in the small deviation regime~\cite{strassen1962asymptotische, polyanskiy2010channel, hayashi2009information}
\begin{equation}\label{eq:second-order-coding}
\log{N_\epsilon^\star(W_{\rv{Y}|\rv{X}}^{\tensor n})} = n\cdot C(W_{\rv{Y}|\rv{X}}) + \sqrt{n V_\epsilon(W_{\rv{Y}|\rv{X}})}\cdot \Phi^{-1}(\epsilon) + O(\log{n}),
\end{equation}
we find asymptotic expansions for the channel simulation in the small-deviation regime as a direct result of Theorem~\ref{thm:cs:cc} (see Corollary~\ref{cor:cs:expand}), \ie, 
\begin{equation}
\label{eq:second-order-simulation}
\log{M_\epsilon^\star(W_{\rv{Y}|\rv{X}}^{\tensor n})} = n\cdot C(W_{\rv{Y}|\rv{X}}) + \sqrt{n V_{1-\epsilon}(W_{\rv{Y}|\rv{X}})}\cdot \Phi^{-1}(1-\epsilon) + O(\log{n}).
\end{equation}
In the moderate deviation regime, we bound the $n$-fold channel simulation cost using the information spectrum divergence, and consider the asymptotic expansion of the latter in the moderate-deviation regime.
This yields the following expansion (see Theorem~\ref{thm:cs:expand:moderate}).
\begin{align}
\frac{1}{n} \log{M_{\epsilon_n}^\star(W_{\rv{Y}|\rv{X}}^{\tensor n})} &= C(W_{\rv{Y}|\rv{X}}) + \sqrt{2V_{\max}(W_{\rv{Y}|\rv{X}})}\cdot a_n + o(a_n).
\end{align}
where $\epsilon_n\defeq 2^{-na_n^2}$ and $\{a_n\}_{n=1}^\infty$ is some moderate sequence.

As a numerical example, we plot in Fig.~\ref{fig:BSC:NS:sim-intro}, for the binary symmetric channel, the meta-converse and its asymptotic second-order expansion for both channel simulation and channel coding.
Whereas the asymptotic first-order rate is given by the same number for both tasks\,---\,the channel capacity\,---\,in the finite-blocklength regime there is a second-order gap
\begin{equation}
\left(\sqrt{n V_{1-\epsilon}(W_{\rv{Y}|\rv{X}})}+\sqrt{n V_\epsilon(W_{\rv{Y}|\rv{X}})}\right)\cdot \Phi^{-1}(\epsilon)
\end{equation}
between the two values.
In particular, when $\epsilon<\frac{1}{2}$, as one can expect from~\eqref{eq:second-order-coding} and~\eqref{eq:second-order-simulation}, the second-order simulation rate and the second-order coding rate approach the capacity from above and from below, respectively.
As a consequence, even if we allow no-signaling-assisted codes, the asymptotic first-order reversibility of channel interconversion breaks down in second-order\,---\,unless the channel dispersion terms become zero. 
For example, the conversion between the ternary identity channel and the following ternary channel is reversible up to second order:
\begin{equation}
W_{\rv{Y}|\rv{X}}(y|x) = \begin{cases} \frac{1}{2} & \text{if }x\neq y \\ 0 &\text{otherwise}\end{cases},
\end{equation}
where $x,y\in\{0,1,2\}$.


\subsection{Results on broadcast channels}


\subsubsection{Broadcast Channel Simulation}\label{sec:broadcast}
We extend our results to network topologies and derive a channel simulation theorem for broadcast channels\footnote{Please refer to the beginning of Section~\ref{sec:broadcast_channels} for a detailed description of the setup of the task of simulating a broadcast channel.}.
We provide a detailed discussion for the case where $K=2$.
Namely, for the task of simulating $W_{\rv{YZ}|\rv{X}}$ under common randomness assistance, we characterize the asymptotic simulation rate region $\set{R}^\star_\epsilon(W_{\rv{YZ}|\rv{X}})$, \ie, the closure of the set of all attainable \emph{rate} pairs $(r_1, r_2)$ of $\epsilon$-simulation codes for $W_{\rv{YZ}|\rv{X}}^{\tensor n}$ for $n$ large enough, as (see Theorem~\ref{thm:broadcast:asymptotic})
\begin{equation}
\set{R}^\star_\epsilon(W_{\rv{YZ}|\rv{X}})
= \left\{(r_1,r_2)\in\mathbb{R}_{>0}^2 \middle\vert r_1 \geq C(W_{\rv{Y}|\rv{X}}),\,
    r_2 \geq C(W_{\rv{Z}|\rv{X}}),\,
    r_1+r_2\geq \tilde{C}(W_{\rv{YZ}|\rv{X}}) \right\}.
\end{equation}
Here, $W_{\rv{Y}|\rv{X}}$ and $W_{\rv{Z}|\rv{X}}$ are reduced channels of $W_{\rv{YZ}|\rv{X}}$.
Our achievability bounds utilize a modified version of the bipartite convex split lemma from~\cite{anshu2017unified, anshu2017quantum}, while the converse bound uses similar inequalities as in the point-to-point case.

Our characterizations give a direct operational interpretation to the multipartite mutual information of broadcast channels.
We can also efficiently compute the rate region $\set{R}^\star_\epsilon(W_{\rv{YZ}|\rv{X}})$ using iterative Blahut--Arimoto techniques.
As the inequality 
\begin{equation}
\tilde{C}(W_{\rv{YZ}|\rv{X}})\geq C(W_{\rv{Y}|\rv{X}})+C(W_{\rv{Z}|\rv{X}})
\end{equation}
is typically strict, the sum rate constraint on $r_1+r_2$ in terms of the multipartite mutual information is in general necessary.
Fig.~\ref{fig:degraded_BSC_region-intro} shows an example in which this happens.
\begin{figure}
\centering\includegraphics{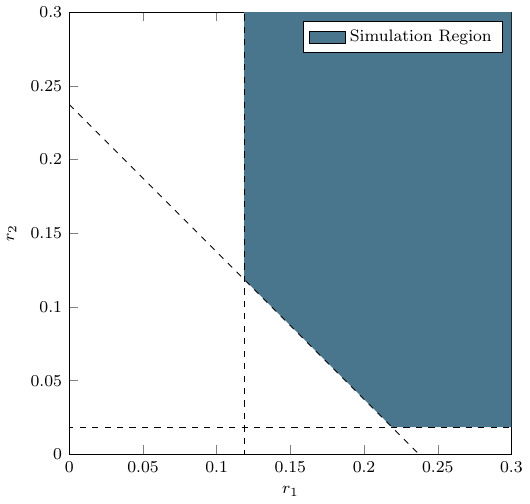}
\caption{Asymptotic simulation region for the broadcast channel $W_{\rv{YZ}|\rv{X}} = \fnc{BSC}_{\rv{Z}|\rv{Y}}\circ \fnc{BSC}_{\rv{Y}|\rv{X}}$, where $\fnc{BSC}_{\rv{Z}|\rv{Y}}$ and $\fnc{BSC}_{\rv{Y}|\rv{X}}$ are binary symmetric channels with crossover probability $\delta = 0.3$.}
\label{fig:degraded_BSC_region-intro}
\end{figure}
We conclude that unlike the capacity region of broadcast channels\,---\,for which in general only non-matching inner and outer bounds are known~\cite{cover1998comments}\,---\,our characterization of the simulation region is exact, has a single-letter form, and can be computed efficiently.
Finally, by sub-additivity of the Shannon entropy we additionally have $\tilde{C}(W_{\rv{YZ}|\rv{X}})\geq C(W_{\rv{YZ}|\rv{X}})$, which shows that employing a global decoder for channel simulation\footnote{Namely, we treat the two receivers as a single receiver with access to both messages $\rv{M}$ and $\rv{N}$. This will make the task at least no more difficult, and any converse bound on the message size will automatically be a lower bound for $\size{\set{M}}\cdot\size{\set{N}}$.} would lead to a less restrictive sum rate constraint.
However, by the same argument of a global decoder, $C(W_{\rv{YZ}|\rv{X}})$ is also an outer bound on the channel capacity region of broadcast channels.
Consequently, we find that common randomness\,---\,or even no-signaling\,---\,assisted broadcast channel interconversion already becomes asymptotically irreversible in the first order.
This is in contrast to the point-to-point case, where we have seen that irreversibility only appears in the second order.

For general $K$-receiver broadcast channels $W_{\rv{\rvs{Y}}_1^K|\rv{X}}$ from $\set{X}$ to $\set{Y}_1\times\cdots\times\set{Y}_K$, we find the single-letter characterization (see Theorem~\ref{thm:broadcast:asymptotic:K})
\begin{equation}
\set{R}^\star_\epsilon(W_{\rvs{Y}_1^K|\rv{X}})=\left\{(r_1,\cdots,r_n)\in\mathbb{R}_{>0}^n \,\middle|\, \sum_{i\in\set{J}} r_i\geq\tilde{C}(W_{\rvs{Y}_\set{J}|\rv{X}})\quad\forall \set{J}\subset\{1,\ldots,K\}\text{ with }\set{J}\neq\emptyset\right\}
\end{equation}
where for each $k=1,\ldots,K$, 
\begin{equation}\label{eq:def:K-partite:tilde:C}
\tilde{C}(W_{\rvs{Y}_1^k|\rv{X}})\defeq \max_{p_\rv{X}\in\set{P}(\set{X})} I(\rv{X};\rv{Y}_1;\ldots;\rv{Y}_k)_{p_\rv{X}\cdot W_{\rvs{Y}_1^k|\rv{X}}}, 
\end{equation}
and the multipartite mutual information $I(\rv{X};\rv{Y}_1;\ldots;\rv{Y}_k)$ is defined in~\eqref{def:K-partite:Imax}.


\section{One-Shot Bounds for Randomness-Assisted Channel Simulation}
\label{sec:One-Shot(p2p)}

In this section, we consider the problem of simulating a single copy of a channel described in Section~\ref{sec:overview:setup}.
Recall that, given $W_{\rv{Y}|\rv{X}}\in\set{P}(\set{Y}|\set{X})$, a size-$M$ $\epsilon$-simulation code is a triple $(\mathcal{E}_{\rv{M}|\rv{XS}},\mathcal{D}_{\rv{Y}|\rv{MS}},\rv{S})$ such that 
\begin{equation} 
\norm{\tilde{W}_{\rv{Y}|\rv{X}}-W_{\rv{Y}|\rv{X}}}_\fnc{tvd} \leq \epsilon
\end{equation}
where 
\begin{equation}
    \tilde{W}_{\rv{Y}|\rv{X}}(y|x) \defeq \sum_{s\in\set{S}} p_\rv{S}(s) \cdot \sum_{i\in[M]} \mathcal{E}_{\rv{M}|\rv{XS}}(i|x,s) \cdot \mathcal{D}_{\rv{Y}|\rv{M}\rv{S}}(y|i,s)
\end{equation}
(also see~\ref{sec:metaconverse} and Fig.~\ref{fig:simulation:task:random}).
An integer $M$ is said to be \emph{attainable} for a given $\epsilon\in[0,1]$ if there exists a size-$M$ $\epsilon$-simulation code.
We present a pair of achievability and converse bounds for $M_\epsilon^\star(W_{\rv{Y}|\rv{X}})$ the minimal attainable size of $\epsilon$-simulation codes for channel $W_{\rv{Y}|\rv{X}}$.
The achievability bound applies the rejection sampling method and is inspired by~\cite{jain2003direct}. 
The converse bound is the result of a series of relationships among information quantities.


\subsection{Achievability Bound}

We use the rejection sampling method for channel simulation.
To simulate the output of a distribution $p_\rv{Y}\in\set{P}(\set{Y})$ using many copies of i.i.d. random variables $\rv{Y}_1\rv{Y}_2\ldots\sim q_\rv{Y}\times q_\rv{Y} \times \cdots$, rejection sampling applies the accept-reject algorithm for each $\rv{Y}_i$, \ie, $\rv{Y}_i$ is ``accepted'' and chosen as an output to simulate $p_\rv{Y}$ with probability $\lambda\cdot\frac{p_\rv{Y}(\rv{Y}_i)}{q_\rv{Y}(\rv{Y}_j)}$ (for some normalizer $\lambda>0$ such that this number is bounded by $1$).
The procedure is stopped if $\rv{Y}_i$ is accepted, or moves onto $\rv{Y}_{i+1}$ otherwise.
We formalize the details of this procedure and its analysis as the following lemma.
\begin{lemma}\label{lem:rejection:sampling}
Let $p_\rv{Y}, q_\rv{Y}\in\set{P}(\set{Y})$ be such that $p\ll q$.
Let $M \geq 1$ be an integer.
Suppose $\rv{Y}_1$, $\rv{Y}_2$, \ldots, $\rv{Y}_M$ are i.i.d. random variables where $\rv{Y}_j$ is distributed according to $q_\rv{Y}$ for all $j=1,\ldots,M$.
Let $\lambda\defeq\big(\max_{y\in\set{Y}} \frac{p_\rv{Y}(y)}{q_\rv{Y}(y)}\big)^{-1} = 2^{-D_{\max}\infdiv*{p_\rv{Y}}{q_\rv{Y}}}$.
We generate~$\tilde{\rv{Y}}$ conditioned on $\rvs{Y}_1^M$ using the following procedure:
\begin{algorithmic}[1]
    \State{$j\gets 0$;}
    \Do
    \State{$j\gets j+1$;}
    \State{Generate random number $u\sim\mathcal{U}([0,1])$;}
    \DoWhile{$j<M$ and $u>\lambda \cdot \frac{p_\rv{Y}(\rv{Y}_j)}{q_\rv{Y}(\rv{Y}_j)}$}
    \State{$\tilde{\rv{Y}}\gets\rv{Y}_j$}
\end{algorithmic}
The procedure will create a random variable $\tilde{\rv{Y}}$ jointly distributed with $\rv{Y}_1,\ldots,\rv{Y}_M$.
Let $\tilde{p}_\rv{Y}$ denote the marginal distribution of $\tilde{\rv{Y}}$.
For any $\epsilon\in(0,1)$, if $M$ is large enough such that
\begin{equation}
\label{eq:rejection:sampling:M}
\left( 1-\lambda\right)^M \leq \epsilon,
\end{equation}
then
\begin{equation}
\norm{p_\rv{Y} - \tilde{p}_\rv{Y}}_{\fnc{tvd}} \leq \epsilon.
\end{equation}
\end{lemma}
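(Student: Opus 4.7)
The plan is to analyze the standard rejection sampling recipe and show that the algorithm outputs a sample drawn exactly from $p_\rv{Y}$ on the high-probability event that at least one trial is accepted, while on the complementary event something else (that I will not need to identify precisely) happens. The TVD bound then falls out of the probability of the complementary event.

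First I would establish the basic single-trial fact. For a single sample $\rv{Y}\sim q_\rv{Y}$, the acceptance probability conditioned on $\rv{Y}=y$ is $\lambda\cdot p_\rv{Y}(y)/q_\rv{Y}(y)$, which lies in $[0,1]$ by the choice of $\lambda$. Marginalizing, the unconditional acceptance probability equals $\sum_y q_\rv{Y}(y)\cdot \lambda\cdot p_\rv{Y}(y)/q_\rv{Y}(y)=\lambda$, and by Bayes' rule, the conditional distribution of $\rv{Y}$ given acceptance is exactly $p_\rv{Y}(y)$.

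Next I would use independence of the $M$ trials. Letting $j^\star$ denote the first accepted index (if any), for each $j\in\{1,\ldots,M\}$ and each $y\in\set{Y}$,
\begin{equation}
\Pr(j^\star=j,\, \rv{Y}_j=y) = (1-\lambda)^{j-1}\cdot\lambda\cdot p_\rv{Y}(y),
\end{equation}
so summing over $j=1,\ldots,M$ gives $\Pr(A,\tilde{\rv{Y}}=y)=(1-(1-\lambda)^M)\cdot p_\rv{Y}(y)$, where $A$ denotes the event that at least one of the $M$ trials is accepted. Thus conditioned on $A$ the output $\tilde{\rv{Y}}$ is distributed exactly as $p_\rv{Y}$, and the complementary event $A^c$ has probability $(1-\lambda)^M$. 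On $A^c$ the algorithm simply outputs $\rv{Y}_M$, whose marginal distribution is some $r\in\set{P}(\set{Y})$ that I would leave unspecified.

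Combining the two cases yields the decomposition $\tilde{p}_\rv{Y} = (1-(1-\lambda)^M)\,p_\rv{Y} + (1-\lambda)^M\, r$, so that
\begin{equation}
\tilde{p}_\rv{Y}-p_\rv{Y} = (1-\lambda)^M\cdot (r - p_\rv{Y}).
\end{equation}
Taking TVD and using $\norm{r-p_\rv{Y}}_\fnc{tvd}\leq 1$ for any two probability distributions, together with $\lambda = 2^{-D_{\max}\infdiv*{p_\rv{Y}}{q_\rv{Y}}}$, gives
\begin{equation}
\norm{\tilde{p}_\rv{Y}-p_\rv{Y}}_\fnc{tvd}\leq \left(1-2^{-D_{\max}\infdiv*{p_\rv{Y}}{q_\rv{Y}}}\right)^M \leq \epsilon,
\end{equation}
where the last inequality is the hypothesis~\eqref{eq:rejection:sampling:M}. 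I expect the only mildly subtle step to be the joint probability computation in the second paragraph; it becomes routine once the event $\{j^\star=j\}$ is expressed as independent rejections of $\rv{Y}_1,\ldots,\rv{Y}_{j-1}$ combined with acceptance at step $j$.
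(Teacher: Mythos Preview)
Your proposal is correct and follows essentially the same approach as the paper: both arguments introduce the acceptance event $A$, verify that the output is exactly $p_\rv{Y}$ conditioned on $A$, compute $\Pr(A^c)=(1-\lambda)^M$, and then bound the TVD via the mixture decomposition $\tilde{p}_\rv{Y}=\Pr(A)\,p_\rv{Y}+\Pr(A^c)\,r$. The only cosmetic difference is that you first isolate the single-trial acceptance calculation before passing to the $M$-trial analysis, whereas the paper carries out the full computation of $p_{\tilde{\rv{Y}}|\rv{A}}(\cdot|1)$ in one step.
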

\begin{proof}
Consider the procedure stated above.
Let $\rv{A}$ be a binary random variable such that $\rv{A}=1$ if the program exited the loop (lines 2--5) due to a violation of the second condition ``$u>\lambda \cdot \frac{p_\rv{Y}(\rv{Y}_j)}{q_\rv{Y}(\rv{Y}_j)}$''.
Otherwise, let $\rv{A}=0$.
(In other words, $\rv{A}$ equals $1$ if one of the $\rv{Y}_i$'s was ``accepted'', and equals $0$ otherwise.)
By direct computation, we have
\begin{align}
p_{\rv{A}}(0) &= \sum_{y_1,\ldots,y_M} \prod_{i=1}^M q_\rv{Y}(y_i) \cdot \prod_{i=1}^M \left(1-\lambda\cdot \frac{p_\rv{Y}(y_i)}{q_\rv{Y}(y_i)}\right) \\
&= \sum_{y_1,\ldots,y_M} \prod_{i=1}^M \left(q_\rv{Y}(y_i) - \lambda\cdot p_\rv{Y}(y_i)\right)\\
\label{eq:rejection:sampling:reject:prob}
&= \left(1-\lambda\right)^M
\leq \epsilon
\end{align}
where we used~\eqref{eq:rejection:sampling:M} (together with the definition of $\lambda$) for~\eqref{eq:rejection:sampling:reject:prob}.
On the other hand, for each $y\in\set{Y}$,
\begin{align}
\label{eq:rejection:sampling:accept:y:prob}
p_{\tilde{\rv{Y}}|\rv{A}}(y|1)
&= \frac{1}{p_\rv{A}(1)} \cdot \sum_{j=1}^{M} p_{\tilde{Y}\rv{JA}}(y,j,1)
= \frac{1}{p_\rv{A}(1)} \cdot \sum_{j=1}^{M} p_{\rv{Y}_j\rv{JA}}(y,j,1)
= \frac{1}{p_\rv{A}(1)} \cdot \sum_{j=1}^{M} p_{\rv{JA}|\rv{Y}_j}(j,1|y) \cdot q_\rv{Y}(y)\\
& = \frac{1}{1-p_\rv{A}(0)} \cdot \sum_{j=1}^{M} \sum_{y_1\ldots,y_{j\!-\!1}} \prod_{i=1}^{j-1} q_\rv{Y}(y_i) \cdot \prod_{i=1}^{j-1} \left(1-\lambda\cdot\frac{p_\rv{Y}(y_i)}{q_\rv{Y}(y_i)}\right) \cdot \lambda \cdot \frac{p_\rv{Y}(y)}{q_\rv{Y}(y)} \cdot q_\rv{Y}(y)\\
& = \frac{1}{1-\left(1-\lambda\right)^M} \cdot \sum_{j=1}^{M} \left(1-\lambda\right)^{j-1} \cdot \lambda \cdot p_\rv{Y}(y) = p_\rv{Y}(y)
\end{align}
where in~\eqref{eq:rejection:sampling:accept:y:prob}, we summed over the probability of ``acceptance'' at $j$-th attempt while $\rv{Y}_j=y$.
Thus, 
\begin{align}
\norm{p_\rv{Y}-\tilde{p}_\rv{Y}}_{\fnc{tvd}} &= \norm{p_\rv{Y}- (p_\rv{A}(1)\cdot p_\rv{Y}+p_\rv{A}(0)\cdot p'_\rv{Y})}_{\fnc{tvd}}
= \norm{p_\rv{A}(0)\cdot (p_\rv{Y}-p'_\rv{Y})}_{\fnc{tvd}} \leq p_\rv{A}(0) \leq \epsilon
\end{align}
where $p'_\rv{Y}$ is some pmf on $\set{Y}$.
\end{proof}

Lemma~\ref{lem:rejection:sampling} provides a quantitative measure of the trade-offs between the accuracy of the rejection sampling and the length of the random sequence to be sampled from.
Observe that this method only requires the knowledge of the target distribution, and its performance is guaranteed as long as the sampled sequence is long enough.
This enables the technique to be used for channel simulation.
In particular, upon receiving the input $x$, Alice (the sender) applies rejection sampling on a shared random sequence for the desired output distribution $W(\cdot|x)$, and \emph{send} the \emph{index} of the accepted instance to Bob (the receiver).
Bob finishes the sampling by reading out the indexed instance from the shared random sequence.
The details are listed in the following theorem and its proof.
\begin{theorem}[One-Shot Achievability Bound]\label{thm:one-shot:achievability}
Let $W_{\rv{Y}|\rv{X}}$ be a channel from $\set{X}$ to $\set{Y}$, and let $\epsilon\in(0,1)$.
It holds that 
\begin{equation}\label{eq:achievable:bound}
    \log{M_\epsilon^\star(W_{\rv{Y}|\rv{X}})} \leq I_{\max}^{\epsilon-\delta}(W_{\rv{Y}|\rv{X}}) + \log{\log{\frac{1}{\delta}}}.
\end{equation}
for all $\delta\in(0,\epsilon]$.
\end{theorem}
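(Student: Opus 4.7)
The plan is to give an explicit common randomness assisted simulation code built from rejection sampling, with size $M$ chosen to meet the stated bound. First, pick $\tilde{W}_{\rv{Y}|\rv{X}}$ nearly attaining $I_{\max}^{\epsilon-\delta}(W_{\rv{Y}|\rv{X}})$ (so that $\norm{\tilde{W}_{\rv{Y}|\rv{X}} - W_{\rv{Y}|\rv{X}}}_{\fnc{tvd}} \leq \epsilon - \delta$), and pick an auxiliary output distribution $q_\rv{Y} \in \set{P}(\set{Y})$ nearly attaining the infimum in~\eqref{eq:maxmutual} for $I_{\max}(\tilde{W}_{\rv{Y}|\rv{X}})$. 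This ensures that $D_{\max}\infdiv*{\tilde{W}_{\rv{Y}|\rv{X}}(\cdot|x)}{q_\rv{Y}} \leq I_{\max}(\tilde{W}_{\rv{Y}|\rv{X}})$ for every $x \in \set{X}$, with the right-hand side arbitrarily close to $I_{\max}^{\epsilon-\delta}(W_{\rv{Y}|\rv{X}})$.

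Next, construct the code: let the common randomness be $\rv{S} = (\rv{Y}_1,\ldots,\rv{Y}_M)$ with each $\rv{Y}_j \sim q_\rv{Y}$ drawn i.i.d. For each realization $s = (y_1,\ldots,y_M)$, the encoder $\mathcal{E}_s(\cdot|x)$ runs the rejection-sampling procedure of Lemma~\ref{lem:rejection:sampling} with target distribution $\tilde{W}_{\rv{Y}|\rv{X}}(\cdot|x)$ and proposal $q_\rv{Y}$, outputting the index of the (first) accepted sample (and a default index if none is accepted); the decoder $\mathcal{D}_s$ simply returns the corresponding sample $y_m$. Applied per input $x$, Lemma~\ref{lem:rejection:sampling} yields the uniform bound
\begin{equation*}
\norm{\tilde{W}_{\rv{Y}|\rv{X}}^{\fnc{sim}}(\cdot|x) - \tilde{W}_{\rv{Y}|\rv{X}}(\cdot|x)}_{\fnc{tvd}} \leq \bigl(1 - 2^{-D_{\max}\infdiv*{\tilde{W}_{\rv{Y}|\rv{X}}(\cdot|x)}{q_\rv{Y}}}\bigr)^M \leq \bigl(1 - 2^{-I_{\max}(\tilde{W}_{\rv{Y}|\rv{X}})}\bigr)^M .
\end{equation*}

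Finally, choose $M$ as the smallest integer making the right-hand side at most $\delta$. Using $1 - x \leq e^{-x}$ one sees that $M \approx \log(1/\delta) \cdot 2^{I_{\max}^{\epsilon-\delta}(W_{\rv{Y}|\rv{X}})}$ suffices, since $e^{-\log(1/\delta)} = \delta^{\log e} \leq \delta$ because $\log e > 1$; a triangle inequality with $\norm{\tilde{W}_{\rv{Y}|\rv{X}} - W_{\rv{Y}|\rv{X}}}_{\fnc{tvd}} \leq \epsilon - \delta$ then delivers a genuine $\epsilon$-simulation code. Taking logarithms gives the claimed bound $\log M_\epsilon^\star(W_{\rv{Y}|\rv{X}}) \leq I_{\max}^{\epsilon-\delta}(W_{\rv{Y}|\rv{X}}) + \log\log(1/\delta)$. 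The main bookkeeping obstacle I expect is absorbing the integer rounding of $M$ together with the approximate (rather than exact) attainment of the infima defining $I_{\max}^{\epsilon-\delta}$ and $I_{\max}$ into the stated bound without picking up an extra additive constant; this should follow from the slack $\log e > 1$ in the chain $e^{-\log(1/\delta)} = \delta^{\log e} \leq \delta$, combined with a standard limiting argument on the near-optimizers $\tilde{W}_{\rv{Y}|\rv{X}}$ and $q_\rv{Y}$.
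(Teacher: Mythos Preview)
Your proposal is correct and follows essentially the same approach as the paper: both construct the simulation code via rejection sampling using shared i.i.d.\ samples from a proposal $q_{\rv{Y}}$, invoke Lemma~\ref{lem:rejection:sampling} to bound the simulation error for the intermediate channel $\tilde{W}_{\rv{Y}|\rv{X}}$, and then apply the triangle inequality together with the elementary bound $1-x\leq e^{-x}$ (equivalently $\log(1-x)\leq -x$) to obtain the stated estimate. The only cosmetic difference is that you fix near-optimizers $\tilde{W}_{\rv{Y}|\rv{X}}$ and $q_{\rv{Y}}$ at the outset and take a limit, whereas the paper carries the infima through the inequality chain; you also explicitly flag the integer-rounding issue that the paper leaves implicit.
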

\begin{proof}
For arbitrary $q_\rv{Y}\in\set{P}(\set{Y})$ and channel $\tilde{W}_{\rv{Y}|\rv{X}}$ $(\epsilon-\delta)$-close to $W_{\rv{Y}|\rv{X}}$ in TVD, we describe the following protocol for simulating the channel $\tilde{W}_{\rv{Y}|\rv{X}}$ within a TVD tolerance of $\delta$ where $M$ is some positive integer to be determined later.
\begin{enumerate}
\item Let Alice and Bob share i.i.d. random variables $\rv{Y}_1$, \ldots, $Y_M$, each distributed according to $q_\rv{Y}$. (Here, $(\rv{Y}_1,\ldots\rv{Y}_n)$ serves as the common randomness $\rv{S}$ in the original setup.)
\item Upon knowing $x\in\set{X}$, Alice applies the accept-reject algorithm described in the previous lemma with parameters $M$, $p_\rv{Y}\gets \tilde{W}_{\rv{Y}|\rv{X}}(\cdot|x)$ and $q_\rv{Y}$, and sends the index $j$ to Bob without loss (using a code of size $M$).
\item Upon receiving $j$, Bob picks $\rv{Y}_j$ from the shared randomness and uses it as the output of the simulated channel.
\end{enumerate}
By Lemma~\ref{lem:rejection:sampling}, the channel created by the above protocol is $\delta$-close (in TVD) to $\tilde{W}_{\rv{Y}|\rv{X}}$ if
\begin{equation}
\left( 1-2^{-D_{\max}\infdiv*{\tilde{W}_{\rv{Y}|\rv{X}}(\cdot|x)}{q_\rv{Y}}}\right)^M \leq \delta
\end{equation}
for all $x\in\set{X}$.
In this case, by the triangular inequality of the TVD for conditional probabilities, the simulated channel is $\epsilon$-close (in TVD) to $W_{\rv{Y}|\rv{X}}$.
In other words, the $M$ chosen above is an upper bound of $M_\epsilon^\star(W_{\rv{Y}|\rv{X}})$.
Since $q_\rv{Y}\in\set{P}(\set{Y})$ and $\tilde{W}_{\rv{Y}|\rv{X}}$ (which is $(\epsilon-\delta)$-close  to $W_{\rv{Y}|\rv{X}}$) in the above protocol can be chosen arbitrarily, by optimizing these two terms, we have 
\begin{align}
M_\epsilon^\star(W_{\rv{Y}|\rv{X}}) &\leq
\multiadjustlimits{
   \inf_{q_\rv{Y}\in\set{P}(\set{Y})}, \inf_{\tilde{W}_{\rv{Y}|\rv{X}}\in\set{P}(\set{Y}|\set{X}):\norm{\tilde{W}_{\rv{Y}|\rv{X}}-W_{\rv{Y}|\rv{X}}}_{\fnc{tvd}}\leq \epsilon-\delta}, 
   \sup_{x\in\set{X}}
   } \frac{\log{\delta}}{\log\left( 1-2^{-D_{\max}\infdiv*{\tilde{W}_{\rv{Y}}|\rv{X}(\cdot|x)}{q_\rv{Y}}}\right)} \\
\label{eq:one-shot:achievability:0}
&=\log{\delta}\cdot \left\{\log{\left[1-2^{-\multiadjustlimits{ 
    \inf_{q_\rv{Y}\in\set{P}(\set{Y})}, 
    \inf_{\tilde{W}_{\rv{Y}|\rv{X}}\in\set{P}(\set{Y}|\set{X}): \norm{\tilde{W}_{\rv{Y}|\rv{X}}-W_{\rv{Y}|\rv{X}}}_{\fnc{tvd}}\leq \epsilon-\delta}, 
    \max_{x\in\set{X}}
    } D_{\max}\infdiv*{\tilde{W}_{\rv{Y}}|\rv{X}(\cdot|x)}{q_\rv{Y}}}\right]}\right\}^{-1} \\
\label{eq:one-shot:achievability:1}
&= \log{\delta}\cdot \left(\log{\left(1-2^{-I_{\max}^{\epsilon-\delta}(W_{\rv{Y}|\rv{X}})}\right)}\right)^{-1}
= \log{\frac{1}{\delta}}\cdot \left(-\log{\left(1-2^{-I_{\max}^{\epsilon-\delta}(W_{\rv{Y}|\rv{X}})}\right)}\right)^{-1}\\
\label{eq:one-shot:achievability:2}
&\leq \log{\frac{1}{\delta}}\cdot 2^{I_{\max}^{\epsilon-\delta}(W_{\rv{Y}|\rv{X}})}
\end{align}
where we used the inequality $\log{(1-x)}\leq -x$ when $x\in[0,1)$ for~\eqref{eq:one-shot:achievability:2}.
Finally,~\eqref{eq:achievable:bound} is obtained by taking the logarithm on both sides.
\end{proof}


\subsection{Converse Bound}\label{sec:converse:oneshot}
We exploit the Markov structure $\rv{X} - \rv{SM} - \rv{Y}$ for the converse bound.
Here, $\rv{S}$ is the shared randomness, and $\rv{M}$ is the transmitted message to simulate the channel.
On the technical side, we utilize (and provide a simpler proof of) the classical special case of the non-lockability of the smooth max-relative entropy~\cite{berta2013quantum} from quantum information theory.

Given a pair of discrete random variables $\rv{X}\rv{Y}$ with joint pmf $p_\rv{XY}\in\set{P}(\set{X}\times\set{Y})$, the max-mutual information of $\rv{X}$ \vs $\rv{Y}$ is defined as
\begin{equation}\label{eq:def:Imax}
I_{\max}(\rv{X};\rv{Y}) \defeq \inf_{q_\rv{Y}\in\set{P}(\set{Y})} D_{\max}\infdiv*{p_\rv{XY}}{p_\rv{X}\times q_\rv{Y}}
\end{equation}
where $p_\rv{X}$ is the marginal distribution of $\rv{X}$ induced from $p_\rv{XY}$.
\begin{lemma}[Classical Special Case of~{\cite[Cor.~A.14]{berta2013quantum}}]\label{lem:non-lockability}
    Let $(\rv{X},\rv{Y},\rv{Z})$ be joint random variables distributed on $\set{X}\times\set{Y}\times\set{Z}$.
    In particular, suppose that the set $\set{Z}$ is finite.
    Then
    \begin{equation}
        I_{\max}(\rv{X};\rv{YZ}) \leq I_{\max}(\rv{X};\rv{Y}) + \log{\size{\set{Z}}}.
    \end{equation}
\end{lemma}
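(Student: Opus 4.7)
The plan is to directly exhibit a good candidate distribution $q_{\rv{YZ}}$ in the infimum defining $I_{\max}(\rv{X}:\rv{YZ})$, built as a product of a near-optimizer for $I_{\max}(\rv{X}:\rv{Y})$ and the uniform distribution on $\set{Z}$.

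Concretely, first I would fix an arbitrary $q_\rv{Y} \in \set{P}(\set{Y})$ and define the candidate $q_{\rv{YZ}}(y,z) \defeq q_\rv{Y}(y) \cdot \frac{1}{\size{\set{Z}}}$ on $\set{Y}\times\set{Z}$. Here the uniform marginal on $\set{Z}$ uses the assumption that $\set{Z}$ is finite, which is exactly where the $\log\size{\set{Z}}$ overhead comes from. Then, by definition of $I_{\max}(\rv{X}:\rv{YZ})$ as an infimum over distributions on $\set{Y}\times\set{Z}$, we immediately get
\begin{equation}
I_{\max}(\rv{X}:\rv{YZ}) \leq D_{\max}\infdiv*{p_{\rv{XYZ}}}{p_\rv{X} \times q_{\rv{YZ}}}.
\end{equation}

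The second step is to unpack $D_{\max}$. Expanding the definition,
\begin{equation}
D_{\max}\infdiv*{p_{\rv{XYZ}}}{p_\rv{X} \times q_{\rv{YZ}}} = \sup_{x,y,z} \log \frac{p_{\rv{XYZ}}(x,y,z)}{p_\rv{X}(x)\, q_\rv{Y}(y) \, \size{\set{Z}}^{-1}} = \log\size{\set{Z}} + \sup_{x,y,z} \log\frac{p_{\rv{XYZ}}(x,y,z)}{p_\rv{X}(x)\, q_\rv{Y}(y)},
\end{equation}
and using the marginalization inequality $p_{\rv{XYZ}}(x,y,z) \leq p_{\rv{XY}}(x,y)$ (valid for any fixed $z$ in the support), the remaining supremum is upper bounded by $D_{\max}\infdiv*{p_{\rv{XY}}}{p_\rv{X} \times q_\rv{Y}}$.

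Finally, chaining the two bounds gives
\begin{equation}
I_{\max}(\rv{X}:\rv{YZ}) \leq \log\size{\set{Z}} + D_{\max}\infdiv*{p_{\rv{XY}}}{p_\rv{X} \times q_\rv{Y}},
\end{equation}
and since $q_\rv{Y}$ was arbitrary I take the infimum over $q_\rv{Y}\in\set{P}(\set{Y})$ on the right-hand side to obtain the claim. There is no real obstacle here beyond a small bookkeeping point: the above chain should be justified in the case where $p_{\rv{XY}} \not\ll p_\rv{X}\times q_\rv{Y}$ (the bound is vacuously true) and in the case of equality. The whole argument is essentially a one-line ``add a uniform factor'' trick made precise.
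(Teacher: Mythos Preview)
Your proof is correct and follows essentially the same route as the paper: choose the product of a (near-)optimizer $q_\rv{Y}$ with the uniform distribution on $\set{Z}$ as the candidate in the infimum defining $I_{\max}(\rv{X}:\rv{YZ})$, then use the marginalization bound $p_{\rv{XYZ}}(x,y,z)\leq p_{\rv{XY}}(x,y)$ (equivalently, $p_{\rv{Z}|\rv{XY}}(z|x,y)\leq 1$). The only cosmetic difference is that the paper fixes the optimizer $q_\rv{Y}^\star$ up front, whereas you work with arbitrary $q_\rv{Y}$ and take the infimum at the end; your version has the minor advantage of not relying on the infimum being attained.
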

This lemma is a special case of \cite[Cor.~A.14]{berta2013quantum}. 
In the original lemma, the first two random variables $\rv{X}$ and $\rv{Y}$ are instead some quantum systems.
Due to the classical nature of the statement we need, we are able to provide an alternative proof which is significantly simpler.
\begin{proof}
Let $q_\rv{Y}^\star$ denote the optimal distribution on $\set{Y}$ achieving the infimum in the definition of $I_{\max}(\rv{X};\rv{Y})$.
Then
\begin{align}
    I_{\max}(\rv{X};\rv{YZ}) &= \inf_{q_\rv{YZ}\in\set{P}(\set{Y}\times\set{Z})} D_{\max}\infdiv*{p_\rv{XYZ}}{p_\rv{X}\times q_\rv{YZ}}\\
    &\leq D_{\max}\infdiv*{p_\rv{XYZ}}{p_\rv{X}\times q_\rv{Y}^\star \times \frac{1}{\size{\set{Z}}}}\\
    &=\log{\left( \size{\set{Z}} \cdot \sup_{x,y,z} \frac{p_\rv{XY}(x,y)\cdot p_{\rv{Z}|\rv{XY}}(z|x,y)}{p_\rv{X}\cdot q_\rv{Y}^\star(y)} \right)}\\
    &\leq \log{\left( \sup_{x,y} \frac{p_\rv{XY}(x,y)}{p_\rv{X}\cdot q_\rv{Y}^\star(y)} \right)} + \log{\size{\set{Z}}}\\
    &= I_{\max}(\rv{X};\rv{Y}) + \log{\size{\set{Z}}}. \qedhere
\end{align}
\end{proof}
\begin{theorem}[One-Shot Converse Bound]\label{thm:one-shot:converse}
Let $W_{\rv{Y}|\rv{X}}$ be a channel from $\set{X}$ to $\set{Y}$, and let $\epsilon\in(0,1)$.
Then
\begin{equation}\label{eq:converse:bound}
\log{M^\star_\epsilon(W_{\rv{Y}|\rv{X}})} \geq I_{\max}^\epsilon(W_{\rv{Y}|\rv{X}})
\end{equation}
and, since $M^\star_\epsilon(W_{\rv{Y}|\rv{X}})$ is an integer, it must also satisfy $M^\star_\epsilon(W_{\rv{Y}|\rv{X}}) \geq \left\lceil 2^{I_{\max}^\epsilon(W_{\rv{Y}|\rv{X}})} \right\rceil$.
\end{theorem}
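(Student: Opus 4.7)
The plan is as follows. Fix an arbitrary size-$M$ $\epsilon$-simulation code $(\{\mathcal{E}_s\},\{\mathcal{D}_s\},\rv{S})$ for $W_{\rv{Y}|\rv{X}}$, and let $\tilde{W}_{\rv{Y}|\rv{X}}$ denote the induced channel. By assumption $\lVert \tilde{W}_{\rv{Y}|\rv{X}} - W_{\rv{Y}|\rv{X}}\rVert_\fnc{tvd}\leq\epsilon$, so by definition of the smoothed max-information we have $I_\fnc{max}^\epsilon(W_{\rv{Y}|\rv{X}})\leq I_\fnc{max}(\tilde{W}_{\rv{Y}|\rv{X}})$. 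Therefore it suffices to prove the channel-level bound $I_\fnc{max}(\tilde{W}_{\rv{Y}|\rv{X}})\leq \log M$, which by the variational characterization~\eqref{eq:maxmutual} is equivalent to showing $I_\fnc{max}(\rv{X}:\rv{Y})\leq \log M$ for every input distribution $p_\rv{X}\in\set{P}(\set{X})$, evaluated on the joint law $p_\rv{X}\cdot\tilde{W}_{\rv{Y}|\rv{X}}$.

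Fix such a $p_\rv{X}$ and form the natural joint distribution $p_{\rv{XSMY}}$ in which $\rv{X}\sim p_\rv{X}$ is drawn independently of $\rv{S}\sim p_\rv{S}$, then $\rv{M}$ is produced via $\mathcal{E}_\rv{S}(\cdot|\rv{X})$, and finally $\rv{Y}$ via $\mathcal{D}_\rv{S}(\cdot|\rv{M})$. Two structural facts are immediate from this construction: (i) $\rv{X}-(\rv{M},\rv{S})-\rv{Y}$ is Markov, because $\rv{Y}$ is a function of $(\rv{M},\rv{S})$ and fresh randomness in the decoder; and (ii) $\rv{X}$ and $\rv{S}$ are independent, so $p_{\rv{XS}}=p_\rv{X}\times p_\rv{S}$. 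Data processing for $D_\fnc{max}$ (applied to the Markov chain above) yields
\begin{equation}
I_\fnc{max}(\rv{X}:\rv{Y})\leq I_\fnc{max}(\rv{X}:\rv{MS}),
\end{equation}
since for any candidate $q_\rv{MS}$ one may push it through the stochastic map $\mathcal{D}_\rv{S}$ to obtain a feasible $q_\rv{Y}$.

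Next I invoke Lemma~\ref{lem:non-lockability} with the roles $\rv{Y}\leftarrow \rv{S}$ and $\rv{Z}\leftarrow \rv{M}$, exploiting that the message alphabet has size $M$ (while the shared-randomness alphabet may be arbitrarily large), to obtain
\begin{equation}
I_\fnc{max}(\rv{X}:\rv{MS})\leq I_\fnc{max}(\rv{X}:\rv{S}) + \log M.
\end{equation}
Because $\rv{X}\perp\rv{S}$, choosing $q_\rv{S}=p_\rv{S}$ in the definition~\eqref{eq:def:Imax} gives $I_\fnc{max}(\rv{X}:\rv{S})=0$. Chaining the inequalities yields $I_\fnc{max}(\rv{X}:\rv{Y})\leq \log M$ for every $p_\rv{X}$, hence $I_\fnc{max}(\tilde{W}_{\rv{Y}|\rv{X}})\leq \log M$ and the desired bound~\eqref{eq:converse:bound}. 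The integer refinement follows since $\log M^\star_\epsilon(W_{\rv{Y}|\rv{X}})$ is a logarithm of an integer at least $2^{I_\fnc{max}^\epsilon(W_{\rv{Y}|\rv{X}})}$, so $M^\star_\epsilon(W_{\rv{Y}|\rv{X}})\geq \lceil 2^{I_\fnc{max}^\epsilon(W_{\rv{Y}|\rv{X}})}\rceil$.

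The main conceptual obstacle is handling the shared randomness $\rv{S}$, whose alphabet is unrestricted and hence cannot be ``paid for'' by a $\log\size{\set{S}}$ term. The non-lockability Lemma~\ref{lem:non-lockability} is applied in precisely the direction that peels off the \emph{small} register $\rv{M}$ at a cost of $\log M$, leaving a residual $I_\fnc{max}(\rv{X}:\rv{S})$ that vanishes by the key structural property that the shared randomness is set up before the protocol and is therefore independent of the input. Everything else is either a direct application of the definition of $I_\fnc{max}^\epsilon$ or the standard data processing of $D_\fnc{max}$.
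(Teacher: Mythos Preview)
Your proof is correct and follows essentially the same approach as the paper: both use the independence $\rv{X}\perp\rv{S}$ to get $I_{\max}(\rv{X}:\rv{S})=0$, then Lemma~\ref{lem:non-lockability} to bound $I_{\max}(\rv{X}:\rv{MS})\leq\log M$, then data processing along the Markov chain $\rv{X}-\rv{MS}-\rv{Y}$, and finally relate the result to $I_{\max}^\epsilon(W_{\rv{Y}|\rv{X}})$ via the feasibility of the simulated channel $\tilde{W}_{\rv{Y}|\rv{X}}$. The only cosmetic difference is that the paper fixes a single full-support $p_\rv{X}$ (so that $I_{\max}(\rv{X}:\rv{Y})=\inf_{q_\rv{Y}}\max_x D_{\max}(\tilde{W}_{\rv{Y}|\rv{X}}(\cdot|x)\Vert q_\rv{Y})$ directly), whereas you quantify over all $p_\rv{X}$ and use $I_{\max}(\tilde{W}_{\rv{Y}|\rv{X}})=\sup_{p_\rv{X}}I_{\max}(\rv{X}:\rv{Y})$; these are equivalent.
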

\begin{proof}
Suppose there exists some size-$M$ $\epsilon$-simulation code for $W_{\rv{Y}|\rv{X}}$.
Let the random variable $\rv{S}$ denote the shared randomness, and $\rv{M}$ denote the codeword transmitted (see Fig.~\ref{fig:simulation:task:random}).
Then, for \emph{any} input source $\rv{X}\sim p_\rv{X}$, we have a Markov chain $\rv{X}-\rv{MS}-\rv{Y}$ where
\begin{itemize}
    \item The distribution of $\rv{XY}$, denoted by $\tilde{p}_\rv{XY}$, is $\epsilon$-close (in TVD) to $p_\rv{XY}\defeq p_\rv{X}\cdot W_{\rv{Y}|\rv{X}}$.
    \item The marginal distribution $\tilde{p}_\rv{X}=\sum_{y}\tilde{p}_\rv{XY}(\cdot,y)=p_\rv{X}$.
    \item $\rv{M}$ is distributed on $\{1,\ldots,M\}$.    
\end{itemize}
Choose $p_\rv{X}$ to be some pmf with full support.
We have the following series of inequalities.
\begin{enumerate}
    \item Using the fact that $\rv{X}$ and $\rv{S}$ are independent, and by Lemma~\ref{lem:non-lockability}, we have 
    \begin{equation}
    \log{M} = I_{\max}(\rv{X};\rv{S}) + \log{M} \geq I_{\max}(\rv{X};\rv{MS}).
    \end{equation}
    \item By the data-processing inequality of $I_{\max}$, we have 
    \begin{equation}
    I_{\max}(\rv{X};\rv{MS}) \geq I_{\max}(\rv{X};\rv{Y})
    \end{equation}
    \item By the definition of $I_{\max}$ and the fact that $\tilde{p}_\rv{X}=p_\rv{X}$, we have
    \begin{align}
    I_{\max}(\rv{X};\rv{Y}) &= \inf_{q_\rv{Y}} D_{\max}\infdiv*{\tilde{p}_\rv{XY}}{p_\rv{X}\times q_\rv{Y}}\\
    &= \adjustlimits\inf_{q_\rv{Y}} \max_{x} D_{\max}\infdiv*{\tilde{p}_{\rv{Y}|\rv{X}}(\cdot|x)}{q_\rv{Y}}\\
    \label{proof:one-shot:converse:1}
    &\geq \multiadjustlimits{
    \inf_{q_\rv{Y}}, 
    \inf_{\tilde{W}_{\rv{Y}|\rv{X}}\in\set{P}(\set{Y}|\set{X}): \norm{\tilde{W}_{\rv{Y}|\rv{X}}-W_{\rv{Y}|\rv{X}}}_\fnc{tvd}\leq\epsilon}, 
    \max_{x}
    } D_{\max}\infdiv*{\tilde{W}_{\rv{Y}|\rv{X}}(\cdot|x)}{q_\rv{Y}}\\
    \label{proof:one-shot:converse:2}
    &= I_{\max}^\epsilon(W_{\rv{Y}|\rv{X}})
    \end{align}
    where~\eqref{proof:one-shot:converse:1} holds since $\tilde{p}_{\rv{Y}|\rv{X}}$, \ie, the simulated channel as the result of the size-$M$ simulation code, is $\epsilon$-close to $W_{\rv{Y}|\rv{X}}$ in TVD, and~\eqref{proof:one-shot:converse:2} holds since
    \begin{equation}
        I_{\max}(W_{\rv{Y}|\rv{X}}) 
        =  \adjustlimits
        \inf_{q_\rv{Y}\in\set{P}(\set{Y})}
        \max_{x\in\set{X}}
        D_{\max}\infdiv*{{W}_{\rv{Y}|\rv{X}}(\cdot|x)}{q_\rv{Y}}.
    \end{equation}
\end{enumerate}

Combining the above, we have shown $\log{M}\geq I_{\max}^\epsilon(W_{\rv{Y}|\rv{X}})$ for any admissible $M$.
Since $M^\star_\epsilon(W_{\rv{Y}|\rv{X}})$ is the infimum of all such $M$'s, the inequality must also hold with $M^\star_\epsilon(W_{\rv{Y}|\rv{X}})$.
\end{proof}


\section{Asymptotic Analysis in Small-Deviation and Moderate-Deviation Regimes}
\label{sec:higher-order(p2p)}

In this section, we consider the problem of simulating $n$ copies of $W_{\rv{Y}|\rv{X}}$, \ie, $W_{\rv{Y}|\rv{X}}^{\tensor n}$, under both fixed TVD-tolerance $\epsilon\in(0,1)$ and sub-exponentially decaying TVD-tolerance
\begin{equation}
    \epsilon_n = 2^{-n a_n^2},
\end{equation}
where $\{a_n\}_{n\in\mathbb{N}}$ is some moderate sequence, \ie, a sequence of positive numbers such that $a_n \to 0$ but $na_n^2\to+\infty$ as $n \to \infty$.
The former is known as the small-deviation or fixed-error regime, whereas the latter is known as the moderate-deviation regime.
We are interested in the asymptotic expansion of $\log{M_{\epsilon_n}^\star(W_{\rv{Y}|\rv{X}}^{\tensor n})}$ (as a function of $n$) in both cases.

\subsection{Small-Deviation / Fixed-Error Regime}

For the small-deviation analysis, we exploit the relationship between the minimal channel simulation cost and the maximal channel coding size (under the average error).
This is possible with the help of several inequalities among the smoothed max divergence, the hypothesis testing divergence, and the information spectrum divergence.
Intuitively, such a connection should exist from an operational point of view.
Despite that one would also end up with similar results by retracing the small-deviation analysis for channel coding, \eg,~\cite{tomamichel2013tight}, we chose to directly explore the connection between the channel coding and channel simulation, as the latter is less tedious yet more insightful.
\begin{theorem}\label{thm:cs:cc}
Let $W_{\rv{Y}|\rv{X}}$ be a channel and $\epsilon\in(0,1)$.
For each $n \in \mathbb{N}$.
Let $N_\epsilon^\star(W_{\rv{Y}|\rv{X}})$ denote the largest integer $N$ such that there exists a size-$N$ code with transmission error probability (over $W_{\rv{Y}|\rv{X}}$) at most $\epsilon$ given equiprobable codewords.
It holds that
\begin{equation}\label{eq:cs:cc}
\log{N^\star_{1-\epsilon-\delta}(W_{\rv{Y}|\rv{X}})} + \log{\delta} \leq \log{M_\epsilon^\star(W_{\rv{Y}|\rv{X}})} \leq \log{N^\star_{1-\epsilon+\delta}(W_{\rv{Y}|\rv{X}})} + \log{\frac{2}{\delta}} + \log{\log{\frac{4}{\delta^2}}}
\end{equation}
for any $\delta\in(0,\min\{\epsilon,1-\epsilon\})$.
\end{theorem}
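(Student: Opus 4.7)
I would prove the two sides of \eqref{eq:cs:cc} using rather different techniques.

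For the lower bound, the plan is a direct constructive composition. Fix an optimal $\epsilon$-simulation code of size $M$ together with an optimal size-$N$ channel code for $W_{\rv{Y}|\rv{X}}$ with equiprobable codewords and average error at most $1-\epsilon-\delta$, where $N\defeq N_{1-\epsilon-\delta}^\star(W_{\rv{Y}|\rv{X}})$. Cascade them: feed each codeword $f(i)$ into the simulator (which induces a channel $\tilde W_{\rv{Y}|\rv{X}}$ with $\norm{\tilde W_{\rv{Y}|\rv{X}}-W_{\rv{Y}|\rv{X}}}_\fnc{tvd}\leq\epsilon$) and pass its output to the coding decoder $g$. Replacing $W_{\rv{Y}|\rv{X}}$ by $\tilde W_{\rv{Y}|\rv{X}}$ perturbs any decoding event by at most the channel TVD, so the composed scheme's average success probability is at least $1-((1-\epsilon-\delta)+\epsilon)=\delta$. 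But this composed scheme is equivalent to a common-randomness-assisted code of size $N$ over the noiseless $M$-ary channel (its transmission step carries a single element of $\{1,\ldots,M\}$), and such a code's success probability cannot exceed $M/N$ when $N>M$: any deterministic map $[N]\to[M]$ is at most $M$-to-one, and common randomness merely convexifies such strategies. Rearranging $\delta\leq M/N$ yields $\log M\geq\log N+\log\delta$.

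For the upper bound I would combine Theorem~\ref{thm:one-shot:achievability} with a one-shot conversion from the smooth channel max-information to a hypothesis-testing coding quantity. Setting the slack parameter of Theorem~\ref{thm:one-shot:achievability} to $\delta/2$ gives $\log M_\epsilon^\star(W_{\rv{Y}|\rv{X}})\leq I_{\max}^{\epsilon-\delta/2}(W_{\rv{Y}|\rv{X}})+\log\log(2/\delta)$, so it suffices to bound $I_{\max}^{\epsilon-\delta/2}(W_{\rv{Y}|\rv{X}})\leq\log N_{1-\epsilon+\delta}^\star(W_{\rv{Y}|\rv{X}})+\log(2/\delta)+O(1)$. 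I plan to do this in two sub-steps. First, establish the one-shot inequality
\[
\inf_{\tilde p\,:\,\|\tilde p-p\|_\fnc{tvd}\leq\epsilon'}D_{\max}\infdiv*{\tilde p}{q}\;\leq\;-\log\beta^\star_{1-\epsilon'}(p,q)+\log\frac{\epsilon'}{1-\epsilon'}
\]
for $p\ll q$, proved by taking $\tilde p$ to be $p$ restricted to the complement of the optimal Neyman--Pearson rejection region for $\beta^\star_{1-\epsilon'}(p,q)$ and renormalizing; the standard likelihood-ratio bounds on the Neyman--Pearson threshold deliver the $D_{\max}$ bound, while the TVD equals the $p$-mass of the excluded set, which is $\epsilon'$ by construction. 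Since by exchanging the two inner infima in the definition one has $I_{\max}^\epsilon(W_{\rv{Y}|\rv{X}})=\inf_{q_\rv{Y}}\sup_x\inf_{\tilde p_x\,:\,\|\tilde p_x-W_{\rv{Y}|\rv{X}}(\cdot|x)\|_\fnc{tvd}\leq\epsilon}D_{\max}\infdiv*{\tilde p_x}{q_\rv{Y}}$, applying the previous inequality pointwise in $x$ yields $I_{\max}^{\epsilon-\delta/2}(W_{\rv{Y}|\rv{X}})\leq\inf_{q_\rv{Y}}\sup_x[-\log\beta^\star_{1-\epsilon+\delta/2}(W_{\rv{Y}|\rv{X}}(\cdot|x),q_\rv{Y})]+O(1)$. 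Second, identify this min-max $\beta^\star$ expression with $\log N_{1-\epsilon+\delta}^\star(W_{\rv{Y}|\rv{X}})+\log(2/\delta)$, using minimax duality for the $\beta^\star$ linear program to exchange $\inf_{q_\rv{Y}}\sup_{p_\rv{X}}$ with $\sup_{p_\rv{X}}\inf_{q_\rv{Y}}$, combined with a one-shot channel-coding achievability bound such as dependence-testing to absorb the additional $\delta/2$ of error slack.

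The principal obstacle is this last identification. The Polyanskiy--Poor--Verd\'u meta-converse gives $\log N_\eta^\star\leq\sup_{p_\rv{X}}\inf_{q_\rv{Y}}[-\log\beta^\star_\eta]$, which is the wrong direction; a matching one-shot achievability is needed to close the gap with only $\log(2/\delta)$ slack. The dependence-testing bound supplies $\log N_\eta^\star\geq\sup_{p_\rv{X}}\inf_{q_\rv{Y}}[-\log\beta^\star_{\eta-\tau}]-\log(1/\tau)$ for any $\tau\in(0,\eta)$, and taking $\tau=\delta/2$ converts back to the stated $\log N_{1-\epsilon+\delta}^\star$; the argument also requires care with the $\sup_x$-versus-$\sup_{p_\rv{X}}$ exchange, justified by the linear-programming minimax valid for $\beta^\star$.
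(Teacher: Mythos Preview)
Your lower bound argument is correct and takes a genuinely different route from the paper. The paper proceeds entirely through information measures: it applies the one-shot converse $\log M_\epsilon^\star \geq I_{\max}^\epsilon(W)$, relaxes the smoothing set, invokes the pointwise bound $D_{\max}^\epsilon(p\|q) \geq -\log\beta^\star_{1-\epsilon-\delta}(p\|q) + \log\delta$ (Lemma~\ref{lem:Dmax:Dh}), and finishes with the Polyanskiy--Poor--Verd\'u meta-converse. Your composition argument is more elementary and purely operational: it avoids $I_{\max}^\epsilon$ altogether and uses only that (i) replacing $W$ by an $\epsilon$-close $\tilde W$ shifts any decoding probability by at most $\epsilon$, and (ii) the average success probability of any common-randomness code over an $M$-ary noiseless link with $N$ equiprobable messages is at most $M/N$. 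This yields the first inequality without any smoothing calculus.

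For the upper bound your plan coincides with the paper's at the endpoints---both start from Theorem~\ref{thm:one-shot:achievability} and both finish via the $\beta^\star$ saddle-point property \cite{polyanskiy2012saddle} plus a one-shot coding achievability bound---but the middle step differs. The paper bounds $I_{\max}^{\epsilon-\delta_1}(W)$ by $\sup_{p_\rv{X}} D_{s+}^{\epsilon-\delta_1}(p_\rv{X}W\|p_\rv{X}\times q_\rv{Y})+1$ via a construction that \emph{adds back} $\epsilon_x\cdot q_\rv{Y}$ mass rather than renormalising (Lemma~\ref{lem:supDs:maxDmax}), and only then passes to $\beta^\star$ via \cite[Lemma~12]{tomamichel2013hierarchy}. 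Your direct truncate-and-renormalise inequality $D_{\max}^{\epsilon'}(p\|q)\leq -\log\beta^\star_{1-\epsilon'}(p,q)+\log\tfrac{\epsilon'}{1-\epsilon'}$ is valid provided $\beta^\star$ is defined with \emph{randomised} tests (so that the Neyman--Pearson region can be taken with $p$-mass exactly $\epsilon'$); with the deterministic-set definition used in this paper it fails, e.g.\ for $p=(1,0)$, $q=(\tfrac12,\tfrac12)$, $\epsilon'=0.1$. Since the coding bounds you invoke downstream use randomised $\beta^\star$ anyway, this is harmless but should be flagged. The remaining steps (point-mass reduction $\sup_x\to\sup_{p_\rv{X}}$, saddle-point swap, and the $\delta/2$ achievability slack) line up with the paper's \eqref{eq:cs:cc:7}--\eqref{eq:cs:cc:10}; what you call the ``principal obstacle'' is in fact routine.
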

We need the following two lemmas for the proof of Theorem~\ref{thm:cs:cc}.
\begin{lemma}\label{lem:Dmax:Dh}
Let $p$ and $q$ be two pmf on the same alphabet.
We have for all $\epsilon\in(0,1)$, $\delta\in(0,1-\epsilon)$ 
\begin{equation}\label{eq:Dmax:Dh}
    D_{\max}^\epsilon\infdiv*{p}{q} \geq \log{\frac{1}{\beta_{1-\epsilon-\delta}\infdiv*{p}{q}}} + \log{\delta}.
\end{equation}
\end{lemma}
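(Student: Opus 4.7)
The plan is to prove the inequality by a direct test-set argument, exploiting the variational definition of $D_{\max}^\epsilon$ and the TVD constraint to convert a near-optimal smoothed distribution into a good test for the hypothesis testing problem that defines $\beta$.

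First I would dispose of the trivial case: if $D_{\max}^\epsilon(p\|q) = +\infty$ there is nothing to prove, so I assume it is finite and, since the underlying alphabet is finite, pick a distribution $\tilde{p}$ attaining the infimum, i.e.\ $\norm{\tilde{p} - p}_\fnc{tvd} \leq \epsilon$ and $\tilde{p}(x) \leq 2^{D_{\max}^\epsilon(p\|q)} q(x)$ for all $x$. (If one prefers to avoid attainability, the same argument goes through with an $\eta$-near-optimizer and $\eta \to 0$.)

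Next I would take any measurable set $\set{A}$ with $p(\set{A}) \geq \epsilon + \delta$, which is a feasible constraint because $\epsilon + \delta < 1$. The TVD bound gives $\tilde{p}(\set{A}) \geq p(\set{A}) - \epsilon \geq \delta$, and the max-divergence bound applied to the set $\set{A}$ (by summing the pointwise inequality) yields $\tilde{p}(\set{A}) \leq 2^{D_{\max}^\epsilon(p\|q)} q(\set{A})$. Chaining the two,
\begin{equation}
q(\set{A}) \;\geq\; \delta \cdot 2^{-D_{\max}^\epsilon(p\|q)}.
\end{equation}
Taking the infimum over all such $\set{A}$ gives $\beta_{1-\epsilon-\delta}(p\|q) \geq \delta \cdot 2^{-D_{\max}^\epsilon(p\|q)}$, and rearranging after taking logarithms delivers~\eqref{eq:Dmax:Dh}.

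I do not expect a serious obstacle here: the argument is a one-line TVD-to-testing conversion and the only mildly delicate point is ensuring the optimizer in $D_{\max}^\epsilon$ exists (handled by finiteness of the alphabet, or by a standard limiting argument on an $\eta$-near-optimizer). The heart of the bound is the elementary observation that shifting $p$ by at most $\epsilon$ in TVD cannot reduce the mass on a set of $p$-measure $\geq \epsilon+\delta$ below $\delta$, which then couples with the pointwise dominance $\tilde p \leq 2^{D_{\max}^\epsilon(p\|q)} q$ to lower bound $q(\set{A})$.
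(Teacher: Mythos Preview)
Your proof is correct and follows essentially the same route as the paper: pick a (near-)optimizer $\tilde p$ for $D_{\max}^\epsilon$, use the TVD bound to get $\tilde p(\set{A})\geq\delta$ for any feasible test set $\set{A}$, and combine with the pointwise dominance $\tilde p\leq 2^{D_{\max}^\epsilon(p\|q)}q$ to lower bound $q(\set{A})$. The only cosmetic difference is that the paper works directly with the optimizing set for $\beta_{1-\epsilon-\delta}$ rather than taking the infimum at the end.
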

\begin{proof}
Suppose $p$ and $q$ are pmfs on $\set{X}$.
Let $\tilde{p}\in\set{P}(\set{X})$ be an optimizer for $D_{\max}^\epsilon\infdiv*{p}{q}$, \ie, $\norm{\tilde{p}-p}_\fnc{tvd}\leq\epsilon$, and $\tilde{p}(x)\leq 2^{D_{\max}^\epsilon\infdiv*{p}{q}} \cdot q(x)$ for all $x\in\set{X}$.
Let $\set{A}\in\set{X}$ be an optimizing subset for $\beta_{1-\epsilon-\delta}\infdiv*{p}{q}$, \ie, $p(\set{A})\geq \epsilon+\delta$ and $q(\set{A}) = \beta_{1-\epsilon-\delta}\infdiv*{p}{q}$.
Combining these two facts, we have
\begin{equation}
\epsilon \geq \norm{\tilde{p}-p}_\fnc{tvd} \geq p(\set{A})-\tilde{p}(\set{A}) \geq \epsilon +\delta - 2^{D_{\max}^\epsilon\infdiv*{p}{q}} \cdot q(\set{A}) = \epsilon +\delta - 2^{D_{\max}^\epsilon\infdiv*{p}{q}} \cdot \beta_{1-\epsilon-\delta}\infdiv*{p}{q}.
\end{equation}
Rearranging the terms, we get
\begin{equation}
2^{D_{\max}^\epsilon\infdiv*{p}{q}} \cdot \beta_{1-\epsilon-\delta}\infdiv*{p}{q} \geq \delta.
\end{equation}
Eq.~\eqref{eq:Dmax:Dh} can be proven by taking the logarithm of the above inequality on both sides and shuffle the term $\log{\beta_{1-\epsilon-\delta}\infdiv*{p}{q}}$ to the other side.
\end{proof}
\begin{lemma}\label{lem:supDs:maxDmax}
Let $W_{\rv{Y}|\rv{X}}$ be a channel from $\set{X}$ to $\set{Y}$ where both $\set{X}$ and $\set{Y}$ are finite sets.
It holds for all $\epsilon\in(0,1)$ that
\begin{equation}\label{eq:supDs:maxDmax:classical}
\sup_{p_\rv{X}\in\set{P}(\set{X})} D_{s+}^{\epsilon}\infdiv*{p_\rv{X}\cdot W_{\rv{Y}|\rv{X}}}{p_\rv{X}\times q_\rv{Y}} + 1 \geq \adjustlimits \inf_{\tilde{W}_{\rv{Y}|\rv{X}}\in\set{P}(\set{Y}|\set{X}):\norm{\tilde{W}_{\rv{Y}|\rv{X}}-W_{\rv{Y}|\rv{X}}}_\fnc{tvd}\leq\epsilon} \max_{x\in\set{X}} D_{\max}\infdiv*{\tilde{W}_{\rv{Y}|\rv{X}}(\cdot|x)}{q_\rv{Y}}
\end{equation}
for any $q_\rv{Y}\in\set{P}(\set{Y})$.
\end{lemma}
The following proof is partially inspired by some of the arguments in the proof of~\cite[Theorem~9]{anshu2020partially}.
\begin{proof}
If the LHS $=\infty$, the inequality holds trivially.
Assuming otherwise, let $a\defeq\sup_{p_\rv{X}\in\set{P}(\set{X})} D_{s+}^{\epsilon}\infdiv*{p_\rv{X}\cdot W_{\rv{Y}|\rv{X}}}{p_\rv{X}\times q_\rv{Y}}$.
By the definition of $D_{s+}^\epsilon$ and the right semi-continuity of $a\mapsto \Pr_{\rv{XY}\sim p_\rv{X}\cdot W_{\rv{Y}|\rv{X}}} [\log{\frac{W_{\rv{Y}|\rv{X}}(\rv{Y}|\rv{X})}{q_\rv{Y}(\rv{Y})}}>a]$, we have, for all $p_\rv{X}\in\set{P}(\set{X})$, 
\begin{equation}
\epsilon
\geq \Pr\nolimits_{\rv{XY}\sim p_\rv{X}\cdot W_{\rv{Y}|\rv{X}}} \left[\log{\frac{W_{\rv{Y}|\rv{X}}(\rv{Y}|\rv{X})}{q_\rv{Y}(\rv{Y})}}>a\right] 
= \sum_{x\in\set{X}} p_\rv{X}(x) \cdot \underbrace{\sum_{y\in\set{Y}} W_{\rv{Y}|\rv{X}}(y|x) \cdot \mathbbm{1}\left\{\log{\frac{W_{\rv{Y}|\rv{X}}(y|x)}{q_\rv{Y}(y)}}>a\right\}}_{\defas\epsilon_x},
\end{equation}
which forces $\epsilon_x\leq \epsilon$ for all $x\in\set{X}$.
For each $x\in\set{X}$, let $\set{A}_x$ denote the following subset of $\set{Y}$
\begin{equation}
\set{A}_x \defeq \left\{y\in\set{Y}: \log{\frac{W_{\rv{Y}|\rv{X}}(y|x)}{q_\rv{Y}(y)}}\leq a\right\}.
\end{equation}
We construct a channel $\hat{W}_{\rv{Y}|\rv{X}}$ as
\begin{equation}
\hat{W}_{\rv{Y}|\rv{X}}(y|x) \defeq \begin{cases}
    W_{\rv{Y}|\rv{X}}(y|x) + \epsilon_x \cdot q_\rv{Y}(y) & \text{if } y\in\set{A}_x, \\
    \epsilon_x \cdot q_\rv{Y}(y) & \text{otherwise}.
\end{cases}
\end{equation}
The construction ensures $\hat{W}_{\rv{Y}|\rv{X}}(\cdot|x) \leq (2^a+\epsilon_x)\cdot q_\rv{Y} \leq (2^a+\epsilon)\cdot q_\rv{Y}$ for all $x$, and thus we have
\begin{equation}\label{eq:supDs:maxDmax:classical:1}
\max_{x\in\set{X}} D_{\max}\infdiv*{\hat{W}_{\rv{Y}|\rv{X}}(\cdot|x)}{q_\rv{Y}} \leq \log\left(2^a + \epsilon\right) \leq a+1 = \sup_{p_\rv{X}\in\set{P}(\set{X})} D_{s+}^{\epsilon}\infdiv*{p_\rv{X}\cdot W_{\rv{Y}|\rv{X}}}{p_\rv{X}\times q_\rv{Y}} + 1.
\end{equation}
On the other hand we can show $\norm{\hat{W}_{\rv{Y}|\rv{X}} - W_{\rv{Y}|\rv{X}}}_{\fnc{tvd}}\leq \epsilon$ as 
\begin{align}
\max_{x\in\set{X}} \norm{\hat{W}_{\rv{Y}|\rv{X}}(\cdot|x) - W_{\rv{Y}|\rv{X}}(\cdot|x)}_{\fnc{tvd}} 
&= \max_{x\in\set{X}} \frac{1}{2}\sum_{y\in\set{A}_x^C} \abs{\epsilon_x q_\rv{Y}(y) - W_{\rv{Y}|\rv{X}}(y|x)} + \sum_{y\in\set{A}_x} \epsilon_x q_\rv{Y}(y)\\
&\leq \max_{x\in\set{X}} \frac{1}{2}\left( \epsilon_x + \sum_{y\in\set{Y}} W_{\rv{Y}|\rv{X}}(y|x) \cdot \mathbbm{1}\left\{\log{\frac{W_{\rv{Y}|\rv{X}}(y|x)}{q_\rv{Y}(y)}}>a\right\} \right) \\
&= \max_{x\in\set{X}} \epsilon_x \leq \epsilon.
\end{align}
Therefore, 
\begin{equation}\label{eq:supDs:maxDmax:classical:2}
\max_{x\in\set{X}} D_{\max}\infdiv*{\hat{W}_{\rv{Y}|\rv{X}}(\cdot|x)}{q_\rv{Y}} \geq \inf_{\tilde{W}_{\rv{Y}|\rv{X}}:\norm{\tilde{W}_{\rv{Y}|\rv{X}}-W_{\rv{Y}|\rv{X}}}_\fnc{tvd}} \max_{x\in\set{X}} D_{\max}\infdiv*{\tilde{W}_{\rv{Y}|\rv{X}}(\cdot|x)}{q_\rv{Y}}
\end{equation}
Finally, \eqref{eq:supDs:maxDmax:classical} is proven by combining~\eqref{eq:supDs:maxDmax:classical:1} and~\eqref{eq:supDs:maxDmax:classical:2}.
\end{proof}
\begin{proof}[Proof of Theorem~\ref{thm:cs:cc}]
To prove the first inequality, we have
\begin{align}
\log{M_\epsilon^\star(W_{\rv{Y}|\rv{X}})}
&\geq I_{\max}^\epsilon(W_{\rv{Y}|\rv{X}})
= \multiadjustlimits{ 
    \inf_{\tilde{W}_{\rv{Y}|\rv{X}}: \norm{\tilde{W}_{\rv{Y}|\rv{X}}-W_{\rv{Y}|\rv{X}}}_\fnc{tvd}\leq\epsilon},
    \sup_{p_\rv{X}},
    \inf_{q_\rv{Y}}} D_{\max}\infdiv*{p_\rv{X}\cdot \tilde{W}_{\rv{Y}|\rv{X}}}{p_\rv{X} \times q_\rv{Y}} \\
\label{eq:cs:cc:1}
&\geq \multiadjustlimits{
    \sup_{p_\rv{X}\in\set{P}(\set{X})},
    \inf_{q_\rv{Y}\in\set{P}(\set{Y})},
    \inf_{\tilde{W}_{\rv{Y}|\rv{X}}: \norm{p_\rv{X} \cdot \tilde{W}_{\rv{Y}|\rv{X}}-p_\rv{X} \cdot W_{\rv{Y}|\rv{X}}}_\fnc{tvd}\leq\epsilon}} D_{\max}\infdiv*{p_\rv{X}\cdot \tilde{W}_{\rv{Y}|\rv{X}}}{p_\rv{X} \times q_\rv{Y}} \\
\label{eq:cs:cc:2}
&\geq \multiadjustlimits{
    \sup_{p_\rv{X}\in\set{P}(\set{X})},
    \inf_{q_\rv{Y}\in\set{P}(\set{Y})},
    \inf_{\tilde{p}_{\rv{XY}}\in\set{P}(\set{X}\times\set{Y}): \norm{\tilde{p}_\rv{XY} - p_\rv{X}\cdot W_{\rv{Y}|\rv{X}}}_\fnc{tvd}\leq\epsilon}} D_{\max}\infdiv*{\tilde{p}_\rv{XY}}{p_\rv{X} \times q_\rv{Y}} \\
\label{eq:cs:cc:3}
&\geq \adjustlimits\sup_{p_\rv{X}\in\set{P}(\set{X})} \inf_{q_\rv{Y}\in\set{P}(\set{Y})} \log{\frac{1}{\beta_{1-\epsilon-\delta}^\star\infdiv*{p_\rv{X}\cdot W_{\rv{Y}|\rv{X}}}{p_\rv{X}\cdot q_\rv{Y}}}} + \log{\delta} && \forall \delta\in(0,1-\epsilon)\\
\label{eq:cs:cc:4}
&\geq \log{N_{1-\epsilon-\delta}(W_{\rv{Y}|\rv{X}})} + \log{\delta} 
\end{align}
where, in~\eqref{eq:cs:cc:1}, we swap $\inf_{\tilde{W}_{\rv{Y}|\rv{X}}}$ inside and relax the domain for $\tilde{W}_{\rv{Y}|\rv{X}}$; 
in~\eqref{eq:cs:cc:2}, we relax the domain of the infimum;
in~\eqref{eq:cs:cc:3}, we use Lemma~\ref{lem:Dmax:Dh}; 
and in~\eqref{eq:cs:cc:4}, we use Theorem~27 from~\cite{polyanskiy2010channel}.

To prove the second inequality, we have
\begin{align}
\log{M_\epsilon^\star(W_{\rv{Y}|\rv{X}})}
&\leq I_{\max}^{\epsilon-\delta_1}(W_{\rv{Y}|\rv{X}}) + \log{\log{\frac{1}{\delta_1}}} && \forall \delta_1\in(0,\epsilon_1)\\
&= \multiadjustlimits{
    \inf_{q_\rv{Y}\in\set{P}(\set{Y})},
    \inf_{\tilde{W}_{\rv{Y}|\rv{X}}: \norm{\tilde{W}_{\rv{Y}|\rv{X}}-W_{\rv{Y}|\rv{X}}}_\fnc{tvd}\leq\epsilon-\delta_1},
    \max_{x\in\set{X}}} D_{\max}\infdiv*{\tilde{W}_{\rv{Y}|\rv{X}}(\cdot|x)}{q_\rv{Y}} + \log{\log{\frac{1}{\delta_1}}} \\
\label{eq:cs:cc:5}
&\leq \multiadjustlimits{
    \inf_{q_\rv{Y}\in\set{P}(\set{Y})},
    \sup_{p_\rv{X}\in\set{P}(\set{X})}} D_{s+}^{\epsilon-\delta_1}\infdiv*{p_\rv{X}\cdot W_{\rv{Y}|\rv{X}}}{p_\rv{X} \times q_\rv{Y}} + \log{\log{\frac{1}{\delta_1^2}}} \\
\label{eq:cs:cc:6}
&\leq \multiadjustlimits{
    \inf_{q_\rv{Y}\in\set{P}(\set{Y})},
    \sup_{p_\rv{X}\in\set{P}(\set{X})}} \log{\frac{1}{\beta_{1-\epsilon+\delta_1}^\star\infdiv*{p_\rv{X}\cdot W_{\rv{Y}|\rv{X}}}{p_\rv{X} \times q_\rv{Y}}}} + \log{\log{\frac{1}{\delta_1^2}}} \\
\label{eq:cs:cc:7}
&= \adjustlimits \sup_{p_\rv{X}\in\set{P}(\set{X})} \inf_{q_\rv{Y}\in\set{P}(\set{Y})} \log{\frac{1}{\beta_{1-\epsilon+\delta_1}^\star\infdiv*{p_\rv{X}\cdot W_{\rv{Y}|\rv{X}}}{p_\rv{X} \times q_\rv{Y}}}} + \log{\log{\frac{1}{\delta_1^2}}} \\
\label{eq:cs:cc:8}
&\leq \sup_{p_\rv{X}\in\set{P}(\set{X})} \log{\frac{1}{\beta_{1-\epsilon+\delta_1}^\star\infdiv*{p_\rv{X}\cdot W_{\rv{Y}|\rv{X}}}{p_\rv{X} \times p_\rv{Y}}}} + \log{\log{\frac{1}{\delta_1^2}}} \\
\label{eq:cs:cc:9}
&\leq \log{N^\star_{1-\epsilon+\delta_1+\delta_2}}(W_{\rv{Y}|\rv{X}}) + \log{\frac{1}{\delta_2}} + \log{\log{\frac{1}{\delta_1^2}}} && \forall \delta_2\in(0,\epsilon-\delta_1) \\
\label{eq:cs:cc:10}
&= \log{N^\star_{1-\epsilon+\delta}}(W_{\rv{Y}|\rv{X}}) + \log{\frac{2}{\delta}} + \log{\log{\frac{4}{\delta^2}}} && \forall \delta\in(0,\epsilon)
\end{align}
where, in~\eqref{eq:cs:cc:5}, we use Lemma~\ref{lem:supDs:maxDmax};
in~\eqref{eq:cs:cc:6}, we use Lemma~12 from~\cite{tomamichel2013hierarchy};
in~\eqref{eq:cs:cc:7}, we use the saddle point property of $\beta_\epsilon^\star\infdiv{p_\rv{X}\cdot W_{\rv{Y}|\rv{X}}}{p_\rv{X}\times q_\rv{Y}}$ (see~\cite{polyanskiy2012saddle});
in~\eqref{eq:cs:cc:9}, we use the following achievability bound~\cite[Eq.~(127)]{polyanskiy2010channel}
\begin{equation}
N_{\epsilon+\delta}^\star(W_{\rv{Y}|\rv{X}}) \geq  \sup_{p_\rv{X}\in\set{P}(\set{X})}  \frac{\delta}{\beta^\star_\epsilon(p_\rv{X}\cdot W_{\rv{Y}|\rv{X}},p_\rv{X}\times p_\rv{Y})} \quad\forall \delta\in(0,1-\epsilon);
\end{equation}
and in~\eqref{eq:cs:cc:10}, we substitute $\delta_1=\delta_2\gets\delta/2$.
\end{proof}

Theorem~{\ref{thm:cs:cc}} provides a direct relationship between the minimal cost $M_\epsilon^\star(W_{\rv{Y}|\rv{X}})$ of the $\epsilon$-tolerance channel simulation and the maximal code size $N_{1-\epsilon}^\star(W_{\rv{Y}|\rv{X}})$ for $(1-\epsilon)$-error channel coding.
Note that the asymptotic second-order expansion for channel coding has been well studied \cite{strassen1962asymptotische, polyanskiy2010channel, hayashi2009information}, \ie, 
\begin{equation}\label{eq:cc:expand}
\log{N_\epsilon^\star(W_{\rv{Y}|\rv{X}}^{\tensor n})} = n\cdot C(W_{\rv{Y}|\rv{X}}) + \sqrt{n V_\epsilon(W_{\rv{Y}|\rv{X}})}\cdot \Phi^{-1}(\epsilon) + O(\log{n}).
\end{equation}
Combing the above expansion with Theorem~\ref{thm:cs:cc}, we have the following corollary.
\begin{corollary}\label{cor:cs:expand}
Let $W_{\rv{Y}|\rv{X}}$ be a channel and $\epsilon\in(0,1)$.
For each $n \in \mathbb{N}$, it holds that
\begin{equation}\label{eq:cs:expand}
\log{M_\epsilon^\star(W_{\rv{Y}|\rv{X}}^{\tensor n})} = n\cdot C(W_{\rv{Y}|\rv{X}}) + \sqrt{n V_{1-\epsilon}(W_{\rv{Y}|\rv{X}})}\cdot \Phi^{-1}(1-\epsilon) + O(\log{n}) .
\end{equation}
\end{corollary}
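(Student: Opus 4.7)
The plan is to apply Theorem~\ref{thm:cs:cc} tensorized to $W_{\rv{Y}|\rv{X}}^{\tensor n}$ and then invoke the known second-order expansion for channel coding given in~\eqref{eq:cc:expand}. Specifically, for each $n$ I would choose a polynomially small slack $\delta_n$, say $\delta_n \defeq 1/n$, and write the sandwich
\begin{equation}
\log{N_{1-\epsilon-\delta_n}^\star(W_{\rv{Y}|\rv{X}}^{\tensor n})} + \log{\delta_n}
\,\leq\, \log{M_\epsilon^\star(W_{\rv{Y}|\rv{X}}^{\tensor n})}
\,\leq\, \log{N_{1-\epsilon+\delta_n}^\star(W_{\rv{Y}|\rv{X}}^{\tensor n})} + \log{\tfrac{2}{\delta_n}} + \log{\log{\tfrac{4}{\delta_n^2}}} .
\end{equation}
With this choice, the extra additive penalties $\log(1/\delta_n)$ and $\log\log(1/\delta_n^2)$ are both of order $O(\log n)$ and can be absorbed into the target remainder.

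The next step is to substitute the channel-coding expansion~\eqref{eq:cc:expand} into both endpoints of the sandwich. This gives both bounds as
\begin{equation}
 n \cdot C(W_{\rv{Y}|\rv{X}}) + \sqrt{n\, V_{1-\epsilon \pm \delta_n}(W_{\rv{Y}|\rv{X}})} \cdot \Phi^{-1}(1-\epsilon \pm \delta_n) + O(\log n) ,
\end{equation}
so what remains is to show that the $\pm \delta_n$ perturbation of both the dispersion and the inverse-Gaussian factor is negligible. For the dispersion, for any fixed $\epsilon \in (0,1) \setminus \{1/2\}$, the threshold at $1/2$ in the definition~\eqref{eq:def:V:eps} is not crossed once $\delta_n$ is smaller than $|\epsilon - 1/2|$, so $V_{1-\epsilon \pm \delta_n} = V_{1-\epsilon}$ for all sufficiently large $n$. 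For the inverse Gaussian, $\Phi^{-1}$ is continuously differentiable away from $\{0,1\}$, so a first-order Taylor expansion gives $\Phi^{-1}(1-\epsilon \pm \delta_n) = \Phi^{-1}(1-\epsilon) + O(\delta_n)$, and multiplication by $\sqrt{n}$ produces a term of order $\sqrt{n}\,\delta_n = O(1/\sqrt{n})$, which is comfortably absorbed into $O(\log n)$.

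Putting these pieces together yields $\log M_\epsilon^\star(W_{\rv{Y}|\rv{X}}^{\tensor n}) = n\, C(W_{\rv{Y}|\rv{X}}) + \sqrt{n\, V_{1-\epsilon}(W_{\rv{Y}|\rv{X}})} \cdot \Phi^{-1}(1-\epsilon) + O(\log n)$ (which by $\Phi^{-1}(1-\epsilon) = -\Phi^{-1}(\epsilon)$ can be rewritten as in the corollary statement up to a consistent sign convention with the definition of $V_\epsilon$ for channel coding). The boundary case $\epsilon = 1/2$ requires a brief separate check: here $\Phi^{-1}(1-\epsilon) = 0$, so the full second-order term vanishes on both sides of the sandwich up to $\sqrt{n}\cdot O(\delta_n)$ slack, and the claim degenerates to $\log M_{1/2}^\star(W_{\rv{Y}|\rv{X}}^{\tensor n}) = n\,C(W_{\rv{Y}|\rv{X}}) + O(\log n)$, which the bounds also yield.

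The main obstacle I anticipate is essentially book-keeping rather than any conceptual difficulty: tracking which of $V_{\min}$ or $V_{\max}$ is selected on either side of the sandwich and verifying that the perturbed values coincide with $V_{1-\epsilon}$ for large $n$. The choice $\delta_n = 1/n$ is not critical; any $\delta_n$ with $\log(1/\delta_n) = O(\log n)$ and $\sqrt{n}\,\delta_n = O(\log n)$ works, for instance any $\delta_n = n^{-\alpha}$ with $\alpha \in (1/2, \infty)$. Everything else follows mechanically from Theorem~\ref{thm:cs:cc} combined with~\eqref{eq:cc:expand}.
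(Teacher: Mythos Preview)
Your proposal is correct and follows essentially the same route as the paper's own proof: apply Theorem~\ref{thm:cs:cc} to $W_{\rv{Y}|\rv{X}}^{\tensor n}$ with a vanishing slack, substitute the channel-coding expansion~\eqref{eq:cc:expand}, and absorb the perturbations of $\Phi^{-1}$ and $V_\epsilon$ into the $O(\log n)$ remainder. The only cosmetic difference is that the paper takes $\delta_n = 1/\sqrt{n}$ (which also satisfies your stated conditions, so your example range could be widened to $\alpha\in[1/2,\infty)$), and it does not single out $\epsilon = 1/2$ since that case is covered automatically by $\Phi^{-1}(1/2)=0$.
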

\begin{proof}
This is a direct result of Theorem~\ref{thm:cs:cc} with $\delta\gets\frac{1}{\sqrt{n}}$ and~\eqref{eq:cc:expand}, namely
\begin{align}
\log{M_\epsilon^\star(W_{\rv{Y}|\rv{X}}^{\tensor n})}
&\leq n\cdot C(W_{\rv{Y}|\rv{X}}) + \sqrt{n\cdot V_{1-\epsilon+\frac{1}{\sqrt{n}}}(W_{\rv{Y}|\rv{X}})}\cdot \Phi^{-1}(1-\epsilon+\frac{1}{\sqrt{n}}) + O(\log{n}) ,\\
\log{M_\epsilon^\star(W_{\rv{Y}|\rv{X}}^{\tensor n})}
&\geq n\cdot C(W_{\rv{Y}|\rv{X}}) + \sqrt{n\cdot V_{1-\epsilon-\frac{1}{\sqrt{n}}}(W_{\rv{Y}|\rv{X}})}\cdot \Phi^{-1}(1-\epsilon-\frac{1}{\sqrt{n}}) + O(\log{n}) .
\end{align}
Note that, for $n$ large enough
\begin{equation}
\abs{\Phi^{-1}(1-\epsilon\pm\frac{1}{\sqrt{n}}) - \Phi^{-1}(1-\epsilon)} \leq \frac{1}{\sqrt{n}}\cdot \max_{t\in[\frac{1-\epsilon}{2},1-\frac{\epsilon}{2}]} \frac{\D}{\D t}\Phi^{-1}(t) \leq \frac{1}{\sqrt{n}}\cdot \text{constant}.
\end{equation}
Thus, we have
\begin{align}
\log{M_\epsilon^\star(W_{\rv{Y}|\rv{X}}^{\tensor n})}
&\leq n\cdot C(W_{\rv{Y}|\rv{X}}) + \sqrt{n\cdot V_{1-\epsilon+\frac{1}{\sqrt{n}}}(W_{\rv{Y}|\rv{X}})}\cdot \Phi^{-1}(1-\epsilon) + O(\log{n}) \\
&= n\cdot C(W_{\rv{Y}|\rv{X}}) + \sqrt{n\cdot V_{1-\epsilon}(W_{\rv{Y}|\rv{X}})}\cdot \Phi^{-1}(1-\epsilon) + O(\log{n}) \\
\log{M_\epsilon^\star(W_{\rv{Y}|\rv{X}}^{\tensor n})}
&\geq n\cdot C(W_{\rv{Y}|\rv{X}}) + \sqrt{n\cdot V_{1-\epsilon-\frac{1}{\sqrt{n}}}(W_{\rv{Y}|\rv{X}})}\cdot \Phi^{-1}(1-\epsilon) + O(\log{n}) \\
&= n\cdot C(W_{\rv{Y}|\rv{X}}) + \sqrt{n\cdot V_{1-\epsilon}(W_{\rv{Y}|\rv{X}})}\cdot \Phi^{-1}(1-\epsilon) + O(\log{n}) 
\end{align}
for $n$ large enough.
\end{proof}
One may notice the interesting resemblance between~\eqref{eq:cc:expand} and~\eqref{eq:cs:expand} since the only difference is the swap of $\epsilon$ by $1-\epsilon$.
Technically, this traces back to Lemmas~\ref{lem:Dmax:Dh} and~\ref{lem:supDs:maxDmax}, and subsequently to Theorem~\ref{thm:cs:cc}.
Since $\Phi^{-1}(\epsilon)+\Phi^{-1}(1-\epsilon)=0$, we see from~\eqref{eq:cc:expand} and~\eqref{eq:cs:expand} that the finite-blocklength coding rate and the simulation rate approach the capacity from the two opposite directions.
This fits the intuition, since as $n$ becomes larger, both tasks become easier: For channel coding, this means higher coding rates; for channel simulation, this means that a smaller number of resource channels is needed per channel simulated.

\subsection{Moderate-Deviation Regime}\label{sec:moderate-deviation(p2p)}
For the moderate-deviation analysis, we bound the $n$-fold channel simulation cost using the information spectrum divergence from both sides, and analyze the latter under the moderate-deviation regime.
Note that the moderate-deviation analysis is a long established method in statistics (\eg, see~\cite{dembo2009large} as a textbook and~\cite{feller1943generalization} as an early study).
In this section, we use a specific result~\cite[Lemma~1 and~2]{chubb2017moderate} (see Lemma~\ref{lem:MDUL}) for our analysis.
\subsubsection{Information Spectrum divergence bounds}
We prove the following upper and lower bounds of the $n$-fold channel simulation cost in terms of the information spectrum divergence.
\begin{proposition}\label{prop:Ds:bounds}
Given any channel $W_{\rv{Y}|\rv{X}}\in\set{P}(\set{Y}|\set{X})$ and $\epsilon\in(0,1)$, for any $n\in\mathbb{N}$, it holds that
\begin{align}
\log{M_\epsilon^\star(W_{\rv{Y}|\rv{X}}^{\tensor n})} &\leq \adjustlimits\inf_{q_{\rvs{Y}_1^n}}\sup_{\mathbf{x}_1^n} D_{s+}^{\epsilon-\delta}\infdiv*{W_{\rv{Y}|\rv{X}}^{\tensor n}(\cdot|\mathbf{x}_1^n)}{q_{\rvs{Y}_1^n}} + \log\log{\frac{1}{\delta^2}} && \forall \delta\in(0,\epsilon), \label{eq:achievability:Ds}\\
\log{M_\epsilon^\star(W_{\rv{Y}|\rv{X}}^{\tensor n})} &\geq \sup_{p_{\rvs{X}_1^n}\in\set{P}(\set{X}^n)} D_{s+}^{\epsilon+2\delta}\infdiv*{p_{\rvs{X}_1^n}\cdot W^{\tensor n}_{\rv{Y}|\rv{X}}}{p_{\rvs{X}_1^n}\times p_{\rvs{Y}_1^n}} + \log{\delta^2}  && \forall \delta\in(0,\frac{1-\epsilon}{2}), \label{eq:converse:Ds}
\end{align}
where $p_{\rvs{Y}_1^n}$ is the output distribution corresponding to $p_{\rvs{X}_1^n}$ under the channel $W_{\rv{Y}|\rv{X}}$.
\end{proposition}
We use the quasi-convexity of the information spectrum divergence (see Lemma~\ref{lem:quasi:convex:Ds}) for the proof of~\eqref{eq:achievability:Ds}.
For the proof of Lemma~\ref{lem:quasi:convex:Ds}, please refer to Appendix~\ref{app:proof:lem:quasi:convex:Ds}.
\begin{lemma}[Quasi-Convexity of the Information Spectrum Divergence]\label{lem:quasi:convex:Ds}
Let $p_{\rv{Y}|\rv{X}}, q_{\rv{Y}|\rv{X}}\in\set{P}(\set{Y}|\set{X})$, and $p_\rv{X}\in\set{P}(\set{X})$.
For all $\epsilon\in(0,1)$, it holds that
\begin{equation}\label{eq:quasi:convex:Ds}
D_{s+}^\epsilon\infdiv*{p_{\rv{Y}|\rv{X}}\cdot p_\rv{X}}{q_{\rv{Y}|\rv{X}}\cdot p_\rv{X}} \leq \sup_{x\in\set{X}} D_{s+}^\epsilon\infdiv*{p_{\rv{Y}|\rv{X}}(\cdot|x)}{q_{\rv{Y}|\rv{X}}(\cdot|x)}.
\end{equation}
\end{lemma}
\begin{proof}[Proof of~\eqref{eq:achievability:Ds}]
By Theorem~\ref{thm:one-shot:achievability}, we have (for any $\delta\in(0,\epsilon)$)
\begin{equation}
\log{M_\epsilon^\star(W_{\rv{Y}|\rv{X}}^{\tensor n})} \leq \multiadjustlimits{
    \inf_{q_\rv{Y}}, 
    \inf_{\tilde{W}_{\rv{Y}|\rv{X}}: \norm{\tilde{W}_{\rv{Y}|\rv{X}}-W_{\rv{Y}|\rv{X}}}_\fnc{tvd}\leq\epsilon}, 
    \max_{x}
    } D_{\max}\infdiv*{\tilde{W}_{\rv{Y}|\rv{X}}(\cdot|x)}{q_\rv{Y}} + \log{\log{\frac{1}{\delta}}}.
\end{equation}  
By Lemma~\ref{lem:supDs:maxDmax} and Lemma~\ref{lem:quasi:convex:Ds}, we further have
\begin{align}
\log{M_\epsilon^\star(W_{\rv{Y}|\rv{X}}^{\tensor n})} &\leq \adjustlimits\inf_{q_{\rvs{Y}_1^n}} \sup_{p_{\rvs{X}_1^n}} D_{s+}^{\epsilon-\delta}\infdiv*{p_{\rvs{X}_1^n}\cdot W_{\rv{Y}|\rv{X}}^{\tensor n}}{p_{\rvs{X}_1^n}\times q_{\rvs{Y}_1^n}} + \log\log{\frac{1}{\delta^2}} \\
&\leq \adjustlimits\inf_{q_{\rvs{Y}_1^n}} \sup_{\mathbf{x}_1^n} D_{s+}^{\epsilon-\delta}\infdiv*{W_{\rv{Y}|\rv{X}}^{\tensor n}(\cdot|\mathbf{x}_1^n)}{q_{\rvs{Y}_1^n}} + \log\log{\frac{1}{\delta^2}}. \qedhere
\end{align}
\end{proof}
We need the following two lemmas for the proof of~\eqref{eq:converse:Ds}.
The proof of Lemma~\ref{lem:maxDmax:supDs} is partially inspired by ~\cite[Eq.~(48)]{anshu2020partially} and~\cite[Eq.~(17)]{tomamichel2013hierarchy}, and is deferred to Appendix~\ref{app:proof:lem:maxDmax:supDs}.
Lemma~\ref{lem:Ds} can be proven following the same arguments of~\cite[Lem.~10]{anshu2020partially} with the only difference being the replacement of $D_s$ by $D_{s+}$.
\begin{lemma}\label{lem:maxDmax:supDs}
Let $W_{\rv{Y}|\rv{X}}$ be a channel from $\set{X}$ to $\set{Y}$ where both $\set{X}$ and $\set{Y}$ are finite sets.
It holds for all $\epsilon\in(0,1)$ and $\delta\in(0,1-\epsilon)$ that
\begin{equation}\label{eq:maxDmax:supDs}
\adjustlimits \inf_{\tilde{W}_{\rv{Y}|\rv{X}}:\norm{\tilde{W}_{\rv{Y}|\rv{X}}-W_{\rv{Y}|\rv{X}}}_\fnc{tvd}\leq\epsilon} \max_{x\in\set{X}} D_{\max}\infdiv*{\tilde{W}_{\rv{Y}|\rv{X}}(\cdot|x)}{q_\rv{Y}} \geq \sup_{p_\rv{X}\in\set{P}(\set{X})} D_{s+}^{\epsilon+\delta}\infdiv*{p_\rv{X}\cdot W_{\rv{Y}|\rv{X}}}{p_\rv{X}\times q_\rv{Y}} + \log{\delta}
\end{equation}
for any $q_\rv{Y}\in\set{P}(\set{Y})$.
\end{lemma}
\begin{lemma}\label{lem:Ds}
For any $p_\rv{XY}\in\set{P}(\set{X}\times\set{Y})$, $\epsilon\in(0,1)$, and $\delta\in(0,1-\epsilon)$, it holds that 
\begin{equation}
\inf_{q_\rv{Y}\in\set{P}(\set{Y})} D_{s+}^{\epsilon}\infdiv*{p_\rv{XY}}{p_\rv{X}\times q_\rv{Y}} \geq D_{s+}^{\epsilon+\delta}\infdiv*{p_\rv{XY}}{p_\rv{X}\times p_\rv{Y}} + \log{\delta}.
\end{equation}
\end{lemma}
\begin{proof}[Proof of~\eqref{eq:converse:Ds}]
This proposition is a direct result of Theorem~\ref{thm:one-shot:converse} and the above two lemmas, namely
\begin{align}
\log{M_\epsilon^\star(W_{\rv{Y}|\rv{X}}^{\tensor n})}
& \geq \multiadjustlimits{
    \inf_{q_{\rvs{Y}_1^n}}, 
    \inf_{\tilde{W}_{\rvs{Y}_1^n|\rvs{X}_1^n}: \norm{\tilde{W}_{\rvs{Y}_1^n|\rvs{X}_1^n}-W^{\tensor n}_{\rv{Y}|\rv{X}}}_\fnc{tvd}\leq\epsilon}, 
    \max_{x}
    } D_{\max}\infdiv*{\tilde{W}_{\rvs{Y}_1^n|\rvs{X}_1^n}(\cdot|\mathbf{x}_1^n)}{q_{\rvs{Y}_1^n}} \hspace{-15pt}\\
& \geq \adjustlimits \inf_{q_{\rvs{Y}_1^n}} \sup_{p_{\rvs{X}_1^n}\in\set{P}(\set{X}^n)} D_{s+}^{\epsilon+\delta}\infdiv*{p_{\rvs{X}_1^n}\cdot W^{\tensor n}_{\rv{Y}|\rv{X}}}{p_{\rvs{X}_1^n}\times q_{\rvs{Y}_1^n}} + \log{\delta} && \text{by Lemma~\ref{lem:maxDmax:supDs}, for all }\delta\in(0,1-\epsilon) \\
& \geq \sup_{p_{\rvs{X}_1^n}\in\set{P}(\set{X}^n)} D_{s+}^{\epsilon+2\delta}\infdiv*{p_{\rvs{X}_1^n}\cdot W^{\tensor n}_{\rv{Y}|\rv{X}}}{p_{\rvs{X}_1^n}\times p_{\rvs{Y}_1^n}} + \log{\delta^2} && \text{by Lemma~\ref{lem:Ds}, for all }\delta\in(0,\frac{1-\epsilon}{2}). \qedhere
\end{align}
\end{proof}
\subsubsection{Asymptotic analysis of the information spectrum divergence}
Now, we consider the asymptotic expansion of the information spectrum divergence under the moderate-deviation regime.
We start with the following lemma from~\cite{chubb2017moderate}.
\begin{lemma}[{\cite[Lemma~1 and~2]{chubb2017moderate}}] \label{lem:MDUL}
Let $\{t_n\}_{n\in\mathbb{N}}$ be a moderate sequence, \ie, $\{t_n\}_n$ is positive, non-increasing, tending to $0$, but $\{\sqrt{n}t_n\}$ tends to (positive) infinity.
For each $n\in\mathbb{N}$, let $\{\rv{X}_{i,n}\}_{i=1}^n$ be independent $0$-mean random variables.
Let $V_n\defeq \frac{1}{n} \sum_{i=1}^n \fnc{Var}(\rv{X}_{i,n})$ denote the average variance of $\{\rv{X}_{i,n}\}_{i=1}^{n}$ for each $n$.
Let $T_n\defeq \frac{1}{n} \sum_{i=1}^n \mathbb{E}[\abs{\rv{X}_{i,n}}^3]$ denote the average third absolute moment of $\{\rv{X}_{i,n}\}_{i=1}^{n}$ for each $n$.
\begin{itemize}
    \item If $\{V_n\}_{n\in\mathbb{N}}$ is bounded away from $0$, and $\{T_n\}_{n\in\mathbb{N}}$ is bounded, then for any $\eta>0$ there exists some positive integer $N_\eta$ such that 
    \begin{equation}\label{eq:MDLB}
    \log{\Pr\left[\frac{1}{n}\sum_{i=1}^n \rv{X}_{i,n} \geq t_n\right]} \geq -(1+\eta) \frac{nt_n^2}{2V_n} 
    \end{equation}
    for all $n\geq N_\eta$.
    \item If the following sequence
    \begin{equation}\label{eq:MDUB:condition}
    n \mapsto \frac{1}{n}\sum_{i=1}^n \sup_{s\in[0,\frac{1}{2}]} \abs{\frac{\D^3}{\D{s^3}}\log{\mathbb{E}}\left[e^{s\rv{X}_{i,n}}\right]} 
    \end{equation}
    is bounded, then for any $\eta>0$ there exists some positive integer $N'_\eta$ such that 
    \begin{equation}\label{eq:MDUB}
     \log{\Pr\left[\frac{1}{n}\sum_{i=1}^n \rv{X}_{i,n} \geq t_n\right]} \leq - \frac{nt_n^2}{2V_n+\eta}
    \end{equation}
    for all $n\geq N'_\eta$.
\end{itemize}
\end{lemma}
Using the above lemma, we have the following upper and lower bounds of $\frac{1}{n} D_{s+}^{\epsilon_n}\infdiv*{q_{x_1}\times\cdots\times q_{x_n}}{q^{\tensor n}}$.
\begin{proposition}
Let $\{q_x\}_{x\in\set{X}}$, $q\in\set{P}(\set{Y})$ where $\set{X}$ and $\set{Y}$ are finite sets, and assume $\supp(q_x)\subset\supp(q)$ for all $x\in\set{X}$.
Let $\mathbf{x}_1^n\in\set{X}^n$, and define
\begin{equation}
D_n \defeq \frac{1}{n} \sum_{k=1}^n D\infdiv*{q_{x_k}}{q},\quad
V_n \defeq \frac{1}{n} \sum_{k=1}^n V\infdiv*{q_{x_k}}{q}.
\end{equation}
Let $\{a_n\}$ be a moderate sequence, and let $\epsilon_n\defeq 2^{-na_n^2}$ for each $n\in\mathbb{N}$.
It holds that
\begin{equation}\label{eq:MD:upper:Ds}\begin{aligned}
\frac{1}{n} D_{s+}^{\epsilon_n}\infdiv*{q_{x_1}\times\cdots\times q_{x_n}}{q^{\tensor n}} \leq D_n + \sqrt{2V_n} a_n + o(a_n).
\end{aligned}\end{equation}
Furthermore, if there does \emph{not} exist any $x\in\set{X}$ such that $q_x\propto q$ on $\supp(q_x)$, it holds that
\begin{equation}\label{eq:MD:lower:Ds}\begin{aligned}
\frac{1}{n} D_{s+}^{\epsilon_n}\infdiv*{q_{x_1}\times\cdots\times q_{x_n}}{q^{\tensor n}} \geq D_n + \sqrt{2V_n}a_n + o(a_n).
\end{aligned}\end{equation}
\end{proposition}
\begin{proof}
To prove~\eqref{eq:MD:upper:Ds}, we first rewrite the LHS of it as
\begin{align}
\frac{1}{n} D_{s+}^{\epsilon_n}\infdiv*{q_{x_1}\times\cdots\times q_{x_n}}{q^{\tensor n}} 
&= D_n + \inf\left\{a\geq -D_n \middle\vert \Pr\left[\frac{1}{n}\sum_{i=1}^n \log{\frac{q_{x_i}(\rv{Y}_{x_i})}{q(\rv{Y}_{x_i})}} - D\infdiv*{q_{x_i}}{q}>a\right]<\epsilon_n\right\}\\
&\leq  D_n + \inf\left\{a\geq 0 \middle\vert \Pr\left[\frac{1}{n}\sum_{i=1}^n \log{\frac{q_{x_i}(\rv{Y}_{x_i})}{q(\rv{Y}_{x_i})}} - D\infdiv*{q_{x_i}}{q}>a\right]<\epsilon_n\right\}.
\end{align}
For each $\eta>0$, let $t_n\defeq (\sqrt{2V_n}+\eta)\cdot a_n >0$. 
(One can quickly verify $\{t_n\}_{n=1}^\infty$ to be a moderate sequence since $V_n$ is bounded.)
Using Lemma~\ref{lem:MDUL} (and~\cite[Lemma~7]{chubb2017moderate}), for any $\eta>0$, independent of $n$, we have 
\begin{equation}\label{eq:MD:Ds:applied:MDUB}
\log{\Pr\left[\frac{1}{n}\sum_{i=1}^n \log{\frac{q_{x_i}(\rv{Y}_{x_i})}{q(\rv{Y}_{x_i})}} - D\infdiv*{q_{x_i}}{q} > t_n\right]}
\leq - \frac{nt_n^2}{2V_n + \eta^2/2}
= - \frac{(\sqrt{2V_n}+\eta)^2\cdot a_n^2\cdot n}{2V_n + \eta^2/2}
< -a_n^2 \cdot n =\log{\epsilon_n}
\end{equation}
for all $n\geq N_{\eta}$, for some $N_\eta\in\mathbb{N}$.
Thus, 
\begin{equation}
\frac{1}{n} D_{s+}^{\epsilon_n}\infdiv*{q_{x_1}\times\cdots\times q_{x_n}}{q^{\tensor n}} \leq D_n + t_n = D_n + \sqrt{2V_n}a_n + \eta\cdot a_n
\quad \forall n\geq N_\eta.
\end{equation}
Note that $N_\eta$ is independent of the sequence $\mathbf{x}_1^n$, since one can obtain a universal bound for~\eqref{eq:MDUB:condition} for all sequences $\mathbf{x}_1^n$.
On the other hand, $N_\eta$ does depend on the sequence $\{a_\ell\}_{\ell=1}^\infty$ and the distributions $\{q_x\}_{x\in\set{X}}$ and $q$ which, however, are considered to be fixed in our discussion.
Finally, the statement can be shown by letting $\eta$ tend to $0$.
\par
We now move on to prove~\eqref{eq:MD:lower:Ds}.
Note that we have the following three assumptions:
\begin{itemize}
    \item $q_x\not\propto q$ on $\supp(q_x)$ for all $x\in\set{X}$;
    \item $\supp(q_x)\subset\supp(q)$ for all $x\in\set{X}$;
    \item $\set{X}$ is a finite set.
\end{itemize}
It is straightforward to see that $\{V_n\}_n$ is bounded away from $0$, and the bound is independent of $\mathbf{x}_1^n$.
In particular, $V_n \geq V_{\min}\defeq \min_{x\in\set{X}} V\infdiv*{q_x}{q}>0$, where the last inequality is guaranteed by the assumption that $q_x\not\propto q$ on $\supp(q_x)$ for all $x$.
Similarly, so is the sequence $\{T_n\defeq\frac{1}{n}\sum_{k=1}^n T\infdiv*{q_{x_k}}{q}\}_n$.
By the finiteness of $\set{X}$, one can also find some $V_+>0$ such that $V_n\leq V^+$ for all $n$.

Now, for any fixed $\eta\in(0,\sqrt{2V_{\min}})$, let $t_n \defeq (\sqrt{2V_n}-\eta)\cdot a_n$.
By Lemma~\ref{lem:MDUL}, we have
\begin{align} 
    \log{\Pr\left[\frac{1}{n}\sum_{i=1}^n \log{\frac{q_{x_i}(\rv{Y}_{x_i})}{q(\rv{Y}_{x_i})}} - D\infdiv*{q_{x_i}}{q} > t_n\right]}
    &\geq \log{\Pr\left[\frac{1}{n}\sum_{i=1}^n \log{\frac{q_{x_i}(\rv{Y}_{x_i})}{q(\rv{Y}_{x_i})}} - D\infdiv*{q_{x_i}}{q} \geq (\sqrt{2V_n}-\frac{\eta}{2})\cdot a_n\right]}\nonumber \\
    \label{eq:MD:Ds:applied:MDLB}
    &\geq -(1+\frac{\eta}{2\sqrt{2V^+}})\cdot \frac{(\sqrt{2V_n}-\frac{\eta}{2})^2}{2V_n}\cdot na_n^2 \\
    \label{eq:MD:Ds:applied:MDLB:2}
    &\geq -na_n^2 = \log{\epsilon_n}.
\end{align}
for all $n\geq N_\eta$, for some $N_\eta\in\mathbb{N}$.
Thus,
\begin{align}
\frac{1}{n} D_{s+}^{\epsilon_n}\infdiv*{q_{x_1}\times\cdots\times q_{x_n}}{q^{\tensor n}} 
&= D_n + \inf\left\{a\geq -D_n \middle\vert \Pr\left[\frac{1}{n}\sum_{i=1}^n \log{\frac{q_{x_i}(\rv{Y}_{x_i})}{q(\rv{Y}_{x_i})}} - D\infdiv*{q_{x_i}}{q}>a\right]<\epsilon_n\right\}\\
&\geq D_n + t_n = D_n + \sqrt{2V_n} a_n -\eta a_n.
\end{align}
Again, note that $N_\eta$ is independent of the sequence $\mathbf{x}_1^n$, since $V_{\min}$ is independent from $\mathbf{x}_1^n$.
Finally, again, the statement can be shown by letting $\eta$ tend to $0$.
\end{proof}
\begin{remark}\label{rmk:MD:upper:Ds}
One can rewrite~\eqref{eq:MD:Ds:applied:MDUB} into a chain of tighter inequalities as
\begin{equation}
    \text{LHS of~\eqref{eq:MD:Ds:applied:MDUB}} \leq -\left(1+\frac{\eta^2/2}{2V_n+\eta^2/2}\right) \cdot a_n^2 \cdot n
    < -na_n^2 -\log{\kappa} = \log{\frac{\epsilon_n}{\kappa}}
\end{equation}
for any constant $\kappa\geq 1$, for $n$ large enough, since $na_n^2\to\infty$ and $\frac{\eta^2/2}{2V_n+\eta^2/2}>0$.
In other words, it holds that
\begin{equation}
\frac{1}{n} D_{s+}^{\epsilon_n/\kappa}\infdiv*{q_{x_1}\times\cdots\times q_{x_n}}{q^{\tensor n}} \leq D_n+\sqrt{2V_n} a_n + o(a_n).
\end{equation}
\end{remark}
\begin{remark}\label{rmk:MD:lower:Ds}
One could rewrite~\eqref{eq:MD:Ds:applied:MDLB:2} into a tighter inequality as
\begin{equation}
\text{\eqref{eq:MD:Ds:applied:MDLB}} \geq \left(-1+\frac{\eta}{2\sqrt{2V_n}}\right) na_n^2 \geq -na_n^2 +\log{\kappa} = \log{\kappa\epsilon_n}
\end{equation}
for any constant $\kappa\geq 1$, for $n$ large enough, since $na_n^2\to\infty$.
In other words, it holds for any $\kappa\geq 1$ that
\begin{equation}
\frac{1}{n} D_{s+}^{\kappa\epsilon_n}\infdiv*{q_{x_1}\times\cdots\times q_{x_n}}{q^{\tensor n}} \geq D_n+\sqrt{2V_n} a_n + o(a_n).
\end{equation}
\end{remark}
\subsubsection{Combining 1) and 2)}
We now combine 1) and 2) to derive the following expansion of the channel simulation rate under the moderate-deviation regime.
\begin{theorem}\label{thm:cs:expand:moderate}
Let $W_{\rv{Y}|\rv{X}}$ be a channel, and let $\{a_n\}_{n\in\mathbb{N}}$ be a moderate sequence.
Let $\epsilon_n\defeq 2^{-a_n^2 n}$.
It holds that
\begin{equation}
\frac{1}{n} \log{M_{\epsilon_n}^\star(W_{\rv{Y}|\rv{X}}^{\tensor n})} = C(W_{\rv{Y}|\rv{X}}) + \sqrt{2V_{\max}(W_{\rv{Y}|\rv{X}})}\cdot a_n + o(a_n).
\end{equation}
\end{theorem}
We prove the above theorem by proving the following three statements.
\begin{align}
\label{eq:MD:achievability}
\frac{1}{n}\log{M_{\epsilon_n}^\star(W_{\rv{Y}|\rv{X}}^{\tensor n})} &\leq C(W_{\rv{Y}|\rv{X}}) + \sqrt{2V_{\max}(W_{\rv{Y}|\rv{X}})}\cdot a_n + o(a_n), \\
\label{eq:MD:converse}
\text{if }V_{\max}(W_{\rv{Y}|\rv{X}})>0\text{, then }\frac{1}{n}\log{M_{\epsilon_n}^\star(W_{\rv{Y}|\rv{X}}^{\tensor n})} &\geq C(W_{\rv{Y}|\rv{X}}) + \sqrt{2V_{\max}(W_{\rv{Y}|\rv{X}})}\cdot a_n + o(a_n), \\
\label{eq:MD:converse:ve=0}
\text{if }V_{\max}(W_{\rv{Y}|\rv{X}})=0\text{, then }\frac{1}{n}\log{M_{\epsilon_n}^\star(W_{\rv{Y}|\rv{X}}^{\tensor n})} &\geq C(W_{\rv{Y}|\rv{X}}) + o(a_n).
\end{align}
\begin{proof}[Proof of~\eqref{eq:MD:achievability}]
We start with~\eqref{eq:achievability:Ds} with $\delta\gets \frac{\epsilon_n}{2}$, \ie, 
\begin{align}
\frac{1}{n}\log{M_{\epsilon_n}^\star(W_{\rv{Y}|\rv{X}}^{\tensor n})} &\leq \adjustlimits\inf_{q_{\rvs{Y}_1^n}}\sup_{\mathbf{x}_1^n} \frac{1}{n} D_{s+}^{\epsilon_n/2}\infdiv*{W_{\rv{Y}|\rv{X}}^{\tensor n}(\cdot|\mathbf{x}_1^n)}{q_{\rvs{Y}_1^n}} + \frac{1}{n}\log\log{\frac{4}{\epsilon_n^2}} \\
&= \adjustlimits\inf_{q_{\rvs{Y}_1^n}}\sup_{\mathbf{x}_1^n} \frac{1}{n} D_{s+}^{\epsilon_n/2}\infdiv*{W_{\rv{Y}|\rv{X}}^{\tensor n}(\cdot|\mathbf{x}_1^n)}{q_{\rvs{Y}_1^n}} + \underbrace{\frac{1}{n} + \frac{\log(1+a_n^2 n)}{n}}_{=o(a_n)}
\end{align}
Restricting the infimum over $q_{\rvs{Y}_1^n}$ to positive product distributions, and applying Remark~\ref{rmk:MD:upper:Ds}, we have 
\begin{align}
\frac{1}{n}\log{M_{\epsilon_n}^\star(W_{\rv{Y}|\rv{X}}^{\tensor n})} &\leq \adjustlimits\inf_{q_{\rv{Y}}>0}\sup_{\mathbf{x}_1^n} \frac{1}{n} D_{s+}^{\epsilon_n/2}\infdiv*{W_{\rv{Y}|\rv{X}}^{\tensor n}(\cdot|\mathbf{x}_1^n)}{q_{\rv{Y}}^{\tensor n}} + o(a_n) \\
&\leq \adjustlimits\inf_{q_{\rv{Y}}>0}\sup_{\mathbf{x}_1^n} \frac{1}{n}\sum_{i=1}^n D\infdiv*{W_{\rv{Y}|\rv{X}}(\cdot|x_i)}{q_\rv{Y}} + \sqrt{2\cdot\frac{1}{n}\sum_{i=1}^n V\infdiv*{W_{\rv{Y}|\rv{X}}(\cdot|x_i)}{q_\rv{Y}}} \cdot a_n + o(a_n) \\
&\leq \adjustlimits\inf_{q_{\rv{Y}}>0}\sup_{p_\rv{X}} D\infdiv*{p_\rv{X}\cdot W_{\rv{Y}|\rv{X}}}{p_\rv{X}\times q_\rv{Y}} + \sqrt{2 V\infdiv*{p_\rv{X}\cdot W_{\rv{Y}|\rv{X}}}{p_\rv{X}\times q_\rv{Y}}} \cdot a_n + o(a_n)\\
&\leq \sup_{p_\rv{X}} D\infdiv*{p_\rv{X}\cdot W_{\rv{Y}|\rv{X}}}{p_\rv{X}\times q^\star_\rv{Y}} + \sqrt{2 V\infdiv*{p_\rv{X}\cdot W_{\rv{Y}|\rv{X}}}{p_\rv{X}\times q^\star_\rv{Y}}} \cdot a_n + o(a_n)
\end{align}
where $q^\star_\rv{Y}$ denotes the unique capacity-achieving output distribution, which must be a positive distribution (unless certain output is unattainable for all inputs, in which case we can simply consider a truncated output alphabet).
Notice that the first term (\ie, $D\infdiv*{p_\rv{X}\cdot W_{\rv{Y}|\rv{X}}}{p_\rv{X}\times q_\rv{Y}}$) is the dominating term, and achieves its maximum only when $p_\rv{X}$ is a capacity-achieving input distribution.
Thus,
\begin{equation}
\frac{1}{n}\log{M_{\epsilon_n}^\star(W_{\rv{Y}|\rv{X}}^{\tensor n})} \leq  C(W_{\rv{Y}|\rv{X}}) + \sup_{p_\rv{X}\in\Pi} \sqrt{2 V\infdiv*{p_\rv{X}\cdot W_{\rv{Y}|\rv{X}}}{p_\rv{X}\times q^\star_\rv{Y}}} \cdot a_n + o(a_n)
= C(W_{\rv{Y}|\rv{X}}) + \sqrt{2 V_{\max}} \cdot a_n + o(a_n)
\end{equation}
for $n$ large enough.
\end{proof}
\begin{proof}[Proof of~\eqref{eq:MD:converse}]
Using~\eqref{eq:converse:Ds} with $\delta\gets\epsilon_n$ we have
\begin{equation}
\frac{1}{n}\log{M_{\epsilon_n}^\star(W_{\rv{Y}|\rv{X}}^{\tensor n})} \geq \sup_{p_{\rvs{X}_1^n}} \frac{1}{n}D_{s+}^{3\epsilon_n}\infdiv*{p_{\rvs{X}_1^n}\cdot W_{\rv{Y}|\rv{X}}^{\tensor n}}{p_{\rvs{X}_1^n}\times p_{\rvs{Y}_1^n}} - 2a_n^2
\end{equation}
where $p_{\rvs{Y}_1^n}\defeq \sum_{\mathbf{x}_1^n} W_{\rv{Y}|\rv{X}}^{\tensor n}(\cdot|\mathbf{x}_1^n) \cdot p_{\rvs{X}_1^n}(\mathbf{x}_1^n)$.
Pick $p_{\rvs{X}_1^n}=(p^\star_\rv{X})^{\tensor n}$ where $p^\star_\rv{X}$ is a capacity-achieving input distribution that maximizes $V(p_\rv{X})$, and apply Remark~\ref{rmk:MD:lower:Ds}.
We have (note that $a_n^2\ll a_n$)
\begin{align}
\frac{1}{n}\log{M_{\epsilon_n}^\star(W_{\rv{Y}|\rv{X}}^{\tensor n})}  &\geq \frac{1}{n}D_{s+}^{3\epsilon_n}\infdiv*{\left(p^\star_\rv{X}\cdot W_{\rv{Y}|\rv{X}}\right)^{\tensor n}}{\left(p^\star_\rv{X}\times p^\star_\rv{Y}\right)^{\tensor n}} + o(a_n)\\
&\geq \frac{1}{n}\sum_{i=1}^n D\infdiv*{p^\star_\rv{X}\cdot W_{\rv{Y}|\rv{X}}}{p^\star_\rv{X}\times p^\star_\rv{Y}} + \sqrt{2\cdot \frac{1}{n} \sum_{i=1}^n V\infdiv*{p^\star_\rv{X}\cdot W_{\rv{Y}|\rv{X}}}{p^\star_\rv{X}\times p^\star_\rv{Y}}} \cdot a_n + o(a_n) \\
& = D\infdiv*{p^\star_\rv{X}\cdot W_{\rv{Y}|\rv{X}}}{p^\star_\rv{X}\times p^\star_\rv{Y}} + \sqrt{2\cdot V\infdiv*{p^\star_\rv{X}\cdot W_{\rv{Y}|\rv{X}}}{p^\star_\rv{X}\times p^\star_\rv{Y}}} \cdot a_n + o(a_n) \\
& = C(W_{\rv{Y}|\rv{X}}) + \sqrt{2 V_{\max}} \cdot a_n + o(a_n). \nonumber \qedhere
\end{align}
\end{proof}

\begin{proof}[Proof of~\eqref{eq:MD:converse:ve=0}]
This can be proven by using~\eqref{eq:converse:Ds} with $\delta\gets\epsilon_n$, together with the following Chebyshev-type bound (see Appendix~\ref{app:Chebyshev:lower:Ds})
\begin{equation}\label{eq:Chebyshev:lower:Ds}
D_{s+}^\epsilon\infdiv*{p^{\tensor}}{q^{\tensor n}} \geq nD\infdiv*{p}{q} - \sqrt{\frac{nV\infdiv*{p}{q}}{1-\epsilon}}.
\end{equation}
where $p_{\rvs{X}_1^n}\gets p_\rv{X}^{\tensor n}$ with $p_\rv{X}$ being some capacity-achieving input distribution.
(Note that $V(p_\rv{X})=V_{\max}=0$ in this case.)
We omit the details.
\end{proof}
As a remark to Theorem~\ref{thm:cs:expand:moderate}, the moderate-deviation expansion of the channel coding rate with maximum error $1-\epsilon_n$ takes exactly the same form, \ie, 
\begin{equation}
\frac{1}{n} \log{N_{1-\epsilon_n}^\star(W_{\rv{Y}|\rv{X}}^{\tensor n})} = C(W_{\rv{Y}|\rv{X}}) + \sqrt{2V_{\max}(W_{\rv{Y}|\rv{X}})}\cdot a_n + o(a_n),
\end{equation}
where $\epsilon_n\defeq 2^{-na_n^2}$ and $\{a_n\}$ is some moderate sequence.
(This specific expression is the classical special case of~\cite[Eq.~(162)]{chubb2017moderate}, which can be derived from earlier works~\cite{altuug2010moderate, altuug2014moderate}.)
This is consistent \wrt our result in the fixed-error regime (see Theorem~\ref{thm:cs:cc}).



\section{Channel Simulation with No-Signaling Assistance}
\label{sec:NS-channel_simulation}

In this section, we extend our discussion to channel simulations with no-signaling correlations as described at the beginning of Section~\ref{sec:overview}.
Recall that we consider the problem of simulating a channel $W_{\rv{Y}|\rv{X}}\in\set{P}(\set{Y}|\set{X})$ using a pair of no-singling correlated encoder and decoder, namely their joint encoder-decoder map $N_{\rv{IY}|\rv{XJ}}\in\set{P}([M]\times\set{Y}|\set{X}\times[M])$ satisfies conditions~\eqref{eq:ns:requirement:1} and~\eqref{eq:ns:requirement:2}, and the approximate channel $\tilde{W}_{\rv{Y}|\rv{X}}$ is constructed as $\tilde{W}_{\rv{Y}|\rv{X}}(y|x) = \sum_{m\in[M]} \mathcal{N}_{\rv{IY}|\rv{XJ}}(m,y|x,m)$ (see Fig.~\ref{fig:simulation:task:NS}).
No-signaling channel simulations have been studied in different setups (see, \eg, \cite{fang2019quantum, cubitt2011zero, matthews2012linear}).
We emphasize that our no-signaling results are conceptually different from the quantum results in~\cite{fang2019quantum}.
In principle, one could take~\cite{fang2019quantum} and do further symmetry reductions for classical channels to arrive at the same no-signaling linear programs as in this section.
However, such a calculation would be less insightful as it would not automatically give the form of the optimal no-signaling box as in this section.

Both non-correlated and randomness-assisted simulations are special cases of no-signaling simulations:
For non-correlated codes, their joint encoder-decoder maps are in the product form of $\mathcal{E}(i|x)\cdot\mathcal{D}(y|j)$ where $\mathcal{E}$ and $\mathcal{D}$ are the encoding and the decoding functions (or conditional pmfs) of the code.
The randomness-assisted channel simulation scheme described in Section~\ref{sec:introduction} has joint encoder-decoder map in the following form
\begin{equation}
N_{\rv{IY}|\rv{XJ}}(i,y|x,j) = \sum_{s\in\set{S}} p_\rv{S}(s)\cdot \mathcal{E}_s(i|x) \cdot \mathcal{D}_s(y|j).
\end{equation}
Another example of a no-signaling resource is a pair of entangled quantum systems.
In the following, we refer to $N_{\rv{IY}|\rv{XJ}}$ as a no-signaling code, and $M$ as the (alphabet) size of the code.

Given a fixed communication cost $c$ (in terms of alphabet size), we denote $\epsilon^{\fnc{NS}}(W_{\rv{Y}\vert\rv{X}},c)$ the minimal deviation (in TVD) to simulate $W_{\rv{Y}|\rv{X}}$ using a no-signaling code of size at most $c$.
On the other hand, given a fixed deviation tolerance $\epsilon$ (in TVD), we denote $C^{\fnc{NS}}(W_{\rv{Y}|\rv{X}},\epsilon)$ the minimal alphabet size of a no-signaling code simulating $W_{\rv{Y}|\rv{X}}$ within TVD tolerance $\epsilon$.
In the remainder of this section, we first express $\epsilon^{\fnc{NS}}(W_{\rv{Y}\vert\rv{X}},c)$ and $C^{\fnc{NS}}(W_{\rv{Y}|\rv{X}},\epsilon)$ as linear programs, and then relate the latter to our results in the previous sections.

\begin{proposition}\label{prop:NS}
Let $W_{\rv{Y}|\rv{X}}\in\set{P}(\set{Y}|\set{X})$ be a channel. 
For any integer $c\geq 2$, it holds that 
\begin{equation}\label{eq:NS:LP:eps:c}\begin{aligned}
\epsilon^{\fnc{NS}}(W_{\rv{Y}|\rv{X}},c) = \inf\ & \gamma \\
\fnc{s.t.}\ & \mu_\rv{XY}\in\mathbb{R}^{\set{X}\times\set{Y}} \\
& \sum_{y\in\set{Y}} \mu_\rv{XY}(x,y)\leq \gamma \quad \forall x\in\set{X}\\
& \mu_\rv{XY}(x,y)\geq \tilde{W}_{\rv{Y}|\rv{X}}(y|x) - W_{\rv{Y}|\rv{X}}(y|x) \quad\forall (x,y)\in\set{X}\times\set{Y}\\
& \tilde{W}_{\rv{Y}|\rv{X}}\in\set{P}(\set{Y}|\set{X})\\
& \tilde{W}_{\rv{Y}|\rv{X}}(y|x) \leq \zeta(y) \quad\forall (x,y)\in\set{X}\times\set{Y}\\
& \sum_{y\in\set{Y}} \zeta(y) = c
\end{aligned}.\end{equation}
Moreover, for any $\epsilon\in(0,1)$, it holds that 
\begin{align}
    \label{eq:NS:LP:c:eps}
    C^{\fnc{NS}}(W_{\rv{Y}|\rv{X}},\epsilon) &= \begin{aligned}[t]\inf\ & \ceil{\sum_{y\in\set{Y}} \zeta(y)} \\
    \fnc{s.t.}\ & \tilde{W}_{\rv{Y}|\rv{X}}\in\set{P}(\set{Y}|\set{X})\\
    & \norm{\tilde{W}_{\rv{Y}|\rv{X}} - W_{\rv{Y}|\rv{X}}}_\fnc{tvd} \leq \epsilon\\
    & \tilde{W}_{\rv{Y}|\rv{X}}(y|x) \leq \zeta(y) \quad\forall (x,y)\in\set{X}\times\set{Y}
    \end{aligned}\\
    \label{eq:NS:LP:c:eps:1}
    &= \inf_{\tilde{W}_{\rv{Y}|\rv{X}}:\norm{\tilde{W}_{\rv{Y}|\rv{X}}-W_{\rv{Y}|\rv{X}}}_\fnc{tvd}\leq\epsilon}\ \ceil{\sum_{y\in\set{Y}} \max_{x\in\set{X}} \tilde{W}_{\rv{Y}|\rv{X}}(y|x)}.
\end{align}
\end{proposition}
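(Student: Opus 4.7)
Both LP characterizations reduce to a single structural lemma about no-signaling codes, after which the LPs are straightforward rewritings.

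\textbf{Key lemma.} A channel $\tilde{W}_{\rv{Y}|\rv{X}}$ is exactly realized by a no-signaling code with a size-$c$ noiseless channel if and only if there exists $\zeta : \set{Y} \to \mathbb{R}_{\geq 0}$ with $\tilde{W}(y|x) \leq \zeta(y)$ for all $(x, y)$ and $\sum_y \zeta(y) \leq c$ (equivalently $= c$, after padding). The simulated channel is $\tilde{W}(y|x) = \sum_i N_{\rv{IY}|\rv{XJ}}(i, y | x, i)$, since the sender samples $I$ from the NS-box marginal $N_{\rv{I}|\rv{X}}(\cdot|x)$, transmits it noiselessly, and the receiver queries the box with $J = I$. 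For the converse, set $\zeta(y) \defeq \sum_i N_{\rv{Y}|\rv{J}}(y|i)$; non-negativity of $N$ combined with the NS-marginal identity $\sum_i N(i, y | x, j) = N_{\rv{Y}|\rv{J}}(y|j)$ yields $N(i, y | x, j) \leq N_{\rv{Y}|\rv{J}}(y|j)$, hence $\tilde{W}(y|x) \leq \sum_i N_{\rv{Y}|\rv{J}}(y|i) = \zeta(y)$, and $\sum_y \zeta(y) = c$ by normalization. For achievability, I would construct the NS-box explicitly (following ideas in \cite{fang2019quantum}) using an auxiliary uniform index on $[c]$ partitioned by the weights $\zeta(y)$, correlating $I$ and $Y$ so that the two no-signaling marginal conditions hold while on the diagonal $j = i$ one recovers $\tilde{W}$. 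This construction is the principal technical step.

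\textbf{Deriving the LPs.} Equation \eqref{eq:NS:LP:c:eps} is immediate from the lemma: $C^{\fnc{NS}}(W,\epsilon)$ is the infimum of $\ceil{\sum_y \zeta(y)}$ over pairs $(\tilde{W}, \zeta)$ with $\|\tilde{W} - W\|_\fnc{tvd} \leq \epsilon$ and $\tilde{W} \leq \zeta$ pointwise. Equation \eqref{eq:NS:LP:c:eps:1} then follows because for fixed $\tilde{W}$ the minimum over $\zeta$ is attained at $\zeta(y) = \max_x \tilde{W}(y|x)$. For \eqref{eq:NS:LP:eps:c}, I rewrite $\|\tilde{W}(\cdot|x) - W(\cdot|x)\|_\fnc{tvd} = \sum_y (\tilde{W}(y|x) - W(y|x))^+$ and introduce the (implicitly non-negative) slack $\mu_{\rv{XY}}(x,y)$ satisfying $\mu(x,y) \geq \tilde{W}(y|x) - W(y|x)$, minimized under $\gamma \geq \sup_x \sum_y \mu(x,y)$; the stated LP then appears, with $\sum_y \zeta(y) = c$ WLOG by inflating $\zeta$.

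\textbf{Main obstacle.} The technical heart is the explicit no-signaling box construction in the achievability direction of the key lemma; the converse and the LP re-writings are essentially bookkeeping. A secondary subtlety is handling the integer versus real-valued $c$ (the ceiling in \eqref{eq:NS:LP:c:eps} accounts for rounding up any non-integer optimum of $\sum_y \zeta(y)$ to a valid codebook size).
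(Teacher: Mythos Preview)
Your decomposition matches the paper's exactly: isolate a structural lemma characterizing which channels $\tilde W$ are exactly realizable by a size-$c$ no-signaling code (via the existence of a dominating $\zeta$ with $\sum_y \zeta(y)=c$), then obtain both LPs by routine rewriting. Your converse direction of the lemma and the LP manipulations are the same as the paper's.

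For the one step you flag as the main obstacle—the explicit no-signaling box in the achievability direction—the paper's construction is simpler than the partition-by-$\zeta$ idea you sketch. Given feasible $(\tilde W,\zeta)$ with $\sum_y\zeta(y)=c$, set
\[
N_{\rv{IY}|\rv{XJ}}(i,y\mid x,j)=\begin{cases}\tilde W_{\rv{Y}|\rv{X}}(y\mid x)/c & i=j,\\ \bigl(\zeta(y)-\tilde W_{\rv{Y}|\rv{X}}(y\mid x)\bigr)/\bigl(c(c-1)\bigr) & i\neq j,\end{cases}
\]
and verify directly that this is a non-negative (here $c\geq 2$ and $\tilde W\leq\zeta$ are used), normalized, no-signaling conditional pmf with $\sum_i N(i,y\mid x,i)=\tilde W(y\mid x)$. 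No auxiliary partitioning is needed.
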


Proposition~\ref{prop:NS} could in principle be deduced from the results in~\cite{fang2019quantum}, where the authors expressed the cost of no-signaling quantum channel simulation as a semi-definite program (cf.~\cite[App.~A]{fang2019quantum}. However, we give a direct, self-contained proof in the framework of classical information theory.
\begin{proof}[Proof of~\eqref{eq:NS:LP:eps:c}]
We start by expressing the minimum deviation (in TVD) for simulating channel $W_{\rv{Y}|\rv{X}}\in\set{P}(\set{Y}|\set{X})$ within some fixed communication cost $c\geq 2$ as the following linear program
\begin{equation}\label{eq:NS:LP:1}\begin{aligned}
\epsilon^{\fnc{NS}}(W_{\rv{Y}\vert\rv{X}},c) = \inf\ & \max_{x} \norm{W_{\rv{Y}|\rv{X}}(\cdot|x) - \sum_{i\in\set{M}}N_{\rv{IY}|\rv{XJ}}(i,\cdot|x,i)}_{\fnc{tvd}} \\
\fnc{s.t.}\ & N_{\rv{IY}|\rv{XJ}}\in\set{P}(\set{M}\times\set{Y}|\set{X}\times\set{M})\\
& \sum_{y\in\set{Y}} N_{\rv{IY}|\rv{XJ}}(i,y|x,j) = N_{\rv{I}|\rv{X}}(i|x) \quad \forall j\in\set{J}\\
& \sum_{i\in\set{M}} N_{\rv{IY}|\rv{XJ}}(i,y|x,j) = N_{\rv{Y}|\rv{J}}(y|j) \quad \forall x\in\set{X}\\
& \size{\set{M}} = c
\end{aligned}.\end{equation}
Notice that one can rewrite the TVD-distance between channels as
\begin{align}
\norm{W_{\rv{Y}|\rv{X}} - \tilde{W}_{\rv{Y}|\rv{X}}}_{\fnc{tvd}}
&= \max_{x\in\set{X}} \norm{W_{\rv{Y}|\rv{X}}(\cdot|x) - \tilde{W}_{\rv{Y}|\rv{X}}(\cdot|x)}_\fnc{tvd}
=\frac{1}{2} \max_{x\in\set{X}} \norm{W_{\rv{Y}|\rv{X}}(\cdot|x) - \tilde{W}_{\rv{Y}|\rv{X}}(\cdot|x)}_1 \\
\label{eq:eqiv:channle:tvd}
&= \inf\left\{\gamma\in\mathbb{R}\middle\vert \mu_\rv{XY}\in\mathbb{R}_{\geq 0}^{\set{X}\times\set{Y}}, \sum_{y\in\set{Y}}\mu_\rv{XY}(\cdot,y )\leq \gamma, \mu_\rv{XY}\geq \tilde{W}_{\rv{Y}|\rv{X}}-W_{\rv{Y}|\rv{X}} \right\}.
\end{align}
This enables us to rewrite~\eqref{eq:NS:LP:1} as 
\begin{equation}\label{eq:NS:LP:2}\begin{aligned}
\epsilon^{\fnc{NS}}(W_{\rv{Y}\vert\rv{X}},c) = \inf\ & \gamma \\
\fnc{s.t.}\ & \mu_\rv{XY}\in\mathbb{R}_{\geq 0}^{\set{X}\times\set{Y}} \\
& \sum_{y\in\set{Y}} \mu_\rv{XY}(x,y)\leq \gamma \quad \forall x\in\set{X}\\
& \mu_\rv{XY}(x,y)\geq \sum_{i\in\set{M}}N_{\rv{IY}|\rv{XJ}}(i,y|x,i) - W_{\rv{Y}|\rv{X}}(y|x) \quad\forall (x,y)\in\set{X}\times\set{Y}\\
& N_{\rv{IY}|\rv{XJ}}\in\set{P}(\set{M}\times\set{Y}|\set{X}\times\set{M})\\
& \sum_{y\in\set{Y}} N_{\rv{IY}|\rv{XJ}}(i,y|x,j) = N_{\rv{I}|\rv{X}}(i|x) \quad \forall j\in\set{J}\\
& \sum_{i\in\set{M}} N_{\rv{IY}|\rv{XJ}}(i,y|x,j) = N_{\rv{Y}|\rv{J}}(y|j) \quad \forall x\in\set{X}\\
& \size{\set{M}} = c
\end{aligned}.\end{equation}
It remains to show~\eqref{eq:NS:LP:2} to be equivalent to~\eqref{eq:NS:LP:eps:c}.
We first show that for any feasible $\tilde{W}_{\rv{Y}|\rv{X}}$ and $\zeta$ for~\eqref{eq:NS:LP:eps:c}, one can construct some $N_{\rv{IY}|\rv{XJ}}\in\set{P}(\set{M}\times\set{Y}|\set{X}\times\set{M})$ satisfying the conditions in~\eqref{eq:NS:LP:2} such that
\begin{equation}
\tilde{W}_{\rv{Y}|\rv{X}}(y|x) = \sum_{i\in\set{M}}N_{\rv{IY}|\rv{XJ}}(i,y|x,i) \quad \forall (x,y)\in\set{X}\times\set{Y}.
\end{equation}
This can be shown by constructing $N_{\rv{IY}|\rv{XJ}}$ as
\begin{equation}
N_{\rv{IY}|\rv{XJ}}(i,y|x,j) = \begin{cases}
\frac{\tilde{W}_{\rv{Y}|\rv{X}}(y|x)}{c} &\text{for $i=j$}\\
\frac{\zeta(y) - \tilde{W}_{\rv{Y}|\rv{X}}(y|x)}{c(c-1)} &\text{for $i\neq j$}
\end{cases}.
\end{equation}
It is straightforward to check that $\sum_{i\in\set{M}}N_{\rv{IY}|\rv{XJ}}(i,y|x,i)  = \tilde{W}_{\rv{Y}|\rv{X}}(y|x)$.
Since $\zeta(y)\geq W_{\rv{Y}|\rv{X}}(y|x)$ and $c\geq 2$ we know that $N_{\rv{IY}|\rv{XJ}}$ is nonnegative.
To verify $N_{\rv{IY}|\rv{XJ}}$ to be a conditional pmf, we have
\begin{equation}
    \sum_{i\in\set{M},y\in\set{Y}} N_{\rv{IY}|\rv{XJ}}(i,y|x,j) = \sum_y \left(\frac{\tilde{W}_{\rv{Y}|\rv{X}}(y|x)}{c} + \frac{\zeta(y) - \tilde{W}_{\rv{Y}|\rv{Y}}(y|x)}{c(c-1)}\cdot (c-1)\right)
    =\sum_{y\in\set{Y}}\frac{\zeta(y)}{c} = 1.
\end{equation}
To verify $N_{\rv{IY}|\rv{XJ}}$ to be no-signaling, we have
\begin{align}
    \sum_{y\in\set{Y}} N_{\rv{IY}|\rv{XJ}}(i,y|x,j) &= \frac{1}{c} && \forall j\in\set{M},\\
    \sum_{i\in\set{M}} N_{\rv{IY}|\rv{XJ}}(i,y|x,j) &= \frac{\tilde{W}_{\rv{Y}|\rv{X}}(y|x)}{c} + \sum_{i\neq j} \frac{\zeta(y) - \tilde{W}_{\rv{Y}|\rv{X}}(y|x)}{c(c-1)} = \frac{\zeta(y)}{c} && \forall x\in\set{X}.
\end{align}
On the other hand, suppose $N_{\rv{IY}|\rv{XJ}}$ being some feasible no-signaling conditional pmf for~\eqref{eq:NS:LP:2}, we claim that $\tilde{W}_{\rv{Y}|\rv{X}}$ and $\zeta$ defined below are feasible for~\eqref{eq:NS:LP:eps:c}:
\begin{align}
\tilde{W}_{\rv{Y}|\rv{X}}(y|x) &\defeq \sum_{i\in\set{M}}N_{\rv{IY}|\rv{XJ}}(i,y|x,i) && \forall (x,y)\in\set{X}\times\set{Y},\\
\zeta(y) &\defeq \sum_{j\in\set{M}} N_{\rv{Y}|\rv{J}}(y|j) && \forall y\in\set{Y}.
\end{align}
Since it is straightforward to check $\tilde{W}_{\rv{Y}|\rv{X}}$ to be a conditional pmf and that $\sum_y \zeta(y) = c$, it remains to verify $\tilde{W}_{\rv{Y}|\rv{X}}(y|x)\leq\zeta(y)$ for all $y\in\set{Y}$, which is indeed the case since
\begin{equation}
\tilde{W}_{\rv{Y}|\rv{X}}(y|x)= \sum_{i\in\set{M}}N_{\rv{IY}|\rv{XJ}}(i,y|x,i)
\leq \sum_{i,j\in\set{M}}N_{\rv{IY}|\rv{XJ}}(i,y|x,j) = \sum_{j\in\set{M}} N_{\rv{Y}|\rv{J}}(y|j) = \zeta(y).
\end{equation}
Combining the above two arguments, we have established the equivalence between~\eqref{eq:NS:LP:2} and~\eqref{eq:NS:LP:eps:c}.
\end{proof}
\begin{proof}[Proof of~\eqref{eq:NS:LP:c:eps}]
As a direct result of~\eqref{eq:NS:LP:eps:c}, we know a tolerance-cost pair $(\epsilon,c)$ (where the integer $c\geq 2$) is attainable for if and only if their exists some $\mu_\rv{XY}\in\mathbb{R}_{\geq 0}^{\set{X}\times\set{Y}}$ and $\tilde{W}_{\rv{Y}|\rv{X}}\in\set{P}(\set{Y}|\set{X})$ such that 
\begin{align}
\sum_{y\in\set{Y}}\mu_\rv{XY}(x,y)&\leq \epsilon &&\forall x\in\set{X}\\
\mu_\rv{XY}(x,y)&\geq \tilde{W}_{\rv{Y}|\rv{X}}(y|x) - W_{\rv{Y}|\rv{X}}(y|x) &&\forall (x,y)\in\set{X}\times\set{Y}\\
\tilde{W}_{\rv{Y}|\rv{X}}(y|x) &\leq \zeta(y) &&\forall (x,y)\in\set{X}\times\set{Y}\\
\sum_{y\in\set{Y}} \zeta(y) &= c.
\end{align}
Thus, for a given deviation $\epsilon$ (in TVD), one can express the minimal communication cost achieving such deviation as the following optimization problem.
\begin{equation}\label{eq:cost:NS:1}
\begin{aligned}
    C^{\fnc{NS}}(W_{\rv{Y}|\rv{X}},\epsilon) = \inf\ & \ceil{\sum_{y\in\set{Y}} \zeta(y)} \\
    \fnc{s.t.}\ & \mu_\rv{XY}\in\mathbb{R}^{\set{X}\times\set{Y}} \\
    & \sum_{y\in\set{Y}} \mu_\rv{XY}(x,y)\leq \epsilon \quad \forall x\in\set{X}\\
    & \mu_\rv{XY}(x,y)\geq \tilde{W}_{\rv{Y}|\rv{X}}(y|x) - W_{\rv{Y}|\rv{X}}(y|x) \quad\forall (x,y)\in\set{X}\times\set{Y}\\
    & \tilde{W}_{\rv{Y}|\rv{X}}\in\set{P}(\set{Y}|\set{X})\\
    & \tilde{W}_{\rv{Y}|\rv{X}}(y|x) \leq \zeta(y) \quad\forall (x,y)\in\set{X}\times\set{Y}
\end{aligned}.\end{equation}
Using~\eqref{eq:eqiv:channle:tvd}, we can simplify the above linear program into~\eqref{eq:NS:LP:c:eps}.
\end{proof}
\par
In the following, we show how $C^{\fnc{NS}}(W_{\rv{Y}|\rv{X}},\epsilon)$ as in~\eqref{eq:NS:LP:c:eps:1} is related to the cost of the random-assisted channel simulation.
In particular, we have the following theorem.
\begin{theorem}\label{thm:NS}
Let $W_{\rv{Y}|\rv{X}}$ be a channel from $\set{X}$ to $\set{Y}$, and let $\epsilon\in(0,1)$.
Let $C^{\fnc{NS}}(W_{\rv{Y}|\rv{X}},\epsilon)$ denote the minimal attainable size of no-signaling $\epsilon$-simulation codes for $W_{\rv{Y}|\rv{X}}$, then
\begin{equation}\label{eq:cost:NS:4}
    \log{C^{\fnc{NS}}(W_{\rv{Y}|\rv{X}},\epsilon)} = \ceil{I_{\max}^\epsilon(W_{\rv{Y}|\rv{X}})}_{\log{\mathbb{Z}_{>0}}}
\end{equation}
where for any $r\in\mathbb{R}$ we denote $\ceil{r}_{\log{\mathbb{Z}_{>0}}}$ the smallest number in $\log{\mathbb{Z}_{>0}}\defeq\left\{\log{a}:a\in\mathbb{Z}, a>0\right\}$ that is no smaller than $r$.
\end{theorem}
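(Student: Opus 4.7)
The plan is to evaluate the linear program of Proposition~\ref{prop:NS} in closed form and match it to $I_{\max}^\epsilon(W_{\rv{Y}|\rv{X}})$. First I would establish the pointwise identity
\[
I_{\max}(\tilde{W}_{\rv{Y}|\rv{X}}) \,=\, \log \sum_{y\in\set{Y}} \max_{x\in\set{X}} \tilde{W}_{\rv{Y}|\rv{X}}(y|x) \qquad \forall\, \tilde{W}_{\rv{Y}|\rv{X}} \in \set{P}(\set{Y}|\set{X}).
\]
Writing $g(y) \defeq \max_x \tilde{W}_{\rv{Y}|\rv{X}}(y|x)$ and recalling from~\eqref{eq:maxmutual} that $I_{\max}(\tilde{W}_{\rv{Y}|\rv{X}}) = \inf_{q_\rv{Y}} \log \max_y g(y)/q_\rv{Y}(y)$, the upper bound comes from plugging in the normalized candidate $q_\rv{Y}(y) = g(y)/\sum_{y'} g(y')$. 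The matching lower bound is the elementary observation that any pmf $q_\rv{Y}$ with $\max_y g(y)/q_\rv{Y}(y) < \sum_{y'} g(y')$ would satisfy $q_\rv{Y}(y) > g(y)/\sum_{y'} g(y')$ pointwise and sum to strictly more than one, a contradiction.

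Second, substituting this identity into the second form~\eqref{eq:NS:LP:c:eps:1} of Proposition~\ref{prop:NS} gives
\[
C^{\fnc{NS}}(W_{\rv{Y}|\rv{X}},\epsilon) \,=\, \inf_{\tilde{W}_{\rv{Y}|\rv{X}}:\, \norm{\tilde{W}_{\rv{Y}|\rv{X}} - W_{\rv{Y}|\rv{X}}}_\fnc{tvd} \leq \epsilon}\, \left\lceil 2^{I_{\max}(\tilde{W}_{\rv{Y}|\rv{X}})}\right\rceil.
\]
Third, I would exchange infimum and ceiling using compactness. The feasible set of smoothed channels is a closed, bounded subset of the product simplex, hence compact; the objective $\tilde{W}_{\rv{Y}|\rv{X}} \mapsto \sum_y \max_x \tilde{W}_{\rv{Y}|\rv{X}}(y|x)$ is continuous, so its infimum is attained at some $\tilde{W}^\star_{\rv{Y}|\rv{X}}$ with $I_{\max}(\tilde{W}^\star_{\rv{Y}|\rv{X}}) = I_{\max}^\epsilon(W_{\rv{Y}|\rv{X}})$. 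Attainment combined with monotonicity of $\ceil{\cdot}$ then yields $\inf_{\tilde{W}} \ceil{2^{I_{\max}(\tilde{W})}} = \ceil{2^{I_{\max}^\epsilon(W_{\rv{Y}|\rv{X}})}}$.

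Finally, unwinding the definition, $\ceil{r}_{\log \mathbb{Z}_{>0}} = \min\{ \log k : k \in \mathbb{Z}_{>0},\, k \geq 2^r\} = \log \ceil{2^r}$, so taking logarithms of both sides produces the stated identity~\eqref{eq:cost:NS:4}. The main obstacle is really just the first step, verifying the closed-form expression for $I_{\max}$ of a fixed conditional distribution; once that is in hand, the remaining two steps (substitution and the inf/ceiling exchange via compactness) are routine, and the final rewriting in $\log \mathbb{Z}_{>0}$-form is immediate.
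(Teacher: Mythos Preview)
Your proposal is correct and follows essentially the same route as the paper: the paper also starts from~\eqref{eq:NS:LP:c:eps:1}, swaps the infimum with the ceiling via compactness and continuity, and then invokes the identity $\inf_{q_\rv{Y}}\max_y g(y)/q_\rv{Y}(y)=\sum_y g(y)$ (packaged there as Lemma~\ref{lem:inf:max}) to recognize $I_{\max}^\epsilon$. Your inline argument for that identity is the same contradiction the paper gives in Appendix~\ref{app:proof:lem:inf:max}; just note that the ``pointwise'' strict inequality need only hold on $\{y:g(y)>0\}$, which is already enough for the contradiction.
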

Compare the RHS of~\eqref{eq:cost:NS:4}  with~\eqref{eq:converse:bound}.
We see that the task of no-signaling channel simulations is no more difficult than the random-assisted channel simulation.
Since, in the $n$-fold case for the random-assisted channel simulation, the converse bounds as in~\eqref{eq:converse:bound} is tight up to the second order expansion term, no-signaling channel simulations must also have the same first and second order term in the expansions.
To prove Theorem~\ref{thm:NS}, we need the following lemma.
\begin{lemma}\label{lem:inf:max}
Let $\set{X}$ be a finite set, and $f$ be a non-negative valued function on $\set{X}$.
It holds that
\begin{equation}
    \inf_{p\in\set{P}(\set{X})} \max_{x\in\set{X}} \frac{f(x)}{p(x)} = \sum_{x\in\set{X}} f(x).
\end{equation}
\end{lemma}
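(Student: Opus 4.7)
The plan is to prove both inequalities by a standard weighted-average / matched-ratio argument, treating the degenerate cases carefully.

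First, for the upper bound ($\inf \leq \sum f$), I would exhibit a near-optimal $p$. If $f\equiv 0$ the claim is trivial, so assume $S\defeq \sum_x f(x) > 0$. The natural choice is $p(x) = f(x)/S$, which makes the ratio $f(x)/p(x) = S$ constant on the support of $f$. The only nuisance is $x$ with $f(x)=0$, where the ratio is of the form $0/0$. To avoid adopting a convention, I would instead use the smoothed distribution $p_\eta(x) \defeq (1-\eta)\tfrac{f(x)}{S} + \tfrac{\eta}{|\set{X}|}$ for $\eta\in(0,1)$, which has full support. Then for every $x\in\supp(f)$, $f(x)/p_\eta(x) \leq S/(1-\eta)$, and for $x\notin\supp(f)$ the ratio is $0$. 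Taking $\eta\to 0$ gives $\inf_p \max_x f(x)/p(x) \leq S$.

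Second, for the lower bound ($\inf \geq \sum f$), I would use the fact that a maximum is at least any convex combination. Fix an arbitrary $p\in\set{P}(\set{X})$. If there is some $x$ with $f(x)>0$ but $p(x)=0$, then $\max_x f(x)/p(x) = +\infty \geq S$ and we are done. Otherwise $\supp(f)\subseteq\supp(p)$, and
\begin{equation}
\max_{x\in\set{X}} \frac{f(x)}{p(x)} \;\geq\; \sum_{x\in\supp(p)} p(x)\cdot \frac{f(x)}{p(x)} \;=\; \sum_{x\in\supp(p)} f(x) \;=\; \sum_{x\in\set{X}} f(x) = S,
\end{equation}
where the last equality uses $f(x)=0$ off $\supp(p)$.

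Combining the two bounds yields the claimed equality. There is no real obstacle here beyond bookkeeping around zeros of $f$ and $p$; the smoothing $p_\eta$ handles the achievability side cleanly, and the converse is a one-line max-vs-average inequality. This lemma will then plug into the proof of Theorem~\ref{thm:NS} by rewriting $\sum_y \max_x \tilde W_{\rv Y|\rv X}(y|x)$ as $\inf_{q_\rv Y}\max_{x,y}\tilde W_{\rv Y|\rv X}(y|x)/q_\rv Y(y)$, i.e.\ as $2^{\max_x D_{\max}(\tilde W_{\rv Y|\rv X}(\cdot|x)\|q_\rv Y)}$ minimized over $q_\rv Y$, thereby connecting the LP expression in~\eqref{eq:NS:LP:c:eps:1} with $I_{\max}^\epsilon(W_{\rv Y|\rv X})$.
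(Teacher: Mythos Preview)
Your proof is correct. The upper-bound half is essentially the same as the paper's: both smooth $p^\star \propto f$ by moving a vanishing amount of mass onto $\{x:f(x)=0\}$ to avoid the $0/0$ issue, then let the perturbation go to zero.

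The lower-bound half is where you diverge, and your route is cleaner. The paper argues by contradiction: if some $p$ achieved $\max_x f(x)/p(x) < S$, then $p(x) > p^\star(x)$ on $\supp(f)$, and after handling the zero set via a limit argument one gets $p(x)\geq p^\star(x)$ everywhere with strict inequality somewhere, contradicting $\sum_x p(x)=1$. You instead use the one-line observation that the maximum of $f(x)/p(x)$ over $\supp(p)$ dominates its $p$-weighted average $\sum_{x\in\supp(p)} f(x) = S$, with the boundary case $f(x)>0,\,p(x)=0$ dispatched separately. This avoids the case split on whether $f$ is strictly positive and the accompanying limit argument, at the cost of nothing; it is the standard ``max $\geq$ mean'' proof and is arguably what one should write here.
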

\begin{proof}
See Appendix~\ref{app:proof:lem:inf:max}.
\end{proof}
\begin{proof}[Proof of Theorem~\ref{thm:NS}]
Let $W_{\rv{Y}|\rv{X}}$ be a channel from $\set{X}$ to $\set{Y}$, and let $\epsilon\in(0,1)$.
Starting from~\eqref{eq:NS:LP:c:eps:1}, we have the following chain of identities:
    \begin{equation}
        C^{\fnc{NS}}(W_{\rv{Y}|\rv{X}},\epsilon) = \inf_{\tilde{W}_{\rv{Y}|\rv{X}}:\norm{\tilde{W}_{\rv{Y}|\rv{X}}-W_{\rv{Y}|\rv{X}}}_\fnc{tvd}\leq\epsilon}\ \ceil{\sum_{y\in\set{Y}} \max_{x\in\set{X}} \tilde{W}_{\rv{Y}|\rv{X}}(y|x)} \hspace{147pt} \text{copied from~\eqref{eq:NS:LP:c:eps:1}}
    \end{equation}
    \begin{equation}\label{eq:thm:NS:1}
        \phantom{C^{\fnc{NS}}(W_{\rv{Y}|\rv{X}},\epsilon)} = \ceil{ \inf_{\tilde{W}_{\rv{Y}|\rv{X}}: \norm{\tilde{W}_{\rv{Y}|\rv{X}} - W_{\rv{Y}|\rv{X}}}_\fnc{tvd}\leq\epsilon}\ \sum_{y\in\set{Y}} \max_{x\in\set{X}} \tilde{W}_{\rv{Y}|\rv{X}}(y|x)} \hspace{146pt} \text{see the text below}
    \end{equation}
    \begin{equation}
        \phantom{C^{\fnc{NS}}(W_{\rv{Y}|\rv{X}},\epsilon)} = \ceil{\multiadjustlimits{
            \inf_{\tilde{W}_{\rv{Y}|\rv{X}}:\norm{\tilde{W}_{\rv{Y}|\rv{X}}-W_{\rv{Y}|\rv{X}}}_\fnc{tvd}\leq\epsilon}, 
            \inf_{q_\rv{Y}\in\set{P}(\set{Y})}, 
            \max_{y\in\set{Y}}
        } \frac{\max_{x\in\set{X}} \tilde{W}_{\rv{Y}|\rv{X}}(y|x)}{q_\rv{Y}(y)}} \hspace{111pt} \text{by Lemma~\ref{lem:inf:max}}
    \end{equation}
    \begin{equation}\label{eq:thm:NS:2}
        \phantom{C^{\fnc{NS}}(W_{\rv{Y}|\rv{X}},\epsilon)} = \ceil{\multiadjustlimits{
            \inf_{q_\rv{Y}\in\set{P}(\set{Y})}, 
            \inf_{\tilde{W}_{\rv{Y}|\rv{X}}:\norm{\tilde{W}_{\rv{Y}|\rv{X}}-W_{\rv{Y}|\rv{X}}}_\fnc{tvd}\leq\epsilon}, 
            \max_{x\in\set{X}}
        } 2^{D_{\max}\infdiv*{\tilde{W}_{\rv{Y}|\rv{X}}(\cdot|x)}{q_\rv{Y}}}} \hspace{67pt} \text{by the definition of } D_{\max}
    \end{equation}
    \begin{equation}
        \phantom{C^{\fnc{NS}}(W_{\rv{Y}|\rv{X}},\epsilon)} = \ceil{2^{\multiadjustlimits{
            \inf_{q_\rv{Y}\in\set{P}(\set{Y})}, 
            \inf_{\tilde{W}_{\rv{Y}|\rv{X}}:\norm{\tilde{W}_{\rv{Y}|\rv{X}} - W_{\rv{Y}|\rv{X}}}_\fnc{tvd}\leq\epsilon}, 
            \max_{x\in\set{X}}
        } D_{\max}\infdiv*{\tilde{W}_{\rv{Y}|\rv{X}}(\cdot|x)}{q_\rv{Y}}}} \hspace{69pt} x\mapsto 2^x \text{ is strictly increasing}
    \end{equation}
Here,~\eqref{eq:thm:NS:1} is due to the compactness of $\left\{\tilde{W}_{\rv{Y}|\rv{X}}: \norm{\tilde{W}_{\rv{Y}|\rv{X}} - W_{\rv{Y}|\rv{X}}}_\fnc{tvd}\leq\epsilon\right\}$ and the continuity of the function $\tilde{W}_{\rv{Y}|\rv{X}} \mapsto \sum_{y\in\set{Y}} \max_{x\in\set{X}} \tilde{W}_{\rv{Y}|\rv{X}}(y|x)$.
In particular, for any compact set $\set{A}\subset\mathbb{R}^n$ and any continuous function $f:\mathbb{R}^n\to\mathbb{R}$, we have 
$\ceil{\inf_{x\in\set{A}} f(x)} = \ceil{\min_{x\in\set{A}} f(x)} = \ceil{f(x^\star)} = \min_{x\in\set{A}} \ceil{f(x)}$
where $x^\star$ is a minimizer of both $f(x)$ and $\ceil{f(x)}$ in $\set{A}$.
Eq.~\eqref{eq:thm:NS:2} utilizes the fact that $x\mapsto 2^x$ is a strictly increasing function.
Notice that the logarithmic function is increasing.
Hence, we can take the logarithm on both sides of the last inequality and get 
\begin{equation*}
\log{C^{\fnc{NS}}(W_{\rv{Y}|\rv{X}},\epsilon)}
= \ceil{\multiadjustlimits{
    \inf_{q_\rv{Y}\in\set{P}(\set{Y})}, 
    \inf_{\tilde{W}_{\rv{Y}|\rv{X}}:\norm{\tilde{W}_{\rv{Y}|\rv{X}} - W_{\rv{Y}|\rv{X}}}_\fnc{tvd}\leq\epsilon}, 
    \max_{x\in\set{X}}
    } D_{\max}\infdiv*{\tilde{W}_{\rv{Y}|\rv{X}}(\cdot|x)}{q_\rv{Y}}}_{\log{\mathbb{Z}_{>0}}}
= \ceil{I_{\max}^\epsilon(W_{\rv{Y}|\rv{X}})}_{\log{\mathbb{Z}_{>0}}} . \qedhere
\end{equation*}
\end{proof}


\subsection*{Example: $n$-fold binary symmetric channel}

We now investigate the case of simulating $n$ copies of a binary symmetric channel with crossover probability $\delta$.
In this case, the linear programs \eqref{eq:NS:LP:eps:c} and \eqref{eq:cost:NS:1} simplify considerably.
We have
\begin{align}
     W_{\rv{Y}^n|\rv{X}^n} = \begin{pmatrix}
     1-\delta & \delta \\ \delta & 1-\delta
     \end{pmatrix}^{\otimes n}   
\end{align}
For $n, k\in \mathbb{N}$ and $0\leq k\leq n$, let $P^n_k(I, X)$ refer to the binary matrix obtained by the sum of all $n$-fold tensor factors of $I$ and $X$, where $I = \left(\begin{smallmatrix}1&0\\0&1\end{smallmatrix}\right)$ and $X = \left(\begin{smallmatrix}0&1\\1&0\end{smallmatrix}\right)$, with exactly $k$ factors of $X$.
For instance, $P^3_1(I, X) = I\otimes I\otimes X + I\otimes X\otimes I + X\otimes I\otimes I$.
We make the following observations
\begin{itemize}
    \item $W_{\rv{Y}^n|\rv{X}^n} = ((1-\delta)I + \delta X)^{\otimes n} = \sum_{k=0}^n (1-\delta)^{n-k}\delta^k \cdot P^n_k(I, X)$.
    \item The symmetries of $W_{\rv{Y}^n|\rv{X}^n}$ allow us to choose $\tilde{W}_{\rv{Y}^n|\rv{X}^n} = \sum_{k=0}^n r_k P^n_k(I, X)$, $p_{\rv{XY}} = \sum_{k=0}^n y_k P^n_k(I, X)$ and $\zeta(y) = s$, where $s$ is some constant without loss of generality.
    \item The set of matrices $\{P^n_1(I,X), P^n_2(I,X),..., P^n_n(I,X)\}\}$ form a ``partition" of the entries of a $2^n\times 2^n$ matrix. That is, $\sum_{k=0}^n P^n_k(I,X) = J_{2^n}$, where $J_m$ is the $m\times m$ matrix with all entries equal to $1$. Additionally, since for each $k$, $P^n_k(I,X)$ is a $2^n\times 2^n$ binary matrix, it also holds that if $P^n_k(I,X)(i,j) = 1$, then $P^n_{k'}(I,X)(i,j) = 0$ for all $k\neq k'$ and for all $i,j$.
\end{itemize}

These observations allow us to simplify the linear program in \eqref{eq:NS:LP:eps:c} for an $n$-fold binary symmetric channel with crossover probability $\delta$ as follows.
\begin{equation}
\label{eq:NS:LP:3:symmetrized}
\begin{aligned}
\epsilon^{\fnc{NS}}(W_{\rv{Y}^n|\rv{X}^n},c) = \inf\ & \gamma \\
\fnc{s.t.}\ & y_k \geq 0\quad \forall k\\
& \sum_{k=0}^n \binom{n}{k} y_k\leq \gamma\\
& y_k\geq r_k - (1-\delta)^{n-k}\delta^k \quad\forall k\\
& \sum_{k=0}^n \binom{n}{k} r_k= 1\\
& 0\leq r_k \leq s \quad\forall k\\
& 2^n s = c.
\end{aligned}
\end{equation}

Similarly, we can simplify \eqref{eq:cost:NS:1} as follows. 

\begin{equation}
\label{eq:cost:NS:3:symmetrized}
\begin{aligned}
    C^{\fnc{NS}}(W_{\rv{Y}^n|\rv{X}^n},\epsilon) = \inf & \ceil{2^n s} \\
    \fnc{s.t.}\ & y_k \geq 0\ \quad \forall k\\
    & \sum_{k=0}^n \binom{n}{k} y_k\leq \epsilon \\
    & y_k\geq r_k - (1-\delta)^{n-k}\delta^k \quad\forall k\\
    & 0\leq r_k \leq s \quad\forall k\\
    & \sum_{k=0}^n \binom{n}{k} r_k= 1
    \end{aligned}
\end{equation}

In Fig.~\ref{fig:BSC:NS:sim-intro} in the introduction, we solve the linear program in \eqref{eq:cost:NS:3:symmetrized} for a binary symmetric channel with $\delta = 0.1$ and $\epsilon = 0.05$.
The second-order approximation of the simulation cost, the capacity of the channel and a finite blocklength converse bound on the capacity are also shown.


\section{Generalization to Broadcast Channels}
\label{sec:broadcast_channels}

In this section, we consider the problem of simulating a broadcast channel using identity channels from the sender to each of the receivers assisted by unconstrained shared randomness between the sender and each of the receivers.
For better readability, we restrict our discussion to bipartite broadcast channels, \ie, with two receivers.
All discussions in this section can be generalized to general broadcast channels with arbitrary (but finite) number of receivers.
Note that some of the important techniques employed in this section are either pre-existing ones (\eg, Lemma~\ref{lem:convex:split:bipartite}) or their generalizations (\eg, Lemma~\ref{lem:common:non-lockability} and Lemma~\ref{lem:Ds:2}).
Fortunately, this does not prevent us from obtaining new one-shot bounds and interesting asymptotic results.

Let $W_{\rv{YZ}|\rv{X}}\in\set{P}(\set{Y}\times\set{Z}|\set{X})$ be a broadcast channel.
We are interested in constructing a channel $\tilde{W}_{\rv{YZ}|\rv{X}}\in\set{P}(\set{Y}\times\set{Z}|\set{X})$ such that
\begin{equation}\label{eq:boardcast:simulation:epsilon}
    \norm{\tilde{W}_{\rv{YZ}|\rv{X}}-W_{\rv{YZ}|\rv{X}}}_\fnc{tvd} =\sup_{x\in\set{X}}\norm{\tilde{W}_{\rv{YZ}|\rv{X}}(\cdot,\cdot|x) - W_{\rv{YZ}|\rv{X}}(\cdot,\cdot|x)}_\fnc{tvd} \leq \epsilon
\end{equation}
using a pair of identity channels $\ID_M$ and $\ID_N$ from the sender to each of the two receivers in the random-assisted scenario (also see Fig.~\ref{fig:broadcast:simulation}).
\begin{figure}
\centering\includegraphics{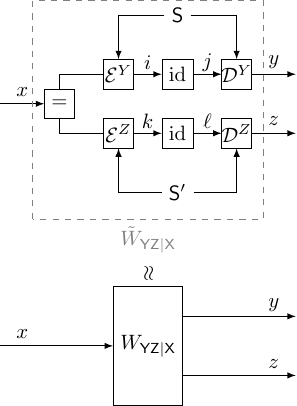}
\caption{The task of simulating a bipartite broadcast channel using shared randomness and a pair of identity channels.}
\label{fig:broadcast:simulation}
\end{figure}
In particular, $\tilde{W}_{\rv{YZ}|\rv{Z}}$ is constructed as
\begin{align}
\tilde{W}_{\rv{YZ}|\rv{X}}(y,z|x) &= \!\sum_{s\in\set{S} \atop s'\in\set{S}'}\!\! p_\rv{S}(s)\!\cdot\! p_{\rv{S}'}(s') \!\cdot\!\!\! \sum_{i,j\in[M] \atop k,\ell\in[N]} \!\!\!\mathcal{E}^Y_{\rv{M}|\rv{XS}}(i|x,s) \!\cdot\! \mathcal{E}^Z_{\rv{N}|\rv{XS}'}(k|x,s') \!\cdot\! \ID_M(j|i) \!\cdot\! \ID_{N}(\ell|k) \!\cdot\!\mathcal{D}^Y_{\rv{Y}|\rv{MS}}(y|j,s) \!\cdot\! \mathcal{D}^Z_{\rv{Z}|\rv{NS}'}(z|\ell,s')\\
&= \sum_{s\in\set{S} \atop s'\in\set{S}'} p_\rv{S}(s)\cdot p_{\rv{S}'}(s') \cdot \sum_{m\in[M], n\in[N]} \mathcal{E}^Y_{\rv{M}|\rv{XS}}(m|x,s) \cdot \mathcal{E}^Z_{\rv{N}|\rv{XS}'}(n|x,s')\cdot \mathcal{D}^Y_{\rv{Y}|\rv{MS}}(y|m,s) \cdot \mathcal{D}^Z_{\rv{Z}|\rv{NS}'}(z|n,s')
\end{align}
where $\mathcal{E}^Y_{\rv{M}|\rv{XS}}\in\set{P}([M]|\set{X}\times\set{S})$, $\mathcal{E}^Z_{\rv{N}|\rv{XS}'}\in\set{P}([N]|\set{X}\times\set{S}')$ and $\mathcal{D}^Y_{\rv{Y}|\rv{M}\rv{S}}\in\set{P}(\set{Y}|[M]\times\set{S})$, $\mathcal{D}^Z_{\rv{Z}|\rv{N}\rv{S}'}\in\set{P}(\set{Z}|[N]\times\set{S}')$ are some randomized encoding and decoding maps on the sender and the receivers side, respectively; and $\rv{S}$ and $\rv{S}'$ are some random variables shared between the sender and each of the two receivers.

Similar to the point-to-point case, if~\eqref{eq:boardcast:simulation:epsilon} is satisfied, we refer to $(\mathcal{E}_{\rv{M}|\rv{XS}}^Y, \mathcal{E}_{\rv{N}|\rv{XS}'}^Z, \mathcal{D}_{\rv{Y}|\rv{MS}}^Y, \mathcal{D}_{\rv{Z}|\rv{NS}'}^Z, \rv{S}, \rv{S}')$ as a size-$(M,N)$ $\epsilon$-simulation code for $W_{\rv{YZ}|\rv{X}}$.
On the other hand, an integer pair $(M,N)$ is said to be $\epsilon$-attainable if there exists a size-$(M,N)$ $\epsilon$-simulation code for $W_{\rv{YZ}|\rv{X}}$.
A fundamental question is to characterize the set of all $\epsilon$-attainable pairs, denoted by $\set{M}^\star_\epsilon(W_{\rv{YZ}|\rv{X}})$.
As a first observation, if $(M,N)$ is attainable, so is $(M^+,N^+)$ for any integers $M^+\geq M$ and $N^+\geq N$.
In analogy to the upper and lower bounds of the minimum size of attainable messages as in the previous sections devoted to point-to-point cases, in this section, the achievability and converse bounds are presented as subset and superset of $\set{M}^\star_\epsilon(W_{\rv{YZ}|\rv{X}})$ instead.


\subsection{One-Shot Achievability Bound}
In the following, we present a protocol (see Proposition~\ref{prop:1abfsbc}) to show an one-shot achievability bound for simulating bipartite broadcast channel.
The protocol is a direct generalization of our earlier work in the point-to-point case~\cite{cao2022one-shot}, and is based on the bipartite convex split lemma as stated below.
\begin{lemma}[Bipartite Convex Split Lemma~{\cite[Fact~7, modified]{anshu2017unified}}]
    \label{lem:convex:split:bipartite}
    Let $\epsilon, \delta_1,\delta_2,\delta_3\in(0,1)$, such that $\delta_1^2+\delta_2^2+\delta_3^2 \leq \epsilon^2$.
    Let $(\rv{X},\rv{Y},\rv{Z})$ be distributed jointly on $\set{X}\times\set{Y}\times\set{Z}$ with pmf $p_{\rv{XYZ}}$.
    Let $q_\rv{Y}$ and $r_\rv{Z}$ be two pmfs over $\set{Y}$ and $\set{Z}$, respectively.
    Let $M$ and $N$ be two positive integers such that \begin{align}
    \label{eq:bcsl:anshu:1}
    \log{M} &\geq D_{s+}^{\epsilon_1}\infdiv*{p_\rv{XY}}{p_\rv{X}\times q_\rv{Y}} - \log{\delta_1^2},\\
    \label{eq:bcsl:anshu:2}
    \log{N} &\geq D_{s+}^{\epsilon_2}\infdiv*{p_\rv{XZ}}{p_\rv{X}\times r_\rv{Z}} - \log{\delta_2^2},\\
    \label{eq:bcsl:anshu:3}
    \log{M} + \log{N} &\geq D_{s+}^{\epsilon_3}\infdiv*{p_\rv{XYZ}}{p_\rv{X}\times q_\rv{Y} \times r_\rv{Z}} - \log{\delta_3^2},
    \end{align}
    for some $\epsilon_1, \epsilon_2, \epsilon_3\in(0,1)$ such that $\epsilon_1+\epsilon_2+\epsilon_3\leq \epsilon-\sqrt{\delta_1^2+\delta_2^2+\delta_2^3}$.
    Let $\rv{J}$ and $\rv{K}$ be uniformly distributed independently in $\{1,\ldots,M\}$ and $\{1,\ldots,N\}$, respectively.
    Let the joint random variables $(\rv{J},\rv{K},\rv{X},\rv{Y}_1,\ldots,\rv{Y}_M,\rv{Z}_1,\ldots,\rv{Z}_N)$ be distributed according to
    \begin{equation}\label{eq:lemma:convex:split}
    p_{\rv{X},\rv{Y}_1,\ldots,\rv{Y}_M,\rv{Z}_1,\ldots,\rv{Z}_N|\rv{J},\rv{K}}(x,y_1,\ldots,y_M,z_1,\ldots,z_N|j,k) = p_{\rv{XYZ}}(x,y_j,z_k)\cdot \prod_{i\neq j} q_\rv{Y}(y_i) \cdot \prod_{\ell\neq k} r_\rv{Z}(z_\ell).
    \end{equation}
    Then
    \begin{equation}\label{eq:convex:split:lemma:anshu}
    \norm{p_{\rv{X},\rv{Y}_1,\ldots,\rv{Y}_M,\rv{Z}_1,\ldots,\rv{Z}_N} - p_\rv{X}\times q_{\rv{Y}_1} \times \cdots \times q_{\rv{Y}_M} \times r_{\rv{Z}_1} \times \cdots \times r_{\rv{Z}_N}}_\fnc{tvd} \leq \epsilon.
    \end{equation}
\end{lemma}
\begin{proof}
This lemma can be proven via a rather straightforward generalization of the proof for~\cite[Fact~7]{anshu2017unified}.
For completeness, we include the proof in Appendix~\ref{app:proof:lem:convex:split:bipartite}.
\end{proof}
\begin{proposition}\label{prop:1abfsbc}
    Let $W_{\rv{YZ}|\rv{X}}\in\set{P}(\set{Y}\times\set{Z}|\set{X})$ be a bipartite broadcast channel, and let $\epsilon\in(0,1)$.
    The following set is a subset of $\set{M}^\star_\epsilon(W_{\rv{YZ}|\rv{X}})$:
    \begin{equation}\label{eq:broadcast:achievability}
    \set{M}^\text{in}_\epsilon(W_{\rv{YZ}|\rv{X}}) \defeq \left\{ (M,N)\in\mathbb{Z}_{>0}^2 \middle\vert 
    \begin{aligned}
    \log{M} &> D_{s+}^{\epsilon_1-\delta_1}\infdiv*{p_\rv{X}\cdot W_{\rv{Y}|\rv{X}}}{p_\rv{X}\times q_\rv{Y}} - \log{\delta_1^2} \\
    \log{N} &> D_{s+}^{\epsilon_2-\delta_2}\infdiv*{p_\rv{X}\cdot W_{\rv{Z}|\rv{X}}}{p_\rv{X}\times r_\rv{Z}} - \log{\delta_2^2} \\
    \log{MN} &> D_{s+}^{\epsilon_3-\delta_3}\infdiv*{p_\rv{X}\cdot W_{\rv{YZ}|\rv{X}}}{p_\rv{X}\times q_\rv{Y}\times r_\rv{Z}} - \log{\delta_3^2}
    \end{aligned}
    \quad\forall p_\rv{X}\in\set{P}(\set{X}),
    \right\}
    \end{equation}
    where $\epsilon_1,\epsilon_2,\epsilon_3$ are positive numbers such that $\epsilon_1+\epsilon_2+\epsilon_3\leq\epsilon$, $\delta_1\in(0,\epsilon_1)$, $\delta_2\in(0,\epsilon_2)$, $\delta_3\in(0,\epsilon_3)$, $q_\rv{Y}\in\set{P}(\set{Y})$, $r_\rv{Z}\in\set{P}(\set{Z})$, and the reduced channels $W_{\rv{Y}|\rv{X}}$ and $W_{\rv{Z}|\rv{X}}$ are defined as
    \begin{equation}
        W_{\rv{Y}|\rv{X}}(y|x) \defeq \sum_{z\in\set{Z}} W_{\rv{YZ}|\rv{X}}(y,z|x), \quad
        W_{\rv{Z}|\rv{X}}(y|x) \defeq \sum_{y\in\set{Y}} W_{\rv{YZ}|\rv{X}}(y,z|x).
    \end{equation}
\end{proposition}
\begin{proof}
For arbitrary $q_\rv{Y}\in\set{P}(\set{Y})$, $r_\rv{Z}\in\set{P}(\set{Z})$, we present a protocol for simulating $W_{\rv{YZ}|\rv{X}}$ by sending messages with alphabet sizes $M$ and $N$ to each of the receivers, respectively, where $(M,N)$ is any integer pair satisfying the conditions on the RHS of~\eqref{eq:broadcast:achievability}.
The protocol is as follows:
\begin{enumerate}
\item Let the sender and first receiver share i.i.d. random variables $(\rv{Y}_1,\ldots,\rv{Y}_M)$ where $\rv{Y}_k\sim q_\rv{Y}$ for each $k$.
\item Let the sender and second receiver share i.i.d. random variables $(\rv{Z}_1,\ldots,\rv{Z}_N)$ where $\rv{Z}_k\sim r_\rv{Z}$ for each $k$.
\item Upon receiving input $\rv{X}=x$, the sender generates a pair of random variables $\rv{J},\rv{K}$ (distributed on $\{1,\ldots,M\}\times\{1,\ldots,N\}$) according to the following conditional pmf
\begin{equation}
\tilde{p}_{\rv{J},\rv{K}|\rv{X},\rvs{Y}_1^M,\rvs{Z}_1^N}(j,k|x,\mathbf{y}_1^M,\mathbf{z}_1^N) \defeq \frac{W_{\rv{YZ}|\rv{X}}(y_j,z_k|x)\cdot \prod_{i\neq j} q_\rv{Y}(y_i) \cdot \prod_{\ell\neq k} r_\rv{Z}(z_\ell)}{\sum_{j'=1}^M \sum_{k'=1}^N W_{\rv{YZ}|\rv{X}}(y_{j'},z_{k'}|x)\cdot \prod_{i\neq {j'}} q_\rv{Y}(y_i) \cdot \prod_{\ell\neq {k'}} r_\rv{Z}(z_\ell)}.
\end{equation}
\item The sender sends $\rv{J}$ and $\rv{K}$ losslessly to the first and second receiver using $\log{M}$ bits and $\log{N}$ bits, respectively.
\item Upon receiving $\rv{J}$, the first receiver outputs $\rv{Y}_\rv{J}$.
\item Upon receiving $\rv{K}$, the second receiver outputs $\rv{Z}_\rv{K}$.
\end{enumerate}
It suffices to show that the joint pmf of the random variables $\rv{X}\rv{Y}_\rv{J}\rv{Z}_\rv{K}$ generated by the above protocol is $\epsilon$-close (in TVD) to $p_\rv{XYZ}\defeq p_\rv{X}\cdot W_{\rv{YZ}|\rv{X}}$ for any input distribution $p_\rv{X}$.

Let $\tilde{p}$ denote (joint/marginal/conditional, depending on the subscript) pmfs of the random variables $\rv{J}$, $\rv{K}$, $\rv{X}$, $\rv{Y}_1,$ $\ldots$ , $\rv{Y}_M$, $\rv{Z}_1$, $\ldots$ , $\rv{Z}_N$ as in the above protocol.
Define the joint pmf $p_{\rv{J},\rv{K},\rv{X},\rv{Y}_1,\ldots,\rv{Y}_M,\rv{Z}_1,\ldots,\rv{Z}_N}$ as 
\begin{equation}
    p_{\rv{J},\rv{K},\rv{X},\rvs{Y}_1^M,\rvs{Z}_1^N}(j,k,x,\mathbf{y}_1^M,\mathbf{z}_1^N) \defeq \frac{1}{M\cdot N} \cdot p_{\rv{X},\rvs{Y}_1^{M},\rvs{Z}_1^{N}|\rv{J},\rv{K}}(x,\mathbf{y}_1^{M},\mathbf{z}_1^{N}|j,k),
    \end{equation}
where $p_{\rv{X},\rvs{Y}_1^{M},\rvs{Z}_1^{N}|\rv{J},\rv{K}}$ has been defined in~\eqref{eq:lemma:convex:split}.
Note that, by definition, $p_\rv{X}=\tilde{p}_\rv{X}$ holds.
As a direct result of the protocol, we have
\begin{equation}
    \tilde{p}_{\rv{J},\rv{K},\rv{X},\rvs{Y}_1^M,\rvs{Z}_1^N}(j,k,x,\mathbf{y}_1^M,\mathbf{z}_1^N)
    = p_\rv{X}(x) \cdot \prod_{i=1}^M q_\rv{Y}(y_i) \cdot \prod_{\ell=1}^N r_\rv{Z}(z_\ell) \cdot \frac{W_{\rv{YZ}|\rv{X}}(y_j,z_k|x)\cdot \prod_{i\neq j} q_\rv{Y}(y_i) \cdot \prod_{\ell\neq k} r_\rv{Z}(z_\ell)}{\sum_{j',k'} W_{\rv{YZ}|\rv{X}}(y_{j'},z_{k'}|x)\cdot \prod_{i\neq {j'}} q_\rv{Y}(y_i) \cdot \prod_{\ell\neq {k'}} r_\rv{Z}(z_\ell)}.
\end{equation}
By Lemma~\ref{lem:convex:split:bipartite} and the requirements we imposed on $M$ and $N$ at the beginning of this proof (note that $\epsilon_1-\delta_1+\epsilon_2-\delta_2+\epsilon_2-\delta_2\leq \epsilon-\sqrt{\delta_1^2+\delta_2^2+\delta_3^2}$), it holds that 
\begin{equation}
\norm{\tilde{p}_{\rv{X}\rvs{Y}_1^M\rvs{Z}_1^N} - p_{\rv{X}\rvs{Y}_1^M\rvs{Z}_1^N}}_\fnc{tvd} = \norm{p_\rv{X}\cdot \prod_{i=1}^M q_{\rv{Y}_i}\cdot \prod_{\ell=1}^N r_{\rv{Z}_\ell} - \frac{1}{M\cdot N}\cdot p_\rv{X} \cdot \sum_{j,k} W_{\rv{Y}_j\rv{Z}_k|\rv{X}}\cdot\prod_{i\neq j}q_{\rv{Y}_i} \prod_{\ell\neq k}r_{\rv{Z}_\ell} }_\fnc{tvd} \leq \epsilon.
\end{equation}
Since $\tilde{p}_{\rv{J},\rv{K}|\rv{X},\rvs{Y}_1^M,\rvs{Z}_1^N} = p_{\rv{J},\rv{K}|\rv{X},\rvs{Y}_1^M,\rvs{Z}_1^N}$ (as deliberately designed), we have
\begin{equation}
    \norm{\tilde{p}_{\rv{J},\rv{K},\rv{X},\rvs{Y}_1^M,\rvs{Z}_1^N} - p_{\rv{J},\rv{K},\rv{X},\rvs{Y}_1^M,\rvs{Z}_1^N}}_\fnc{tvd}
    = \norm{\tilde{p}_{\rv{X}\rvs{Y}_1^M\rvs{Z}_1^N} - p_{\rv{X}\rvs{Y}_1^M\rvs{Z}_1^N}}_\fnc{tvd} \leq \epsilon.
\end{equation}
Using the data-processing inequality for the total-variation distance, we have
\begin{equation}
    \epsilon \geq \norm{\tilde{p}_{\rv{X}\rv{Y}_\rv{J}\rv{Z}_\rv{K}} - p_{\rv{X}\rv{Y}_\rv{J}\rv{Z}_\rv{K}}}_\fnc{tvd} = \norm{\tilde{p}_{\rv{X}\rv{Y}_\rv{J}\rv{Z}_\rv{K}} - p_\rv{X}\cdot W_{\rv{YZ}|\rv{X}}}_\fnc{tvd}.
\end{equation}
Since the above discussion holds for all input distributions $p_\rv{X}\in\set{P}(\set{X})$, we have finished the proof.
\end{proof}

\subsection{One-Shot Converse Bound}
In the following, we propose and prove a converse bound of the $\epsilon$-attainable messages-size pairs for bipartite broadcast channel simulation.
Similar to the point-to-point case, we utilize the non-lockability lemma (Lemma~\ref{lem:non-lockability}).
Specifically for the broadcast setup, we develop and use the following bipartite generalization of Lemma~\ref{lem:non-lockability}.

Given discrete random variables $\rv{X}\rv{Y}\rv{Z}$ with joint pmf $p_\rv{XYZ}\in\set{P}(\set{X}\times\set{Y}\times\set{Z})$, the max-mutual information of $\rv{X}$ \vs $\rv{Y}$ and $\rv{Z}$ is defined as
\begin{equation}
    I_{\max}(\rv{X};\rv{Y};\rv{Z}) \defeq \inf_{\dscript{q_\rv{Y}\in\set{P}(\set{Y})}{r_\rv{Z}\in\set{P}(\set{Z})}} D_{\max}\infdiv*{p_{\rv{XYZ}}}{p_\rv{X}\times q_\rv{Y} \times r_\rv{Z}}.
\end{equation}
where $p_\rv{X}$ is the marginal distribution of $\rv{X}$ induced from $p_\rv{XYZ}$.
\begin{lemma}[Bipartite generalization of Lemma~\ref{lem:non-lockability}]\label{lem:common:non-lockability}
Let $(\rv{X}, \rv{Y}, \tilde{\rv{Y}}, \rv{Z}, \tilde{\rv{Z}})$ be joint random variables distributed on $\set{X}\times \set{Y}\times \tilde{\set{Y}}\times \set{Z}\times \tilde{\set{Z}}$.
Suppose that all the sets involved above are finite.
Then
\begin{equation}
    I_{\max}(\rv{X};\rv{Y}\tilde{\rv{Y}};\rv{Z}\tilde{\rv{Z}}) \leq I_{\max}(\rv{X};\rv{Y};\rv{Z}) + \log{\size{\tilde{\set{Y}}}} + \log{\size{\tilde{\set{Z}}}}.
\end{equation}
\end{lemma}
\begin{proof}
Let $q_\rv{Y}^\star$ and $r_\rv{Y}^\star$ denote the optimal distributions on $\set{Y}$ and $\set{Z}$, respectively, which achieve the infimum in the definition of $I_{\max}(\rv{X};\rv{Y};\rv{Z})$.
Then
\begin{align}
    I_{\max}(\rv{X};\rv{Y}\tilde{\rv{Y}};\rv{Z}\tilde{\rv{Z}})
	= & \inf_{\dscript{q_{\rv{Y}\tilde{\rv{Y}}}\in\set{P}(\set{Y}\times\tilde{\set{Y}})}{r_{\rv{Z}\tilde{\rv{Z}}}\in\set{P}(\set{Z}\times\tilde{\set{Z}})}} D_{\max}\infdiv*{p_{\rv{X}\rv{Y}\tilde{\rv{Y}}\rv{Z}\tilde{\rv{Z}}}}{p_\rv{X}\times q_{\rv{Y}\tilde{\rv{Y}}} \times r_{\rv{Z}\tilde{\rv{Z}}}} \\
	\leq & D_{\max}\infdiv*{p_{\rv{X}\rv{Y}\tilde{\rv{Y}}\rv{Z}\tilde{\rv{Z}}}}{p_\rv{X}\times q_\rv{Y}^\star \times \frac{1}{\size{\tilde{\set{Y}}}}\times r_\rv{Z}^\star \times \frac{1}{\size{\tilde{\set{Z}}}}} \\
	= & \log{\left(\size{\tilde{\set{Y}}}\!\cdot\!\size{\tilde{\set{Z}}}\!\cdot\! \sup_{x,y,z,\tilde{y},\tilde{z}} \frac{p_\rv{XYZ}(x,y,z)\cdot p_{\tilde{\rv{Y}}\tilde{\rv{Z}}|\rv{XYZ}}(\tilde{y},\tilde{z}|x,y,z)}{p_\rv{X}(x)\cdot q_\rv{Y}^\star(y) \cdot r_\rv{Z}^\star(z)}\right)} \\
	\leq & \log{\sup_{x,y,z} \frac{p_\rv{XYZ}(x,y,z)}{p_\rv{X}(x)\cdot q_\rv{Y}^\star(y) \cdot r_\rv{Z}^\star(z)}} + \log{\size{\tilde{\set{Y}}}} + \log{\size{\tilde{\set{Z}}}}\\
	= & I_{\max}(\rv{X};\rv{Y};\rv{Y}) + \log{\size{\tilde{\set{Y}}}} + \log{\size{\tilde{\set{Z}}}}. \qedhere
\end{align}
\end{proof}
\begin{proposition}\label{prop:1cbfsbc}
Let $W_{\rv{YZ}|\rv{X}}$ be a channel from $\set{X}$ to $\set{Y}\times\set{Z}$, and let $\epsilon\in(0,1)$.
For any $\delta_1$, $\delta_2$, $\delta_3\in(0,1-\epsilon)$, the following set is a superset of $\set{M}^\star_\epsilon(W_{\rv{YZ}|\rv{X}})$
\begin{equation}
\set{M}^\text{out}_\epsilon(W_{\rv{YZ}|\rv{X}}) \defeq \left\{(M,N)\in\mathbb{Z}_{>0}^2 \middle\vert
\begin{aligned}
\log{M} &\geq \adjustlimits\inf_{q_\rv{Y}} \sup_{p_\rv{X}} D_{s+}^{\epsilon+\delta_1}\infdiv*{p_\rv{X}\cdot W_{\rv{Y}|\rv{X}}}{p_\rv{X}\times q_\rv{Y}} + \log{\delta_1} \\
\log{N} &\geq \adjustlimits\inf_{r_\rv{Z}} \sup_{p_\rv{X}} D_{s+}^{\epsilon+\delta_2}\infdiv*{p_\rv{X}\cdot W_{\rv{Z}|\rv{X}}}{p_\rv{X}\times r_\rv{Z}} + \log{\delta_2} \\
\log{MN}&\geq \adjustlimits\inf_{q_\rv{Y},r_\rv{Z}} \sup_{p_\rv{X}} D_{s+}^{\epsilon+\delta_3}\infdiv*{p_\rv{X}\cdot W_{\rv{YZ}|\rv{X}}}{p_\rv{X}\times q_\rv{Y} \times r_\rv{Z}} + \log{\delta_3}
\end{aligned}
\right\}.
\end{equation}
\end{proposition}
\begin{proof}
Let $(M,N)\in\set{M}^\star_\epsilon(W_{\rv{YZ}|\rv{X}})$, \ie, suppose there exists a size-$(M,N)$ $\epsilon$-simulation code for $W_{\rv{YZ}|\rv{X}}$.
Let $\rv{S}$ and $\rv{S}'$ denote the two shared randomness between the sender and the first and the second receivers, respectively.
Let $\rv{M}$ and $\rv{N}$ denote the two codewords transmitted from the sender to the first and the second receivers, respectively.
Then, for any input source $\rv{X}\sim p_\rv{X}$, we have a Markov chain $\rv{Z}-\rv{N}\rv{S}'-\rv{X}-\rv{MS}-\rv{Y}$ where
\begin{itemize}
    \item $\rv{X}$, $\rv{S}$, and $\rv{S}'$ are independent.
    \item The distribution of $\rv{XYZ}$, denoted by $\tilde{p}_{\rv{XYZ}}$, is $\epsilon$-close (in TVD) to $p_{\rv{XYZ}}\defeq p_\rv{X} \cdot W_{\rv{YZ}|\rv{X}}$.
    \item The marginal distribution $\sum_{y,z}\tilde{p}_{\rv{XYZ}}(x,y,z) = p_\rv{X}(x)$ $\forall x\in\set{X}$.
    \item $\rv{M}$ and $\rv{N}$ are distributed over $\{1,\ldots,M\}$ and $\{1,\ldots,N\}$, respectively.
\end{itemize}
Pick $p_\rv{X}$ to be some pmf with full support.
The following statements hold.
\begin{enumerate}
    \item By Lemma~\ref{lem:non-lockability} and Lemma~\ref{lem:common:non-lockability}, we have  
    \begin{align}
    \log{M} &= I_{\max}(\rv{X};\rv{S}) + \log{M} \geq I_{\max}(\rv{X};\rv{MS}); \\
    \log{N} &= I_{\max}(\rv{X};\rv{S'}) + \log{N} \geq I_{\max}(\rv{X};\rv{NS}'); \\
    \log{M} + \log{N} &= I_{\max}(\rv{X};\rv{S};\rv{S'}) + \log{M} + \log{N} \geq I_{\max}(\rv{X};\rv{MS};\rv{NS}') .
    \end{align}
    \item By the data processing inequality of $I_{\max}$, we have
    \begin{align}
    I_{\max}(\rv{X};\rv{MS}) &\geq I_{\max}(\rv{X};\rv{Y}) \\
    I_{\max}(\rv{X};\rv{NS}') &\geq I_{\max}(\rv{X};\rv{Z}) \\
    I_{\max}(\rv{X};\rv{MS};\rv{NS}') &\geq I_{\max}(\rv{X};\rv{Y};\rv{Z}) \, . 
    \end{align}
    \item By the definition of $I_{\max}$, and noting that $\tilde{p}_\rv{X}=p_\rv{X}$, we have
    \begin{align}
     I_{\max}(\rv{X};Y) &= \inf_{q_{Y}} D_{\max}\infdiv*{\tilde{p}_{\rv{XY}}}{p_{\rv{X}}\times q_{\rv{Y}}} \\
     &= \adjustlimits \inf_{q_{\rv{Y}}} \max_{x} D_{\max}\infdiv*{\tilde{p}_{\rv{Y}|\rv{X}}(\cdot|x)}{q_\rv{Y}}\\
     &\geq \multiadjustlimits{
         \inf_{q_{\rv{Y}}}, 
         \inf_{\tilde{W}_{\rv{Y}|\rv{X}}: \norm{\tilde{W}_{\rv{Y}|\rv{X}} - W_{\rv{Y}|\rv{X}}}_\fnc{tvd}\leq\epsilon}, 
         \max_{x}} D_{\max}\infdiv*{\tilde{W}_{\rv{Y}|\rv{X}}(\cdot|x)}{q_\rv{Y}}\\
     I_{\max}(\rv{X};\rv{Z}) &= \inf_{r_{\rv{Z}}} D_{\max}\infdiv*{\tilde{p}_{\rv{XZ}}}{p_{\rv{X}}\times r_{\rv{Z}}} \\
     &= \adjustlimits \inf_{r_{\rv{Z}}} \max_{x} D_{\max}\infdiv*{\tilde{p}_{\rv{Z}|\rv{X}}(\cdot|x)}{r_\rv{Z}}\\
     &\geq \multiadjustlimits{
         \inf_{r_{\rv{Z}}}, 
         \inf_{\tilde{W}_{\rv{Z}|\rv{X}}: \norm{\tilde{W}_{\rv{Z}|\rv{X}}-W_{\rv{Z}|\rv{X}}}_\fnc{tvd}\leq\epsilon}, 
         \max_{x}} D_{\max}\infdiv*{\tilde{W}_{\rv{Z}|\rv{X}}(\cdot|x)}{r_\rv{Z}}\\
     I_{\max}(\rv{X};\rv{Y};\rv{Z}) &= \adjustlimits\inf_{q_{\rv{Y}}}\inf_{r_{\rv{Z}}} D_{\max}\infdiv*{\tilde{p}_{\rv{XYZ}}}{p_{\rv{X}}\times q_{\rv{Y}}\times r_{\rv{Z}}} \\
     &= \inf_{q_{\rv{Y}}}\inf_{r_{\rv{Z}}} \max_{x} D_{\max}\infdiv*{\tilde{p}_{\rv{YZ}|\rv{X}}(\cdot|x)}{q_\rv{Y}\times r_\rv{Z}}\\
     &\geq \multiadjustlimits{
         \inf_{q_{\rv{Y}}}, 
         \inf_{r_{\rv{Z}}}, 
         \inf_{\tilde{W}_{\rv{YZ}|\rv{X}}: \norm{\tilde{W}_{\rv{YZ}|\rv{X}}-W_{\rv{YZ}|\rv{X}}}_\fnc{tvd}\leq\epsilon}, 
         \max_{x}} D_{\max}\infdiv*{\tilde{W}_{\rv{YZ}|\rv{X}}(\cdot|x)}{q_\rv{Y}\times r_\rv{Z}}.
    \end{align}
\end{enumerate}
The proposition can be proven by combining the above three steps and Lemma~\ref{lem:maxDmax:supDs}.
\end{proof}

\subsection{Asymptotic Analysis}

In this section, we consider the task of simulating $W_{\rv{YZ}|\rv{X}}^{\tensor n}$.
In asymptotic discussions, one is usually more interested in admissible rates rather than admissible message sizes.
For our task of simulating $W_{\rv{YZ}|\rv{X}}$ asymptotically, a rate pair $(r_1,r_2)$ (of positive real numbers) is said to be $\epsilon$-attainable if there exists a sequence of size-$(\floor{2^{nr_1}}, \floor{2^{nr_2}})$ $\epsilon$-simulation codes for $W_{\rv{YZ}|\rv{X}}^{\tensor n}$ for $n$ sufficiently large.
We denote  $\set{R}^\star_\epsilon(W_{\rv{YZ}|\rv{X}})$ the \emph{closure} of the set of all $\epsilon$-attainable rate pairs, \ie, 
\begin{equation}
\set{R}^\star_\epsilon(W_{\rv{YZ}|\rv{X}}) \defeq \fnc{cl}\left(\left\{(r_1,r_2)\in\mathbb{R}_{\geq 0}^2\middle\vert \exists N\in\mathbb{N}\text{ s.t. }(\floor{2^{nr_1}},\floor{2^{nr_2}})\in\set{M}_\epsilon^\star(W_{\rv{YZ}|\rv{X}}^{\tensor n})\text{ for all }n\geq N\right\}\right)
\end{equation}

We have the following single-letter expression of the (random-assisted) simulation rate region of a bipartite broadcast channel, which quantifies the ``minimal'' amount of channel pairs needed to simulate the statistics of a bipartite broadcast channel in asymptotic setups with a fixed (or diminishing) tolerance.
An implication of Theorem~\ref{thm:broadcast:asymptotic} below is the lack of interconvertibility between broadcast channels.
Consider the following example of two bipartite broadcast channels $V\in\set{P}(\{0,1\}\times\{0,1\}|\{0,1\}^2)$ and $W\in\set{P}(\{0,1\}\times\{0,1\}|\{0,1\})$, where
\begin{align}
    V(y,z|x_0,x_1) &= \ID(y|x_0) \cdot \ID(z|x_1), \\
    W(y,z|x) &= \ID(y|x) \cdot \ID(z|x). \label{eq:extreme:bc}
\end{align}
As $V$ is a product channel, we know its (coding) capacity region to be $[1,\infty)^2$ (in bits per channel use), which matches $\set{R}^\star_\epsilon(W)$ provided by Theorem~\ref{thm:broadcast:asymptotic} exactly.
In other words, we need one copy of $V$ per $W$ simulated asymptotically.
On the other hand, one copy of $W$ cannot simulate one copy of $V$ asymptotically, as the capacity region of $W$ is $\{(r_1,r_2)\in\mathbb{R}_{\geq 0}^2|r_1+r_2\leq 1\}$ while the simulation region of $V$ is $\set{R}^\star_\epsilon(V)=[1,\infty)^2$.
\begin{theorem}\label{thm:broadcast:asymptotic}
Let $W_{\rv{YZ}|\rv{X}}$ be a channel from $\set{X}$ to $\set{Y}\times\set{Z}$.
For any $\epsilon\in(0,1)$, it holds that
\begin{equation} \label{eq:broadcast:asymptotic}
\set{R}^\star_\epsilon(W_{\rv{YZ}|\rv{X}}) = \left\{(r_1,r_2)\in\mathbb{R}_{\geq 0}^2 \middle\vert
\begin{aligned}
    r_1 &\geq C(W_{\rv{Y}|\rv{X}})\\
    r_2 &\geq C(W_{\rv{Z}|\rv{X}})\\
    r_1+r_2 &\geq \tilde{C}(W_{\rv{YZ}|\rv{X}})
\end{aligned}\right\},
\end{equation}
where $\tilde{C}(W_{\rv{YZ}|\rv{X}})$ has been defined in~\eqref{eq:def:tilde:C}.
\end{theorem}
\begin{proof}[Achievability Proof of Theorem~\ref{thm:broadcast:asymptotic}]
Applying Proposition~\ref{prop:1abfsbc}, we have the following set being a subset of $\set{M}^\star_\epsilon(W_{\rv{YZ}|\rv{X}}^{\tensor n})$
    \begin{equation}\label{eq:innner:broadcast:n}
    \set{M}^\text{in}_\epsilon(W_{\rv{YZ}|\rv{X}}^{\tensor n}) \defeq \left\{ (M,N)\in\mathbb{Z}_{>0}^2 \middle\vert 
    \begin{aligned}
    \log{M} &> D_{s+}^{\epsilon_1-\delta_1}\infdiv*{p_{\rvs{X}_1^n}\cdot W_{\rv{Y}|\rv{X}}^{\tensor n}}{p_{\rvs{X}_1^n}\times q_{\rvs{Y}_1^n}} - \log{\delta_1^2} \\
    \log{N} &> D_{s+}^{\epsilon_2-\delta_2}\infdiv*{p_{\rvs{X}_1^n}\cdot W_{\rv{Z}|\rv{X}}^{\tensor n}}{p_{\rvs{X}_1^n}\times r_{\rvs{Z}_1^n}} - \log{\delta_2^2} \\
    \log{MN} &> D_{s+}^{\epsilon_3-\delta_3}\infdiv*{p_{\rvs{X}_1^n}\cdot W_{\rv{YZ}|\rv{X}}^{\tensor n}}{p_{\rvs{X}_1^n}\times q_{\rvs{Y}_1^n}\times r_{\rvs{Z}_1^n}} - \log{\delta_3^2}
    \end{aligned} \forall p_{\rvs{X}_1^n} \in\set{P}(\set{X}^n)
    \right\}.
    \end{equation}
for any $\epsilon_1,\epsilon_2,\epsilon_3>0$ such that $\epsilon_1+\epsilon_2+\epsilon_3\leq\epsilon$, $\delta_1\in(0,\epsilon_1)$, $\delta_2\in(0,\epsilon_2)$, $\delta_3\in(0,\epsilon_3)$, $q_{\rvs{Y}_1^n}\in\set{P}(\set{Y}^n)$, and $r_{\rvs{Z}_1^n}\in\set{P}(\set{Z}^n)$.
We pick $q_{\rvs{Y}_1^n}$ and $r_{\rvs{Z}_1^n}$ as 
\begin{align}
q_{\rvs{Y}_1^n} \defeq \sum_{\lambda\in\Lambda_n}\frac{1}{\size{\Lambda_n}}\left(\sum_{\tilde{x},\tilde{z}}W_{\rv{YZ}|\rv{X}}(\cdot,\tilde{z}|\tilde{x})\cdot p_\lambda(\tilde{x})\right)^{\tensor n},\qquad 
r_{\rvs{Z}_1^n} \defeq \sum_{\lambda\in\Lambda_n}\frac{1}{\size{\Lambda_n}}\left(\sum_{\tilde{x},\tilde{y}}W_{\rv{YZ}|\rv{X}}(\tilde{y},\cdot|\tilde{x})\cdot p_\lambda(\tilde{x})\right)^{\tensor n}.
\end{align}
We have the following chain of inequalities.
\begin{align}
& \adjustlimits\inf_{q_{\rvs{Y}_1^n}, r_{\rvs{Z}_1^n}} \sup_{p_{\rvs{X}_1^n}} \frac{1}{n}D_{s+}^{\epsilon_3-\delta_3}\infdiv*{p_{\rvs{X}_1^n}\cdot W_{\rv{YZ}|\rv{X}}^{\tensor n}}{p_{\rvs{X}_1^n}\times q_{\rvs{Y}_1^n}\times r_{\rvs{Z}_1^n}} - \frac{1}{n}\log{\delta_3^2} \nonumber\\
\label{eq:inner:broadcast:asymptotic:a}
\leq & \adjustlimits\inf_{q_{\rvs{Y}_1^n}, r_{\rvs{Z}_1^n}} \sup_{\mathbf{x}_1^n} \frac{1}{n}D_{s+}^{\epsilon_3-\delta_3}\infdiv*{W_{\rv{YZ}|\rv{X}}^{\tensor n}(\cdot|\mathbf{x}_1^n)}{q_{\rvs{Y}_1^n}\times r_{\rvs{Z}_1^n}} - \frac{1}{n}\log{\delta_3^2} \\
\label{eq:inner:broadcast:asymptotic:b}
\leq & \begin{aligned}[t] \sup_{\mathbf{x}_1^n} \frac{1}{n}D_{s+}^{\epsilon_3-\delta_3}\infdiv*{W_{\rv{YZ}|\rv{X}}^{\tensor n}(\cdot|\mathbf{x}_1^n)}{\sum_{\lambda\in\Lambda_n}\frac{1}{\size{\Lambda_n}}\left(\!\sum_{\tilde{x},\tilde{z}}W_{\rv{YZ}|\rv{X}}(\cdot,\tilde{z}|\tilde{x})\cdot p_\lambda(\tilde{x})\!\right)^{\tensor n} \!\!\!\!\!\!\times\! \sum_{\lambda\in\Lambda_n}\frac{1}{\size{\Lambda_n}}\left(\!\sum_{\tilde{x},\tilde{y}}W_{\rv{YZ}|\rv{X}}(\tilde{y},\cdot|\tilde{x})\cdot p_\lambda(\tilde{x})\!\right)^{\tensor n}} \\ - \frac{1}{n}\log{\delta_3^2}\end{aligned} \\
\label{eq:inner:broadcast:asymptotic:c}
\leq & \begin{aligned}[t] \sup_{\mathbf{x}_1^n} \frac{1}{n} D_{s+}^{\epsilon_3-\delta_3}\infdiv*{W_{\rv{YZ}|\rv{X}}^{\tensor n}(\cdot|\mathbf{x}_1^n)}{\left(\sum_{\tilde{x},\tilde{z}}W_{\rv{YZ}|\rv{X}}(\cdot,\tilde{z}|\tilde{x})\cdot f_{\mathbf{x}_1^n}(\tilde{x})\right)^{\tensor n} \times \left(\sum_{\tilde{x},\tilde{y}}W_{\rv{YZ}|\rv{X}}(\tilde{y},\cdot|\tilde{x})\cdot f_{\mathbf{x}_1^n}(\tilde{x})\right)^{\tensor n}} \\ - \frac{1}{n}\log{\delta_3^2} +\frac{2}{n}\log{\size{\Lambda_n}} \end{aligned} 
\end{align}
\begin{align}
\label{eq:inner:broadcast:asymptotic:d}
\leq & \begin{aligned}[t]\sup_{\mathbf{x}_1^n} \begin{aligned}[t]&\left\{\frac{1}{n} \sum_{i=1}^n D\infdiv*{W_{\rv{YZ}|\rv{X}}(\cdot|x_i)}{\left(\sum_{\tilde{x},\tilde{z}}W_{\rv{YZ}|\rv{X}}(\cdot,\tilde{z}|\tilde{x})\cdot f_{\mathbf{x}_1^n}(\tilde{x})\right) \times \left(\sum_{\tilde{x},\tilde{y}}W_{\rv{YZ}|\rv{X}}(\tilde{y},\cdot|\tilde{x})\cdot f_{\mathbf{x}_1^n}(\tilde{x})\right)}\right.\\
    &\left.+ \frac{1}{\sqrt{n(\epsilon_3-\delta_3)}}\cdot \sqrt{\frac{1}{n} \sum_{i=1}^n V\infdiv*{W_{\rv{YZ}|\rv{X}}(\cdot|x_i)}{\left(\sum_{\tilde{x},\tilde{z}}W_{\rv{YZ}|\rv{X}}(\cdot,\tilde{z}|\tilde{x})\cdot f_{\mathbf{x}_1^n}(\tilde{x})\right) \times \left(\sum_{\tilde{x},\tilde{y}}W_{\rv{YZ}|\rv{X}}(\tilde{y},\cdot|\tilde{x})\cdot f_{\mathbf{x}_1^n}(\tilde{x})\right)}}\right\}
    \end{aligned}\\
    - \frac{1}{n}\log{\delta_3^2} +\frac{2}{n}\log{\size{\Lambda_n}}
    \end{aligned}\!\!\\
\label{eq:inner:broadcast:asymptotic:e}
\leq & \sup_{p_\rv{X}} I(\rv{X};\rv{Y};\rv{Z})_{p_\rv{X}\cdot W_{\rv{YZ}|\rv{X}}} + \frac{1}{\sqrt{n(\epsilon_3-\delta_3)}} \cdot \tilde{V}(p_\rv{X}) - \frac{1}{n}\log{\delta_3^2} +\frac{2}{n}\log{\size{\Lambda_n}}
\end{align}
where for~\eqref{eq:inner:broadcast:asymptotic:a} we use the quasi-convexity of $D_{s+}^\epsilon\infdiv{p_\rv{X}\cdot p_{\rv{Y}|\rv{X}}}{p_\rv{X}\cdot q_{\rv{Y}|\rv{X}}}$ in $p_\rv{X}$, for~\eqref{eq:inner:broadcast:asymptotic:b} we pick a pair of specific $q_{\rvs{Y}_1^n}$ and $r_{\rvs{Z}_1^n}$ as aforementioned to upper bound the infimum, for~\eqref{eq:inner:broadcast:asymptotic:c} we use~\cite[Lemma~3]{tomamichel2013tight}, for~\eqref{eq:inner:broadcast:asymptotic:d} we use the Chebyshev-type bound in~\cite[Lemma~5]{tomamichel2013tight}, and~\eqref{eq:inner:broadcast:asymptotic:e} is a result of direct counting and the definition of the \emph{common dispersion} $\tilde{V}$ as
\begin{equation}\label{eq:def:tV}
    \tilde{V}(p) \defeq \sum_{x} p(x) \cdot V\Bigg( W_{\rv{YZ}|\rv{X}}(\cdot,\cdot|x) \Bigg\Vert \sum_{\tilde{x},\tilde{z}}p(x)\cdot W_{\rv{YZ}|\rv{X}}(\cdot,\tilde{z}|\tilde{x}) \times \sum_{\tilde{x},\tilde{y}}p(x)\cdot W_{\rv{YZ}|\rv{X}}(\tilde{y},\cdot|\tilde{x})\Bigg).
\end{equation}
Notice that $\tilde{V}$ is bounded.
Thus, it holds that
\begin{equation}\label{eq:innner:broadcast:n:asymptotic:A}
\limsup_{n\to\infty} \adjustlimits\inf_{q_{\rvs{Y}_1^n}, r_{\rvs{Z}_1^n}} \sup_{p_{\rvs{X}_1^n}} \frac{1}{n}D_{s+}^{\epsilon_3-\delta_3}\infdiv*{p_{\rvs{X}_1^n}\cdot W_{\rv{YZ}|\rv{X}}^{\tensor n}}{p_{\rvs{X}_1^n}\times q_{\rvs{Y}_1^n}\times r_{\rvs{Z}_1^n}} - \frac{1}{n}\log{\delta_3^2} \leq \sup_{p_\rv{X}} I(\rv{X};\rv{Y};\rv{Z})_{p_\rv{X}\cdot W_{\rv{YZ}|\rv{X}}} = \tilde{C}(W_{\rv{YZ}|\rv{X}}).
\end{equation}
Similarly, one can show
\begin{align}
\label{eq:innner:broadcast:n:asymptotic:B}
\limsup_{n\to\infty} \adjustlimits\inf_{q_{\rvs{Y}_1^n}} \sup_{p_{\rvs{X}_1^n}} \frac{1}{n}D_{s+}^{\epsilon_1-\delta_1}\infdiv*{p_{\rvs{X}_1^n}\cdot W_{\rv{Y}|\rv{X}}^{\tensor n}}{p_{\rvs{X}_1^n}\times q_{\rvs{Y}_1^n}} - \frac{1}{n}\log{\delta_1^2} \leq C(W_{\rv{Y}|\rv{X}}),\\
\label{eq:innner:broadcast:n:asymptotic:C}
\limsup_{n\to\infty} \adjustlimits\inf_{r_{\rvs{Z}_1^n}} \sup_{p_{\rvs{X}_1^n}} \frac{1}{n}D_{s+}^{\epsilon_2-\delta_2}\infdiv*{p_{\rvs{X}_1^n}\cdot W_{\rv{Z}|\rv{X}}^{\tensor n}}{p_{\rvs{X}_1^n}\times r_{\rvs{Z}_1^n}} - \frac{1}{n}\log{\delta_2^2} \leq C(W_{\rv{Z}|\rv{X}}).
\end{align}

Combining~\eqref{eq:innner:broadcast:n:asymptotic:A}, \eqref{eq:innner:broadcast:n:asymptotic:B}, and \eqref{eq:innner:broadcast:n:asymptotic:C} with~\eqref{eq:innner:broadcast:n}, it is straightforward to check that any integer pair $(\floor{2^{nr_1}}, \floor{2^{nr_2}})$ with 
\begin{equation}\label{eq:inner:broadcast:asymptotic:2}
    (r_1, r_2) \in \left\{(r_1,r_2)\in\mathbb{R}_{\geq 0}^2 \middle\vert
    \begin{aligned}
    r_1 &> C(W_{\rv{Y}|\rv{X}})\\
    r_2 &> C(W_{\rv{Z}|\rv{X}})\\
    r_1+r_2 &> \tilde{C}(W_{\rv{YZ}|\rv{X}})
    \end{aligned}
    \right\},
\end{equation}
must be in $\set{M}^\text{in}_\epsilon(W_{\rv{YZ}|\rv{X}}^{\tensor n})$ for $n$ sufficiently large, \ie, RHS of~\eqref{eq:inner:broadcast:asymptotic:2}$\subset\set{R}^\star_\epsilon(W_{\rv{YZ}|\rv{X}})$.
This proves the RHS of~\eqref{eq:broadcast:asymptotic} being a subset of $\set{R}^\star_\epsilon(W_{\rv{YZ}|\rv{X}})$ since the latter is a closed set.
\end{proof}
For the converse argument, we need the following generalized version of~\cite[Lemma~10]{anshu2020partially}.
\begin{lemma}\label{lem:Ds:2}
For any $p_\rv{XY}\in\set{P}(\set{X}\times\set{Y}\times\set{Z})$, $\epsilon\in(0,1)$, and $\delta\in(0,\frac{1-\epsilon}{2})$, it holds that 
\begin{equation}
\inf_{q_\rv{Y}\in\set{P}(\set{Y}), r_\rv{Z}\in\set{P}(\set{Z})} D_{s+}^{\epsilon}\infdiv*{p_\rv{XYZ}}{p_\rv{X}\times q_\rv{Y}\times r_\rv{Z}} \geq D_{s+}^{\epsilon+2\delta}\infdiv*{p_\rv{XY}}{p_\rv{X}\times p_\rv{Y}\times p_\rv{Z}} + 2\log{\delta}.
\end{equation}
\end{lemma}
\begin{proof}
Let $a^\star= \inf_{q_\rv{Y}, r_\rv{Z}}D_{s+}^\epsilon\infdiv{p_\rv{XYZ}}{p_\rv{X}\times q_\rv{Y}\times r_\rv{Z}}\geq 0$.
There must exists some pmf $q^\star_\rv{Y}$ and $r^\star_\rv{Z}$ such that $p_\rv{XYZ}(\set{A})\leq \epsilon$ where
\begin{equation}
\set{A} \defeq \left\{(x,y,z)\in\set{X}\times\set{Y}\times\set{Z}: \log{\frac{p_\rv{XYZ}(x,y,z)}{p_\rv{X}(x)\cdot q_\rv{Y}^\star(y)\cdot r_\rv{Z}^\star(z)}} > a^\star\right\}.
\end{equation}
Define the sets 
\begin{align}
\set{B} &\defeq \left\{(x,y,z)\in\set{X}\times\set{Y}\times\set{Z}: q_\rv{Y}^\star(y)> \frac{1}{\delta} p_\rv{Y}(y)\right\},\\
\set{C} &\defeq \left\{(x,y,z)\in\set{X}\times\set{Y}\times\set{Z}: r_\rv{Z}^\star(z)> \frac{1}{\delta} p_\rv{Z}(z)\right\}.
\end{align}
We have 
\begin{align}
p_\rv{XYZ}(\set{B}) &= \sum_{y: q_\rv{Y}^\star(y) > \frac{1}{\delta} p_\rv{Y}(y)} p_\rv{Y}(y) < \sum_{y} \delta\cdot q_\rv{Y}^\star(y) = \delta, \\
p_\rv{XYZ}(\set{C}) &= \sum_{z: r_\rv{Z}^\star(z) > \frac{1}{\delta} p_\rv{Z}(z)} p_\rv{Z}(z) < \sum_{z} \delta\cdot r_\rv{Z}^\star(z) = \delta.
\end{align}
Notice that for all $(x,y,z)\not\in\set{A}\cup\set{B}\cup\set{C}$, we have
\begin{equation}
p_\rv{XYZ}(x,y,z) \leq 2^{a^\star}\cdot p_\rv{X}(x)\cdot q_\rv{Y}^\star(y) \cdot r_\rv{Z}^\star(z) \leq \frac{2^{a^\star}}{\delta^2} p_\rv{X}(x)\cdot p_\rv{Y}(y) \cdot p_\rv{Z}(z).
\end{equation}
Thus,
\begin{align}
p_\rv{XYZ}\left(\left\{(x,y,z): \log{\frac{p_\rv{XYZ}(x,y,z)}{p_\rv{X}(x)\cdot p_\rv{Y}(y)\cdot p_\rv{Z}(z)}} > a^\star-2\log{\delta} \right\}\right) 
&\leq p_\rv{XYZ}(\set{A}\cup\set{B}\cup\set{C})\\
&\leq p_\rv{XYZ}(\set{A}) + p_\rv{XYZ}(\set{B}) + p_\rv{XYZ}(\set{C})< \epsilon+2\delta.
\end{align}
Since $a^\star-2\log{\delta} \geq 0$, we know $D_{s+}^{\epsilon+2\delta}\infdiv*{p_\rv{XY}}{p_\rv{X}\times p_\rv{Y}} \leq a^\star-2\log{\delta}$.
\end{proof}
\begin{proof}[Converse Proof of Theorem~\ref{thm:broadcast:asymptotic}]
Let $(r_1,r_2)$ be arbitrarily pair of non-negative numbers such that $(2^{\floor{nr_1}},2^{\floor{nr_2}})\in\set{M}_\epsilon^\star(W_{\rv{YZ}|\rv{X}}^{\tensor n})$ for $n$ sufficiently large, \ie, $(r_1,r_2)$ is an arbitrarily \emph{interior} point of $\set{R}^\star_\epsilon(W_{\rv{YZ}|\rv{X}})$.
Applying Proposition~\ref{prop:1cbfsbc}, we have the following set being a superset of $\set{M}^\star_\epsilon(W_{\rv{YZ}|\rv{X}}^{\tensor n})$
    \begin{equation}\label{eq:outer:broadcast:n}
    \set{M}^\text{out}_\epsilon(W_{\rv{YZ}|\rv{X}}^{\tensor n}) \defeq \left\{(M,N)\in\mathbb{Z}_{>0}^2 \middle\vert
    \begin{aligned}
    \log{M} &\geq \adjustlimits\inf_{q_{\rvs{Y}_1^n}} \sup_{p_{\rvs{X}_1^n}} D_{s+}^{\epsilon+\delta_1}\infdiv*{p_{\rvs{X}_1^n}\cdot W_{\rv{Y}|\rv{X}}^{\tensor n}}{p_{\rvs{X}_1^n}\times q_{\rvs{Y}_1^n}} + \log{\delta_1} \\
    \log{N} &\geq \adjustlimits\inf_{r_{\rvs{Z}_1^n}} \sup_{p_{\rvs{X}_1^n}} D_{s+}^{\epsilon+\delta_2}\infdiv*{p_{\rvs{X}_1^n}\cdot W_{\rv{Z}|\rv{X}}^{\tensor n}}{p_{\rvs{X}_1^n}\times r_{\rvs{Z}_1^n}} + \log{\delta_2} \\
    \log{MN}&\geq \adjustlimits\inf_{q_{\rvs{Y}_1^n},r_{\rvs{Z}_1^n}} \sup_{p_{\rvs{X}_1^n}} D_{s+}^{\epsilon+\delta_3}\infdiv*{p_{\rvs{X}_1^n}\cdot W_{\rv{YZ}|\rv{X}}^{\tensor n}}{p_{\rvs{X}_1^n}\times q_{\rvs{Y}_1^n} \times r_{\rvs{Z}_1^n}} + \log{\delta_3}
    \end{aligned}
    \right\}.
    \end{equation}
for any $\delta_1$, $\delta_2$, $\delta_3\in(0,1-\epsilon)$.
Thus, we have
\begin{align}
    \label{eq:outer:broadcast:asymptotic:a}
    r_1 &\geq \adjustlimits\inf_{q_{\rvs{Y}_1^n}} \sup_{p_{\rvs{X}_1^n}} \frac{1}{n} D_{s+}^{\epsilon+\delta_1}\infdiv*{p_{\rvs{X}_1^n}\cdot W_{\rv{Y}|\rv{X}}^{\tensor n}}{p_{\rvs{X}_1^n}\times q_{\rvs{Y}_1^n}} + \frac{1}{n}\log{\delta_1} \\
    \label{eq:outer:broadcast:asymptotic:b}
    r_2 &\geq \adjustlimits\inf_{r_{\rvs{Z}_1^n}} \sup_{p_{\rvs{X}_1^n}} \frac{1}{n} D_{s+}^{\epsilon+\delta_2}\infdiv*{p_{\rvs{X}_1^n}\cdot W_{\rv{Z}|\rv{X}}^{\tensor n}}{p_{\rvs{X}_1^n}\times r_{\rvs{Z}_1^n}} + \frac{1}{n}\log{\delta_2} \\
    \label{eq:outer:broadcast:asymptotic:c}
    r_1 + r_2 &\geq \adjustlimits\inf_{q_{\rvs{Y}_1^n},r_{\rvs{Z}_1^n}} \sup_{p_{\rvs{X}_1^n}} \frac{1}{n} D_{s+}^{\epsilon+\delta_3}\infdiv*{p_{\rvs{X}_1^n}\cdot W_{\rv{YZ}|\rv{X}}^{\tensor n}}{p_{\rvs{X}_1^n}\times q_{\rvs{Y}_1^n} \times r_{\rvs{Z}_1^n}} + \frac{1}{n}\log{\delta_3}
\end{align}
for $n$ sufficiently large.
By Lemma~\ref{lem:Ds:2}, we can rewrite~\eqref{eq:outer:broadcast:asymptotic:c} as
\begin{align}
r_1 + r_2 &\geq \adjustlimits \sup_{p_{\rvs{X}_1^n}} \inf_{q_{\rvs{Y}_1^n},r_{\rvs{Z}_1^n}} \frac{1}{n} D_{s+}^{\epsilon+\delta_3}\infdiv*{p_{\rvs{X}_1^n}\cdot W_{\rv{YZ}|\rv{X}}^{\tensor n}}{p_{\rvs{X}_1^n}\times q_{\rvs{Y}_1^n} \times r_{\rvs{Z}_1^n}} + \frac{1}{n}\log{\delta_3} \\
\label{eq:outer:broadcast:asymptotic:d}
 &\geq \sup_{p_{\rvs{X}_1^n}} \frac{1}{n} D_{s+}^{\epsilon+3\delta_3}\infdiv*{p_{\rvs{X}_1^n}\cdot W_{\rv{YZ}|\rv{X}}^{\tensor n}}{p_{\rvs{X}_1^n}\times p_{\rvs{Y}_1^n} \times p_{\rvs{Z}_1^n}} + \frac{3}{n}\log{\delta_3}.
\end{align}
Using the information spectrum method~\cite{verdu1994general}, we know $\lim_{n\to\infty} \text{RHS of~\eqref{eq:outer:broadcast:asymptotic:d}} = \tilde{C}(W_{\rv{YZ}|\rv{X}})$.
Since~\eqref{eq:outer:broadcast:asymptotic:d} holds for all $n$ sufficiently large, the inequality is maintained as $n\to\infty$, \ie, $r_1+r_2\geq \tilde{C}(W_{\rv{YZ}|\rv{X}})$.

Similarly, using~\cite[Lemma~10]{anshu2020partially} with~\eqref{eq:outer:broadcast:asymptotic:a} and~\eqref{eq:outer:broadcast:asymptotic:b}, one can show $r_1\geq C(W_{\rv{Y}|\rv{X}})$ and $r_2\geq C(W_{\rv{Z}|\rv{X}})$, respectively.

Since $(r_1, r_2)$ are picked arbitrarily, we have shown
\begin{equation}
\left\{(r_1,r_2)\in\mathbb{R}_{\geq 0}^2\middle\vert \exists N\in\mathbb{N}\text{ s.t. }(\floor{2^{nr_1}},\floor{2^{nr_2}})\in\set{M}_\epsilon^\star(W_{\rv{YZ}|\rv{X}}^{\tensor n})\text{ for all }n\geq N\right\} \subset \text{LHS of~\eqref{eq:broadcast:asymptotic}}.
\end{equation}
Finally, taking closure of the sets on both sides we have $\set{R}^\star_\epsilon(W_{\rv{YZ}|\rv{X}})\subset\text{LHS of~\eqref{eq:broadcast:asymptotic}}$.
\end{proof}

\subsection{K-receiver Broadcast Channels}

All previous discussions in this section generalize rather straightforwardly to broadcast channels with more than $2$ receivers.
In particular, we have the following generalized versions of Lemma~\ref{lem:convex:split:bipartite}, Lemma~\ref{lem:common:non-lockability} and Lemma~\ref{lem:Ds:2}.

\begin{lemma}[Multipartite Convex Split Lemma]\label{lem:convex:split:multi-partite}
Let $K$ be a positive integer and $\epsilon\in(0,1)$. 
For each nonempty subset $\set{J}$ of $\{1,\ldots,K\}$, let $\delta_\set{J}\in(0,1)$, and suppose
\begin{equation}
\sum_{\set{J}\subset\{1,\ldots,K\}: \set{J\neq\emptyset}} \delta_\set{J}^2 \leq \epsilon^2.
\end{equation}
Let $(\rv{X},\rv{Y}_1,\ldots,\rv{Y}_K)$ be jointly distributed over $\set{X}\times\set{Y}_1\times\cdots\times\set{Y}_K$ with pmf $p_{\rv{X}\rvs{Y}_1^K}$.
For each $i\in\{1,\ldots,K\}$, let $q_{\rv{Y}_i}$ be a pmf over $\set{Y}_i$. (Note that in general $q_{\rv{Y}_i}\neq q_{\rv{Y}_j}$ for $i\neq j$.)
Let $\mathbf{M}_1^K=(M_1,\ldots,M_K)$ be a $K$-length vector of positive integers such that for all nonempty subset $\set{J}$ of $\{1,\ldots,K\}$
\begin{equation}
\sum_{i\in\set{J}} \log{M}_i \geq D_{s+}^{\epsilon_\set{J}}\infdiv*{p_{\rv{X}\rvs{Y}_\set{J}}}{p_\rv{X}\times \prod_{i\in\set{J}}q_{\rv{Y}_i}} - \log{\delta_\set{J}^2}
\end{equation}
for some $\{\epsilon_\set{J}\}_{\set{J}\subset\{1,\ldots,K\}:\set{J}\neq\emptyset}$ such that
\begin{equation}
\sum_{\set{J}\subset\{1,\ldots,K\}:\set{J}\neq\emptyset} \epsilon_\set{J} \leq \epsilon - \sqrt{\sum_{\set{J}\subset\{1,\ldots,K\}:\set{J}\neq\emptyset} \delta_\set{J}^2}.
\end{equation}
For each $i\in\{1,\ldots,K\}$, let $\rv{J}_i$ be independently uniformly distributed on $\{1,\ldots,M_i\}$.
Let the joint random variables $(\rvs{J}_1^K,\rv{X},\rvs{Y}_{1,1}^{M_1},\ldots,\rvs{Y}_{K,1}^{M_K})$ be distributed according to 
\begin{equation}
p_{\rv{X}, \rvs{Y}_{1,1}^{M_1},\ldots \rvs{Y}_{K,1}^{M_K}|\rvs{J}_1^K}(x,\mathbf{y}_{1,1}^{M_1},\ldots,\mathbf{y}_{K,1}^{M_K}|\mathbf{j}_1^K) = p_{\rv{X}\rvs{Y}_1^K}(x,y_{1,j_1},\ldots,y_{K,j_K})\cdot \prod_{i=1}^{K} \prod_{j\in\{1,\ldots,M_i\}\setminus\{j_i\}} q_{\rv{Y}_i}(y_{i,j}).
\end{equation}
Then,
\begin{equation}
\norm{p_{\rv{X},\rvs{Y}_{1,1}^{M_1},\ldots,\rvs{Y}_{K,1}^{M_K}} - p_\rv{X}\times \prod_{i=1}^{K} \prod_{j=1}^{M_i} q_{\rv{Y}_{i,j}}}_{\fnc{tvd}} \leq \epsilon
\end{equation}
where for each $i=1,\ldots,K$, $q_{\rv{Y}_{i,j}}= q_{\rv{Y}_{i}}$ for all $j=1,\ldots, M_i$.
\end{lemma}

Given discrete random variables $\rv{X}\rv{Y}_1\ldots\rv{Y}_k$ with joint pmf $p_{\rv{X}\rvs{Y}_1^k}\in\set{P}(\set{X}\times\set{Y}^k)$ where $k$ is some positive integer, the max-mutual information of $\rv{X}$ \vs $\rv{Y}_1\ldots\rv{Y}_k$ is defined as
\begin{equation}\label{def:K-partite:Imax}
    I_{\max}(\rv{X};\rv{Y}_1;\cdots;\rv{Y}_k) \defeq \inf_{q_{\rv{Y}_i}\in\set{P}(\set{Y}_i), i=1,\ldots,k} D_{\max}\infdiv*{p_{\rv{X\rvs{Y}_1^k}}}{p_\rv{X}\times q_{\rv{Y}_1} \times \cdots \times q_{\rv{Y}_k}}
\end{equation}
where $p_\rv{X}$ is the marginal distribution of $\rv{X}$ induced from $p_{\rv{X}\rvs{Y}_1^k}$.

\begin{lemma}\label{lem:k:non-lockability}
Let $(\rv{X}, \rv{Y}_1, \tilde{\rv{Y}}_1, \ldots, \rv{Y}_k, \tilde{\rv{Y}}_k)$ be joint random variables distributed on $\set{X}\times\set{Y}_1\times\tilde{\set{Y}}_1\times\cdots\set{Y}_k\times\tilde{\set{Y}}_k$ where $k$ is some positive integer.
Suppose that all the sets involved above are finite.
Then,
\begin{equation}
    I_{\max}(\rv{X};\rv{Y}_1\tilde{\rv{Y}}_1;\cdots;\rv{Y}_k\tilde{\rv{Y}}_k) \leq I_{\max}(\rv{X};\rv{Y}_1;\rv{Y}_k) + \sum_{i=1}^k \log{\size{\tilde{\set{Y}}_i}}.
\end{equation}
\end{lemma}

\begin{lemma}\label{lem:Ds:k}
Suppose $p_\rv{X\rvs{Y}_1^k}\in\set{P}(\set{X}\times\set{Y}_1\times\cdots\times\set{Y}_k)$ and $q_{\rv{Y}_i}\in\set{P}(\set{Y}_i)$ for each $i\in\{1,\ldots,k\}$, where $k$ is some positive integer.
For any $\epsilon\in(0,1)$, and $\delta\in(0,\frac{1-\epsilon}{k})$, we have
\begin{equation}
\inf_{q_{\rv{Y}_1},\ldots, q_{\rv{Y}_k}} D_{s+}^{\epsilon}\infdiv*{p_{\rv{X}\rvs{Y}_1^k}}{p_\rv{X}\times q_{\rv{Y}_1}\times \cdots \times q_{\rv{Y}_k}} \geq D_{s+}^{\epsilon+k\delta}\infdiv*{p_{\rv{X}\rvs{Y}_1^k}}{p_\rv{X}\times p_{\rv{Y}_1}\times\cdots\times p_{\rv{Y}_k}} + k\log{\delta}.
\end{equation}
\end{lemma}

\par
Using Lemma~\ref{lem:convex:split:multi-partite} and Lemma~\ref{lem:k:non-lockability}, we can generalize Proposition~\ref{prop:1abfsbc} and Proposition~\ref{prop:1cbfsbc} to $K$-receiver broadcast channels, respectively, as follows.

\begin{proposition}\label{prop:1abfsbc:K}
    Let $K$ be a positive integer.
    Let $W_{\rvs{Y}_1^K|\rv{X}}$ be a channel from $\set{X}$ to $\set{Y}_1\times\cdots\times\set{Y}_K$, and let $\epsilon\in(0,1)$.
    For each subset $\set{J}$ of $\{1,\ldots,K\}$, let $\epsilon_\set{J}$ and $\delta_\set{J}<\epsilon_\set{J}$ be positive real numbers.
    Suppose
    \begin{equation}
    \sum_{\set{J}\subset\{1,\ldots,K\}:\set{J}\neq\emptyset} \epsilon_\set{J}\leq\epsilon.
    \end{equation}
    Let $q_{\rv{Y}_i}\in\set{P}(\set{Y}_i)$ for each $i=1,\ldots,K$.
    The following set is a subset of $\set{M}^\star_\epsilon(W_{\rvs{Y}_1^K|\rv{X}})$
    \begin{equation}
    \set{M}^\text{in}_\epsilon(W_{\rvs{Y}_1^K|\rv{X}}) \defeq \left\{ \mathbf{M}\in\mathbb{Z}_{>0}^K \middle\vert 
    \begin{aligned}
    \sum_{i\in\set{J}}\log{M_i} > D_{s+}^{\epsilon_\set{J}-\delta_\set{J}}\infdiv*{p_\rv{X}\cdot W_{\rvs{Y}_\set{Y}|\rv{X}}}{p_\rv{X}\times \prod_{i\in\set{J}}q_{\rv{Y}_i}} - \log{\delta_\set{J}^2} \\
    \forall \set{J}\subset\{1,\ldots,K\} \text{ with } \set{J}\neq\emptyset, \forall p_\rv{X}\in\set{P}(\set{X})
    \end{aligned}
    \right\}.
    \end{equation}
\end{proposition}

\begin{proposition}\label{prop:1cbfsbc:K}
    Let $K$ be a positive integer.
    Let $W_{\rvs{Y}_1^K|\rv{X}}$ be a channel from $\set{X}$ to $\set{Y}_1\times\cdots\times\set{Y}_K$, and let $\epsilon\in(0,1)$.
    For each subset $\set{J}$ of $\{1,\ldots,K\}$, let $\delta_\set{J}\in(0,1)$.
    The following set is a superset of $\set{M}^\star_\epsilon(W_{\rvs{Y}_1^K|\rv{X}})$
    \begin{equation}
    \set{M}^\text{out}_\epsilon(W_{\rvs{Y}_1^K|\rv{X}}) \defeq \left\{ \mathbf{M}\in\mathbb{Z}_{>0}^K \middle\vert
    \begin{aligned}
    \sum_{i\in\set{J}}\log{M_i} \geq \adjustlimits\inf_{\{q_{\rv{Y}_i}\}_{i\in\set{J}}} \sup_{p_\rv{X}} D_{s+}^{\epsilon+\delta_\set{J}}\infdiv*{p_\rv{X}\cdot W_{\rvs{Y}_\set{J}|\rv{X}}}{p_\rv{X}\times \prod_{i\in\set{J}} q_{\rv{Y}_i}} + \log{\delta_\set{J}}\\
    \forall \set{J}\subset\{1,\ldots,K\} \text{ with } \set{J}\neq\emptyset
    \end{aligned}
    \right\}.
    \end{equation}
\end{proposition}

\par
With the help of Lemma~\ref{lem:Ds:k}, and following a similar path as that led to Theorem~\ref{thm:broadcast:asymptotic}, we have the following asymptotic results for simulating $K$-receiver broadcast channels.

\begin{theorem}\label{thm:broadcast:asymptotic:K}
    Let $W_{\rvs{Y}_1^K|\rv{X}}$ be a channel from $\set{X}$ to $\set{Y}_1\times\cdots\times\set{Y}_K$, and let $\epsilon\in(0,1)$.
    It holds that
    \begin{equation}
    \set{R}^\star_\epsilon(W_{\rvs{Y}_1^K|\rv{X}})=\left\{(r_1,\cdots,r_n)\in\mathbb{R}_{>0}^n\middle|\sum_{i\in\set{J}} r_i\geq\tilde{C}(W_{\rvs{Y}_\set{J}|\rv{X}})\quad\forall \set{J}\subset\{1,\ldots,K\}\text{ with }\set{J}\neq\emptyset\right\},
    \end{equation}
    where $\tilde{C}(W_{\rvs{Y}_\set{J}|\rv{X}})$ has been defined in~\eqref{eq:def:K-partite:tilde:C}.
\end{theorem}

\subsection{Blahut--Arimoto Algorithms and Numerical Results}

In the remainder of this section, we present a Blahut--Arimoto type algorithm to compute the mutual information and multipartite common information of broadcast channels together with a couple of numerical examples. Generalizing the Blahut-Arimoto algorithm to compute the multipartite common information is fairly straightforward, but has not been previously done in the literature to the best of our knowledge.

We consider a broadcast channel with $K$ receivers. For the simplest case where $K$ being $1$, the capacity of this (point-to-point) channel can computed using the Blahut--Arimoto algorithm~\cite{blahut1972computation, arimoto1972algorithm}.
Specifically, given $W_{\rv{Y}|\rv{X}}$, the optimization problem
\begin{align}
    \max_{p_{\rv{X}}\in\set{P}(\set{X})} I(\rv{X};\rv{Y})_{p_\rv{X}\cdot W_{\rv{Y}|\rv{X}}}
\end{align}
can be solved in an iterative manner, \ie, at the $t$-th step, we compute
\begin{equation}
    p^{(t)}_{\rv{X}} \propto p^{(t-1)}_{\rv{X}} \cdot \exp{\left(D(W_{\rv{Y}|\rv{X}}\|p^{(t-1)}_{\rv{Y}})\right)} ,
\end{equation}
where $p^{(t-1)}_{\rv{Y}} \defeq \sum_{x\in\set{X}}W_{\rv{Y}|\rv{X}}(\cdot|x)\cdot p^{(t-1)}_{\rv{X}}(x)$.
An \textit{a priori} guarantee can be made regarding the convergence speed for such an algorithm.
Namely, with the initial guess being the uniform distribution, at the $n$-th iteration, we have
\begin{align}
    \abs{C(W_{\rv{Y}|\rv{X}}) - C(n))} \leq \frac{\log |\mathcal{X}|}{n}\, ,
\end{align}
where $C(n)$ is the estimate of the capacity after $n$ iterations.

Regarding the $K$-partite common information of the broadcast channel $W_{\rvs{Y}_1^K|\rv{X}}$, we consider the following optimization problem
\begin{align}
    \tilde{C}(W_{\rvs{Y}_1^K|\rv{X}})= \max_{p_\rv{X}\in\set{P}(\set{X})} I(\rv{X};\rv{Y}_1;\rv{Y}_2;\ldots;\rv{Y}_K)_{p_\rv{X}\cdot W_{\rvs{Y}_1^K|\rv{X}}}.
\end{align}
Using similar techniques, we can solve this optimization problem in an iterative manner with the following update rule 
\begin{align}
\label{eq:update_rule_classical_multipartite}
    p^{(t)}_{\rv{X}} \propto p^{(t-1)}_{\rv{X}}\cdot \exp{\left(\frac{D\infdiv*{W_{\rvs{Y}_1^K|\rv{X}}}{p^{(t-1)}_{\rv{Y}_1}\times\cdots\times p^{(t-1)}_{\rv{Y}_K}}}{K}\right)} \qquad t=1,2,\ldots,n,\ldots,
\end{align}
where $p^{(t-1)}_{\rv{Y}_i} \defeq \sum_{x\in\set{X}}W_{\rv{Y}_i|\rv{X}}(\cdot|x)\cdot p^{(t-1)}_{\rv{X}}(x)$.
This allows us to efficiently compute the multipartite common information terms in Theorems~\ref{thm:broadcast:asymptotic} and~\ref{thm:broadcast:asymptotic:K}.
In the $K$-partite case, the speed of convergence is given by 
\begin{align}
    \abs{\tilde{C}(W_{\rvs{Y}_1^K|\rv{X}}) - \tilde{C}(n)} \leq \frac{K\log |\mathcal{X}|}{n}\, .
\end{align}
A more detailed discussion is included in Appendix~\ref{app:blahut:arimoto:details}. In the following, we present a couple of numerical examples using the aforementioned method.\\


\subsubsection{Two Receivers}

Here, we compute the simulation region for a two receiver broadcast channel. For binary random variables $\rv{X}, \rv{Y}$ and $\rv{Z}$, let $W_{\rv{YZ}|\rv{X}} = \fnc{BSC}_{\rv{Z}|\rv{Y}}\circ \fnc{BSC}_{\rv{Y}|\rv{X}}$, where $\fnc{BSC}_{\rv{Z}|\rv{Y}}$ and $\fnc{BSC}_{\rv{Y}|\rv{X}}$ are binary symmetric channels with crossover probability $\delta = 0.3$. 
Fig,~\ref{fig:degraded_BSC_region-intro} shows the simulation region computed using the Blahut--Arimoto type algorithm.\\


\subsubsection{Three Receivers}

\begin{figure}
\centering
\includegraphics{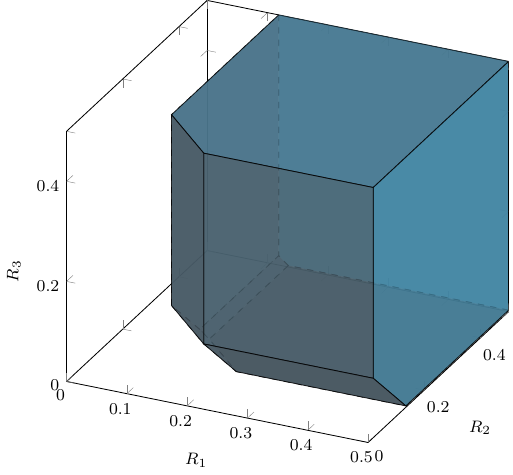}
\caption{Asymptotic simulation region for the broadcast channel $W_{\rvs{Y}_1^3|\rv{X}} = \fnc{BSC}_{\rv{Y}_3|\rv{Y}_2}\circ \fnc{BSC}_{\rv{Y}_2|\rv{Y}_1}\circ \fnc{BSC}_{\rv{Y}_1|\rv{X}}$, where $\fnc{BSC}$ refers to a binary symmetric channel with crossover probability $\delta = 0.3$.}    
\label{fig:degraded_BSC_3_region}
\end{figure}

Here, we consider a broadcast channel with three receivers. For binary random variables $\rv{X}, \rv{Y_1}, \rv{Y_2}$ and $\rv{Y_3}$, let $W_{\rv{Y_1Y_2Y_3}|\rv{X}} = \fnc{BSC}_{\rv{Y_3}|\rv{Y_2}}\circ \fnc{BSC}_{\rv{Y_2}|\rv{Y_1}}\circ \fnc{BSC}_{\rv{Y_1}|\rv{X}}$, where $\fnc{BSC}$ is a binary symmetric channel with crossover probability $\delta = 0.3$. Fig,~\ref{fig:degraded_BSC_3_region} shows the simulation region computed using the Blahut--Arimoto type algorithm.



\section{Discussions and Conclusions}
\label{sec:conclusion}


In the first part of our work, we have introduced one-shot and finite-blocklength bounds for the amount of communication needed for channel simulation in both (unconstrained) random-assisted and no-signaling-assisted scenarios.
Using these bounds, we have derived tight higher-order asymptotic expansions in the small and moderate-deviation regimes, and our results imply that channel interconversion becomes irreversible for finite blocklengths.
Despite our seemingly tight one-shot bounds (see~\eqref{eq:achievability}), our (small-deviation) asymptotic expansion is only tight up to the second order (see~\eqref{eq:second-order-simulation}).
On the one hand, this is due to the lack of a direct asymptotic analysis of the $\epsilon$-smoothed channel max-mutual information, \ie, a higher-order expansion of $I_{\max}^\epsilon(W_{\rv{Y}|\rv{X}}^{\tensor n})$.
On the other hand, following the techniques in~\cite{tomamichel2013tight} carefully, we end up with upper and lower bounds on the $n$-fold channel simulation cost for most (\ie, non-exotic~\cite{polyanskiy2010channel}) channels that differ by $\log n + o(\log n)$.

It is worth noting that, in the channel coding case, the third-order terms of the binary erasure channels and the binary symmetric channels differ by $\frac{1}{2}\log{n}$, whereas in the channel simulation case it remains an open question whether the third-order term is channel-dependent. 
Nevertheless, the high-order asymptotic expansion of the smoothed channel max-mutual information is an open and interesting problem.

In the second part of our work, we have extended our results to network settings, where we have characterized the amount of communication needed for the simulation of broadcast channels in both one-shot and asymptotic settings.
Despite the similarity between the one-shot results for the point-to-point channel simulation (see Proposition~\ref{prop:Ds:bounds}) and the broadcast channel simulation (see Propositions~\ref{prop:1abfsbc} and~\ref{prop:1cbfsbc}), the high-order asymptotic expansion of the latter turned out to be more elusive due to what is known as the joint smoothing problem.

\section*{Acknowledgments}

We thank Patrick Hayden for suggesting the topic of broadcast channel simulation (\cf~the abstract~\cite{hayden2007reverse}), Rahul Jain for helpful discussions on the techniques of rejection sampling, and Vincent Tan and Lei Yu for pointing out typos and some related literature.
This research is supported by the National Research Foundation, Singapore, and A*STAR under its CQT Bridging Grant.
MC and MT are supported by NUS startup grants (R-263-000-E32-133 and R-263-000-E32-731).
MB acknowledges funding from the European Research Council (ERC Grant Agreement No.~948139).  


\bibliographystyle{IEEEtran}
\bibliography{reference}


\appendices
\section{Proof of Lemma~\ref{lem:quasi:convex:Ds}}\label{app:proof:lem:quasi:convex:Ds}
\begin{proof}
    Let $a\defeq\sup_{x\in\set{X}} D_{s+}^\epsilon\infdiv*{p_{\rv{Y}|\rv{X}}(\cdot|x)}{q_{\rv{Y}|\rv{X}}(\cdot|x)}$.
    By the definition of $D_{s+}$, we know that for all $\delta>0$
    \begin{equation}
    \Pr\nolimits_{\rv{Y}_x\sim p_{\rv{Y}|\rv{X}}(\cdot|x)}\left[\log{\frac{p_{\rv{Y}|\rv{X}}(\rv{Y}_x|x)}{q_{\rv{Y}|\rv{X}}(\rv{Y}_x|x)}}>a+\delta\right] < \epsilon \quad \forall x\in\set{X}.
    \end{equation}
    For each $x\in\set{X}$, denote $\set{A}\defeq\left\{(x,y)\in\set{X}\times\set{Y}\middle\vert \log{\frac{p_{\rv{Y}|\rv{X}}(y|x)}{q_{\rv{Y}|\rv{X}}(y|x)}}>a+\delta\right\}$, and $\set{A}_x\defeq\left\{y\in\set{Y}\middle\vert\log{\frac{p_{\rv{Y}|\rv{X}}(y|x)}{q_{\rv{Y}|\rv{X}}(y|x)}}>a+\delta\right\}$.
    We have 
    \begin{equation}
    p_{\rv{XY}}(\set{A})  = \sum_{x\in\set{X}} p_\rv{X}(x) \cdot p_{\rv{Y}|\rv{X}=x}(\set{A}_x) = \sum_{x\in\set{X}} p_\rv{X}(x) \cdot \Pr\nolimits_{\rv{Y}_x\sim p_{\rv{Y}|\rv{X}}(\cdot|x)}\left[\log{\frac{p_{\rv{Y}|\rv{X}}(\rv{Y}_x|x)}{q_{\rv{Y}|\rv{X}}(\rv{Y}_x|x)}}>a+\delta\right] < \epsilon.
    \end{equation}
    Thus, by definition, $D_{s+}^\epsilon\infdiv*{p_{\rv{Y}|\rv{X}}\cdot p_\rv{X}}{q_{\rv{Y}|\rv{X}}\cdot p_\rv{X}}\leq a+\delta$.
    Since this holds for all $\delta>0$, it must also hold when $\delta=0$, which finishes the proof.
\end{proof}
\section{Proof of Lemma~\ref{lem:maxDmax:supDs}}\label{app:proof:lem:maxDmax:supDs}
The following proof is partially inspired by ~\cite[Eq.~(48)]{anshu2020partially} and~\cite[Eq.~(17)]{tomamichel2013hierarchy}.
\begin{proof}
Let $d\defeq\inf_{\tilde{W}_{\rv{Y}|\rv{X}}:\norm{\tilde{W}_{\rv{Y}|\rv{X}}-W_{\rv{Y}|\rv{X}}}_\fnc{tvd}\leq\epsilon} \max_{x\in\set{X}} D_{\max}\infdiv*{\tilde{W}_{\rv{Y}|\rv{X}}(\cdot|x)}{q_\rv{Y}}$, and let $\hat{W}_{\rv{Y}|\rv{X}}$ be an optimal DMC for this optimization problem, \ie, 
\begin{enumerate}
    \item $\norm{\hat{W}_{\rv{Y}|\rv{X}}-W_{\rv{Y}|\rv{X}}}_\fnc{tvd}\leq\epsilon$;
    \item $d= \max_{x\in\set{X}} D_{\max}\infdiv*{\hat{W}_{\rv{Y}|\rv{X}}(\cdot|x)}{q_\rv{Y}}$.
\end{enumerate}
Starting with 1), we know for any $\xi\in\mathbb{R}$
\begin{align}
\epsilon &\geq \sum_{y\in\set{Y}: W_{\rv{Y}|\rv{X}}(y|x)>2^{d+\xi}\cdot q_\rv{Y}(y)} W_{\rv{Y}|\rv{X}}(y|x) - \hat{W}_{\rv{Y}|\rv{X}}(y|x)  \\
&\geq \sum_{y\in\set{Y}: W_{\rv{Y}|\rv{X}}(y|x)>2^{d+\xi}\cdot q_\rv{Y}(y)} W_{\rv{Y}|\rv{X}}(y|x) - \sum_{y\in\set{Y}: W_{\rv{Y}|\rv{X}}(y|x)>2^{d+\xi}\cdot q_\rv{Y}(y)} 2^{d} \cdot q_\rv{Y}(y) \\
&\geq \sum_{y\in\set{Y}: W_{\rv{Y}|\rv{X}}(y|x)>2^{d+\xi}\cdot q_\rv{Y}(y)} W_{\rv{Y}|\rv{X}}(y|x) - \sum_{y\in\set{Y}: W_{\rv{Y}|\rv{X}}(y|x)>2^{d+\xi}\cdot q_\rv{Y}(y)} 2^{d} \cdot 2^{-d-\xi} \cdot  W_{\rv{Y}|\rv{X}}(y|x) \\
&\geq \sum_{y\in\set{Y}: W_{\rv{Y}|\rv{X}}(y|x)>2^{d+\xi}\cdot q_\rv{Y}(y)} W_{\rv{Y}|\rv{X}}(y|x) - 2^{-\xi},
\end{align}
\ie, $\Pr_{\rv{Y}\sim W_{\rv{Y}|\rv{X}=x}}\left[\log{\frac{W_{\rv{Y}|\rv{X}}(\rv{Y}|x)}{q_\rv{Y}(\rv{Y})}}>d+\xi\right] = \sum_{y\in\set{Y}: W_{\rv{Y}|\rv{X}}(y|x)>2^{d+\xi}\cdot q_\rv{Y}(y)} W_{\rv{Y}|\rv{X}}(y|x) \leq \epsilon+2^{-\xi}$ for all $x\in\set{X}$.
Thus, for any $p_\rv{X}\in\set{P}(\set{X})$, 
\begin{equation}
\Pr\nolimits_{\rv{XY}\sim W_{\rv{Y}|\rv{X}}}\left[\log{\frac{p_\rv{X}(\rv{X})\cdot W_{\rv{Y}|\rv{X}}(\rv{Y}|\rv{X})}{p_\rv{X}(\rv{X})\cdot q_\rv{Y}(\rv{Y})}}>d+\xi\right] = \sum_{x\in\set{X}} p_\rv{X}(x)\cdot \Pr_{\rv{Y}\sim W_{\rv{Y}|\rv{X}=x}}\left[\log{\frac{W_{\rv{Y}|\rv{X}}(\rv{Y}|x)}{q_\rv{Y}(\rv{Y})}}>d+\xi\right] \leq \epsilon + 2^{-\xi}.
\end{equation}
Considering $\xi\gets -\log{\delta}+\Delta$ where $\Delta>0$, and by the definition of $D_{s+}^{\epsilon_\delta}\infdiv*{p_\rv{X}\cdot W_{\rv{Y}|\rv{X}}}{p_\rv{X}\times q_\rv{Y}}$, we know (for all $p_\rv{X}\in\set{P}(\set{X})$)
\begin{equation}
D_{s+}^{\epsilon+\delta}\infdiv*{p_\rv{X}\cdot W_{\rv{Y}|\rv{X}}}{p_\rv{X}\times q_\rv{Y}} \leq d -\log{\delta}+\Delta 
\end{equation}
for all $\Delta>0$.
Therefore,
\begin{equation}
\sup_{p_\rv{X}\in\set{P}(\set{X})} D_{s+}^{\epsilon+\delta}\infdiv*{p_\rv{X}\cdot W_{\rv{Y}|\rv{X}}}{p_\rv{X}\times q_\rv{Y}} \leq d -\log{\delta}
\end{equation}
which is equivalent to~\eqref{eq:maxDmax:supDs}.
\end{proof}
\section{Proof of~(\ref{eq:Chebyshev:lower:Ds})} \label{app:Chebyshev:lower:Ds}
\begin{lemma}\label{lem:5a}
Let $p$ and $q\in\set{P}(\set{Y})$ where $\set{X}$ is some finite set.
For all positive integer $n$, it holds that
\begin{equation}
D_{s+}^\epsilon\infdiv*{p^{\tensor}}{q^{\tensor n}} \geq nD\infdiv*{p}{q} - \sqrt{\frac{nV\infdiv*{p}{q}}{1-\epsilon}}.
\end{equation}
\end{lemma}
\begin{proof}
By the Chebyshev's inequality, for all $a< nD\infdiv*{p}{q}$, it holds that
\begin{equation}
\Pr\nolimits_{\rvs{X}_1^n\sim p^{\tensor n}}\left[\sum_{i=1}^n \log{\frac{p(\rv{X}_i)}{q(\rv{X}_i)}}>a\right] \geq 1-\frac{nV\infdiv*{p}{q}}{(a-nD\infdiv*{p}{q})^2}.
\end{equation}
Thus, 
\begin{align}
D_{s+}^{\epsilon}\infdiv*{p^{\tensor n}}{q^{\tensor n}}
= & \inf\left\{a\geq 0\middle\vert \Pr\nolimits_{\rvs{X}_1^n\sim p^{\tensor n}}\left[\sum_{i=1}^n \log{\frac{p(\rv{X}_i)}{q(\rv{X}_i)}}>a\right]<\epsilon \right\}\\
\tos{a)}{\geq} & \inf\left\{a<nD\infdiv*{p}{q} \middle\vert \Pr\left[\sum_{i=1}^n \log{\frac{p(\rv{X}_i)}{q(\rv{X}_i)}}>a\right]<\epsilon \right\} \\
\geq & \inf\left\{a<nD\infdiv*{p}{q} \middle\vert 1-\frac{nV\infdiv*{p}{q}}{(a-nD\infdiv*{p}{q})^2}<\epsilon \right\} \\
=& nD\infdiv*{p}{q} - \sqrt{\frac{nV\infdiv*{p}{q}}{1-\epsilon}},
\end{align}
where, in step a), we have assumed the set on the RHS to be nonempty, since otherwise one shall have $D_{s+}^{\epsilon}\infdiv*{p^{\tensor n}}{q^{\tensor n}}\geq nD\infdiv*{p}{q}$ and the lemma holds trivially.
\end{proof}
\section{Proof of Lemma~\ref{lem:inf:max}} \label{app:proof:lem:inf:max}
\begin{proof}
When $f$ is the zero function, the statement holds trivially. 
Otherwise, we first consider the case when $f$ is strictly positive.
In this case, we define a pmf on $\set{X}$ as $p^\star(x)\defeq f(x)/\sum_{x}f(x)$.
It is obvious that $\max_{x}\frac{f(x)}{p^\star(x)}=\sum_{x}f(x)$.
Assume that there exists some other pmf $p\in\set{P}(\set{X})$ such that $\max_{x}\frac{f(x)}{p(x)}<\max_{x}\frac{f(x)}{p^\star(x)}=\sum_{x}f(x)$.
Then it must hold that $p(x)>p^\star(x)$ for all $x\in\set{X}$, which is impossible.
Hence, $p^\star$ is a minimizer, and 
\begin{equation}
\inf_{p\in\set{P}(\set{X})} \max_{x\in\set{X}} \frac{f(x)}{p(x)} = \max_{x}\frac{f(x)}{p^\star(x)}=\sum_{x}f(x).
\end{equation}
Now, consider the case when $f$ is zero for some (but not all) $x\in\set{X}$.
Denote $\set{X}_0\defeq\{x\in\set{X}|f(x)=0\}$.
We define a sequence of pmfs as 
\begin{equation}
p_n(x) \defeq\begin{cases} 2^{-\frac{1}{n}}\cdot p^\star_\rv{X}(x) &\text{ if } f(x)>0 \\
    \frac{1-2^{-\frac{1}{n}}}{\size{\set{X}_0}} &\text{ otherwise.} \end{cases}
\end{equation}
In this case, we have 
\begin{equation}
\max_{x\in\set{X}} \frac{f(x)}{p_n(x)} = 2^{\frac{1}{n}}\cdot \sum_{x\in\set{X}} f(x),
\end{equation}
which is monotonically decreasing, and tends to $\sum_{x\in\set{X}} f(x)$.
This shows that
\begin{equation}
\inf_{p\in\set{P}(\set{X})} \max_{x\in\set{X}} \frac{f(x)}{p(x)}  \leq \sum_{x\in\set{X}} f(x).
\end{equation}
On the other hand , assume that there exists some pmf $p\in\set{P}(\set{X})$ such that $\max_{x}\frac{f(x)}{p(x)}<\sum_{x}f(x)$.
Then, it must hold that $\frac{f(x)}{p(x)}< 2^{\frac{1}{n}}\cdot \sum_{x\in\set{X}} f(x) = \frac{f(x)}{p_n(x)}$ for all $x\in\set{X}\setminus\set{X}_0$ for all $n$, \ie, $p(x)>p_n(x)$ for all $x\in\set{X}\setminus\set{X}_0$ for all $n$.
This implies that $p(x)\geq \lim_{n\to\infty}p_n(x)=p^\star(x)$ for all $x\in\set{X}\setminus\set{X}_0$.
Since $p^\star(x)=0$ for $x\in\set{X}_0$, it also holds that $p(x)\geq p^\star(x)$ for all $x\in\set{X}_0$, and thus $p=p^\star$, which contradicts with the assumption that $\max_{x}\frac{f(x)}{p(x)}<\sum_{x}f(x)$.
Hence, 
\begin{equation}
\inf_{p\in\set{P}(\set{X})} \max_{x\in\set{X}} \frac{f(x)}{p(x)}  \geq \sum_{x\in\set{X}} f(x),
\end{equation}
which finishes the proof.
\end{proof}
\section{Proof of Lemma~\ref{lem:convex:split:bipartite}}
\label{app:proof:lem:convex:split:bipartite}
Note that the following proof is almost identical to the proof for~\cite[Fact~7]{anshu2017unified}, and is included here mainly for completeness.
In addition, note that this proof can be generalized rather straightforwardly to the case when $K>2$.
\begin{proof}
Define the subset $\set{A}\subset\set{X}\times\set{Y}\times\set{Z}$ as
\begin{equation}
\set{A} \defeq \left\{(x,y,z)\middle\vert \begin{aligned}
    \frac{p_\rv{XY}(x,y)}{p_\rv{X}(x)\cdot q_\rv{Y}(y)} &\leq \delta_1^2 \cdot M\\
    \frac{p_\rv{XYZ}(x,z)}{p_\rv{X}(x) \cdot r_\rv{Z}(z)} &\leq \delta_2^2 \cdot N\\
    \frac{p_\rv{XYZ}(x,y,z)}{p_\rv{X}(x)\cdot q_\rv{Y}(y) \cdot r_\rv{Z}(z)} &\leq \delta_3^2 \cdot MN
\end{aligned}
\right\}.
\end{equation}
In this case, 
\begin{align}
p_\rv{XYZ}(\set{A}^C)
& \leq \Pr\left[\frac{p_\rv{XY}(\rv{X},\rv{Y})}{p_\rv{X}(\rv{X})\cdot q_\rv{Y}(\rv{Y})} > \delta_1^2 \!\cdot\! M\right] &+  \Pr \left[\frac{p_\rv{XZ}(\rv{X},\rv{Z})}{p_\rv{X}(\rv{X})\cdot r_\rv{Z}(\rv{Z})} > \delta_2^2 \!\cdot\! N\right] + \Pr\left[\frac{p_\rv{XYZ}(\rv{X},\rv{Y},\rv{Z})}{p_\rv{X}(\rv{X})\cdot q_\rv{Y}(\rv{Y}) \cdot r_\rv{Z}(\rv{Z})} > \delta_3^2 \!\cdot\! MN\right]\\
&\leq \epsilon_1 + \epsilon_2 + \epsilon_3 \leq \epsilon-\delta,
\end{align}
where $\rv{XYZ}\sim p_\rv{XYZ}$ and $\delta\defeq\sqrt{\delta_1^2+\delta_2^2+\delta_3^2}$.

For each $x\in\set{X}$, define $\set{A}_x\defeq \{(y,z)\in\set{Y}\times\set{Z}: (x,y,z)\in\set{A}\}$.
Denote $\epsilon_x\defeq p_{\rv{YZ}|\rv{X}=x}(\set{A}_x^C)$. 
Then, $\sum_x \epsilon_x \cdot p_\rv{X}(x) = p_\rv{XYZ}(\set{A}^C)\leq \epsilon-\delta$.
Let the random variables $(\rv{X}',\rv{Y}',\rv{Z}')$ be distributed as $p_{\rv{X}'\rv{Y}'\rv{Z}'}(x,y,z) = p_\rv{X}(x) \cdot p_{\rv{Y}'\rv{Z}'|\rv{X}'}(y,z|x)$ where $p_{\rv{Y}'\rv{Z}'|\rv{X}'}$ is defined as 
\begin{equation}
p_{\rv{Y}'\rv{Z}'|\rv{X}'}(y,z|x)
\defeq \begin{cases}p_{\rv{YZ}|\rv{X}}(y,z|x)+\epsilon_x\cdot q_\rv{Y}(y)\cdot r_\rv{Z}(z) &\forall (y,z)\in\set{A}_x\\ \epsilon_x\cdot q_\rv{Y}(y)\cdot r_\rv{Z}(z) & \text{otherwise}\end{cases}.
\end{equation}
This distribution is $(\epsilon-\delta)$-close to the distribution of $(\rv{X},\rv{Y},\rv{Z})$ as 
\begin{align}
\norm{p_\rv{XYZ}-p_{\rv{X}'\rv{Y}'\rv{Z}'}}_\fnc{tvd}
=& \sum_{x\in\set{X}} p_\rv{X}(x)\cdot \norm{p_{\rv{YZ}|\rv{X}=x} - p_{\rv{Y}'\rv{Z}'|\rv{X}'=x}}_\fnc{tvd} \\
\leq& \frac{1}{2} \sum_{x\in\set{X}} p_\rv{X}(x) \cdot \left(\sum_{(y,z)\in\set{Y}\times\set{Z}} \epsilon_x \cdot q_\rv{Y}(y) \cdot r_\rv{Z}(z) + \sum_{(y,z)\not\in\set{A}_x} p_{\rv{YZ}|\rv{X}}(y,z|x) \right)\\
= & \frac{1}{2} \cdot \sum_{x\in\set{X}} p_\rv{X}(x) \cdot\left(\epsilon_x + p_{\rv{YZ}|\rv{X}=x}(\set{A}_x^C)\right) = p_\rv{XYZ}(\set{A}^C) \leq \epsilon-\delta.
\end{align}
Additionally, by the definitions of $\{\set{A}_x\}_{x\in\set{X}}$ and $\set{A}$, we have (notice that $M\cdot N\geq 1/\delta_3^2$ due to~\eqref{eq:bcsl:anshu:3})
\begin{align}
p_{\rv{Y}'\rv{Z}'|\rv{X}'}(y,z|x) &\leq (\delta_3^2\cdot MN+ \epsilon_x) \cdot q_\rv{Y}(y)\cdot r_\rv{Z}(z)\\
&\leq (\delta_3^2\cdot MN+ 1) \cdot q_\rv{Y}(y)\cdot r_\rv{Z}(z)\\
&\leq 2\delta_3^2 \cdot MN \cdot q_\rv{Y}(y)\cdot r_\rv{Z}(z).
\end{align}
Similarly, we have
\begin{align}
p_{\rv{Y}'|\rv{X}'}(y|x) &\leq 2\delta_1^2\cdot M \cdot q_\rv{Y}(y),\\
p_{\rv{Z}'|\rv{X}'}(z|x) &\leq 2\delta_3^2\cdot N \cdot r_\rv{Z}(z).
\end{align}

Now, construct $(\rv{X}',\rv{Y}'_1,\ldots,\rv{Y}'_M,\rv{Z}'_1,\ldots,\rv{Z}'_N)$ from $(\rv{X}',\rv{Y}',\rv{Z}')$ in the same fashion as the construction of $(\rv{X},\rv{Y}_1,\ldots,\rv{Y}_M$, $\rv{Z}_1,\ldots,\rv{Z}_N)$ from $(\rv{X},\rv{Y},\rv{Z})$.
We have
\begin{equation}
\norm{p_{\rv{X}',\rv{Y}'_1,\ldots,\rv{Y}'_M,\rv{Z}'_1,\ldots,\rv{Z}'_N} - p_{\rv{X},\rv{Y}_1,\ldots,\rv{Y}_M,\rv{Z}_1,\ldots,\rv{Z}_N}}_\fnc{tvd} \leq\norm{p_\rv{XYZ}-p_{\rv{X}'\rv{Y}'\rv{Z}'}}_\fnc{tvd} \leq \epsilon-\delta,
\end{equation}
and
\begin{align}
    &\phantom{=} D\infdiv*{p_{\rv{X}',\rv{Y}'_1,\ldots,\rv{Y}'_M,\rv{Z}'_1,\ldots,\rv{Z}'_N}}{p_{\rv{X}'}\times \prod_{i=1}^{M} q_{\rv{Y}'_i} \times \prod_{\ell=1}^N r_{\rv{Z}'_\ell}}\nonumber \\
    &= \frac{1}{MN} \sum_{j,k} 
        \left\{ D\infdiv*{p_{\rv{X}'\rv{Y}'_j\rv{Z}'_k}}{p_{\rv{X}'}\times q_{\rv{Y}'_j}\times r_{\rv{Z}'_k}} - D\infdiv*{p_{\rv{X}'\rv{Y}'_j\rv{Z}'_k}\times \prod_{i\neq j}q_{\rv{Y}'_i}\times \prod_{\ell\neq k}r_{\rv{Z}'_\ell}}{p_{\rv{X}',\rv{Y}'_1,\ldots,\rv{Y}'_M,\rv{Z}'_1,\ldots,\rv{Z}'_N}} \right\} \\
    &\leq \begin{aligned}[t]
        & D\infdiv*{p_{\rv{X}'\rv{Y}'\rv{Z}'}}{p_{\rv{X}'}\times q_{\rv{Y}'}\times r_{\rv{Z}'}} - \\
        & D\infdiv*{p_{\rv{X}'\rv{Y}'\rv{Z}'}}{\frac{1}{MN}p_{\rv{X}'\rv{Y}'\rv{Z}'} + \frac{N-1}{MN}p_{\rv{X}'\rv{Y}'}\times r_{\rv{Z}'} + \frac{M-1}{MN}p_{\rv{X}'\rv{Z}'}\times q_{\rv{Y}'} + \frac{(M-1)(N-1)}{MN} p_{\rv{X}'}\times q_{\rv{Y}'}\times r_{\rv{Z}'}}
        \end{aligned}\\
    &= \sum_{x,y,z} p_{\rv{X}'\rv{Y}'\rv{Z}'}(x,y,z) \cdot \log
    \left(\frac{1}{MN}\cdot \frac{p_{\rv{Y}'\rv{Z}'|\rv{X}'}(y,z|x)}{q_\rv{Y}(y)\cdot r_\rv{Z}(z)} + 
    \frac{N\!-\!1}{MN}\cdot\frac{p_{\rv{Y}'|\rv{X}'}(y|x)}{q_\rv{Y}(y)} + \frac{M\!-\!1}{MN}\cdot\frac{p_{\rv{Z}'|\rv{X}'}(z|x)}{r_\rv{Z}(z)} + \frac{(M\!-\!1)(N\!-\!1)}{MN}\right)\hspace{-3pt}\\
    &\leq  \log\left(1 + 2\delta_1^2 + 2\delta_2^2 + 2\delta_3^2\right) \leq 2 \delta^2,
\end{align}
which, by Pinsker's inequality, implies that
\begin{equation}
\norm{p_{\rv{X}',\rv{Y}'_1,\ldots,\rv{Y}'_M,\rv{Z}'_1,\ldots,\rv{Z}'_N} - p_{\rv{X}'}\times \prod_{i=1}^{M} q_{\rv{Y}'_i} \times \prod_{\ell=1}^N r_{\rv{Z}'_\ell}}_\fnc{tvd} \leq \delta.
\end{equation}
Finally, to show~\eqref{eq:convex:split:lemma:anshu}, one applies triangular inequality while noticing that $p_{\rv{X}'}=p_\rv{X}$.
\end{proof}

\section{Proof of Blahut-Arimoto Results}\label{app:blahut:arimoto:details}

\subsection{Extension Function and Update Rule}

This section extends the framework from~\cite{ramakrishnan2020computing} to the multipartite setting. Recall that we are interested in solving the following optimization problem
\begin{equation}
    \max_{p_{\rv{X}}\in\set{P}(\set{X})} I(\rv{X};\rv{Y_1};\rv{Y_2};\cdots;\rv{Y_K})_{p_\rv{X}\cdot W_{\rvs{Y}_1^K|\rv{X}}}.
\end{equation}
The method is to maximize the following bi-variate function (known as the extension function) alternatively until it converges to $\tilde{C}(W)$:
\begin{align}
    J(p_{\rv{X}}, q_{\rv{X}}) &\defeq I(\rv{X};\rv{Y_1};\rv{Y_2};\cdots;\rv{Y_K})_{p_\rv{X}\cdot W_{\rvs{Y}_1^K|\rv{X}}} + \sum_{i=1}^K D\infdiv*{p_{\rv{Y}_i}}{q_{\rv{Y}_i}} - K\cdot D\infdiv*{p_\rv{X}}{q_\rv{X}}\\
    &=-K\cdot D\infdiv*{p_{\rv{X}}}{q_{\rv{X}}} + \sum_{x\in\mathcal{X}}p_{\rv{X}}(x)\cdot D\infdiv*{W_{\rvs{Y}_1^K|\rv{X}}(\cdot|x)}{q_{\rv{Y}_1}\times\cdots\times q_{\rv{Y}_K}}\\
    &= \sum_{x\in\mathcal{X}}p_{\rv{X}}(x)\left[ -K\cdot\log{p_\rv{X}(x)} + K\cdot\log {q_\rv{X}(x)} + D\infdiv*{W_{\rvs{Y}_1^K|\rv{X}}(\cdot|x)}{q_{\rv{Y}_1}\times\cdots\times q_{\rv{Y}_K}} \right] ,
\end{align}
where $p_{\rv{Y}_i}(\cdot) \defeq \sum_{x\in\mathcal{X}}W_{\rv{Y}_i|\rv{X}}(\cdot|x)\cdot p_{\rv{X}}(x)$ and $q_{\rv{Y}_i}(\cdot) \defeq \sum_{x\in\mathcal{X}}W_{\rv{Y}_i|\rv{X}}(\cdot|x)\cdot q_{\rv{X}}(x)$.
On the one hand, it is clear that
\begin{equation}\label{eq:BA:update:1}
\argmax_{q_\rv{X}\in\set{P}(\set{X})}J(p_\rv{X},q_\rv{X}) = p_\rv{X}
\end{equation}
due to the data processing inequality.
On the other hand, for each $q_{\rv{X}}\in\set{P}(\set{X})$, the funciton $J(p_\rv{X},q_\rv{X})$ is concave in $p_\rv{X}$.
Thus, by requiring the partial derivative of $J$ with respect to $p_{\rv{X}}$ to be $0$, we find the optimal $p_{\rv{X}}$ to be
\begin{equation}\label{eq:BA:update:2}
    p_\rv{X} \propto q_\rv{X}\cdot \exp{\left(\frac{D\infdiv*{W_{\rvs{Y}_1^K|\rv{X}}(\cdot|x)}{q_{\rv{Y}_1}\times\cdots\times q_{\rv{Y}_K}}}{K}\right)}.
\end{equation}
Combining~\eqref{eq:BA:update:1} and~\eqref{eq:BA:update:2}, we end up with the update rule as in~\eqref{eq:update_rule_classical_multipartite}.


\subsection{Speed of Convergence}

At $n$-th step, we estimate the $K$-partite common information as
\begin{align}
    \tilde{C}(n) \defeq J(p_{\rv{X}}^{(n)}, p_{\rv{X}}^{(n\!-\!1)})
    &= \sum_{x\in\mathcal{X}}p^{(n)}_{\rv{X}}(x)\left[ -K\!\cdot\!\log{p^{(n)}_\rv{X}(x)} + K\!\cdot\!\log {p^{(n\!-\!1)}_\rv{X}(x)} + D\infdiv*{W_{\rvs{Y}_1^K|\rv{X}}(\cdot|x)}{p^{(n\!-\!1)}_{\rv{Y}_1}\times\cdots\times p^{(n\!-\!1)}_{\rv{Y}_K}} \right] \\
    &= \sum_{x\in\mathcal{X}}p^{(n)}_{\rv{X}}(x) \cdot K\cdot \log{\left(\sum_{\tilde{x}} p^{(n-1)}_\rv{X}(\tilde{x})\cdot \exp{\left(\frac{D\infdiv*{W_{\rvs{Y}_1^K|\rv{X}}(\cdot|\tilde{x})}{p^{(n-1)}_{\rv{Y}_1}\times\cdots\times p^{(n-1)}_{\rv{Y}_K}}}{K}\right)}\right)}\\
    &= K\cdot \log{\left(\sum_{\tilde{x}} p^{(n-1)}_\rv{X}(\tilde{x})\cdot \exp{\left(\frac{D\infdiv*{W_{\rvs{Y}_1^K|\rv{X}}(\cdot|\tilde{x})}{p^{(n-1)}_{\rv{Y}_1}\times\cdots\times p^{(n-1)}_{\rv{Y}_K}}}{K}\right)}\right)},
\end{align}
where we have used the update rule~\eqref{eq:update_rule_classical_multipartite} for $p_{\rv{X}}^{(n)}$.
Let $p^{\tri}_{\rv{X}}$ be a pmf achieving $\tilde{C}(W)$.
We have
\begin{align}
&\sum_{x\in\set{X}} p^{\tri}_{\rv{X}}(x)\cdot \left( \log{p_\rv{X}^{(n)}(x)} - \log{p_\rv{X}^{(n-1)}(x)}\right) \nonumber \\
= & \sum_{x\in\set{X}} p^{\tri}_{\rv{X}}(x) \cdot \begin{aligned}[t]
&\left( \frac{D\infdiv*{W_{\rvs{Y}_1^K|\rv{X}}(\cdot|x)}{p^{(n-1)}_{\rv{Y}_1}\times\cdots\times p^{(n-1)}_{\rv{Y}_K}}}{K} \right.\\
&\hspace{107pt}\left. - \log{\left(\sum_{\tilde{x}} p^{(n-1)}_\rv{X}(\tilde{x})\cdot \exp{\left(\frac{D\infdiv*{W_{\rvs{Y}_1^K|\rv{X}}(\cdot|\tilde{x})}{p^{(n-1)}_{\rv{Y}_1}\times\cdots\times p^{(n-1)}_{\rv{Y}_K}}}{K}\right)}\right)} \right)
\end{aligned}\\
= & -\frac{1}{K}\tilde{C}(n) + \frac{1}{K}\begin{aligned}[t]\Bigg(
&\sum_{x\in\set{X}}p^{\tri}_{\rv{X}}(x)\cdot D\infdiv*{W_{\rvs{Y}_1^K|\rv{X}}(\cdot|x)}{p^{(n-1)}_{\rv{Y}_1}\times\cdots\times p^{(n-1)}_{\rv{Y}_K}} \\
&\hspace{162pt}+ \sum_{i=1}^K \sum_{y}p^{\tri}_{\rv{Y}_i}(y)\log{p^{\tri}_{\rv{Y}_i}(y)} - \sum_{i=1}^K \sum_{y}p^{\tri}_{\rv{Y}_i}(y)\log{p^{\tri}_{\rv{Y}_i}(y)}
\Bigg)\end{aligned}\\
= & \frac{1}{K}\left( \tilde{C}(W) - \tilde{C}(n) + \sum_{i=1}^K D\infdiv*{p^{\tri}_{\rv{Y}_i}}{p^{(n-1)}_{\rv{Y}_i}} \right)\\
\geq & \frac{1}{K}\left( \tilde{C}(W) - \tilde{C}(n)\right).
\end{align}
Choosing $p^{(0)}_{\rv{X}}$ to be the uniform distribution on $\set{X}$, we have
\begin{align}
    \frac{1}{K}\sum_{i=1}^n \tilde{C}(W) - \tilde{C}(n) &\leq \sum_{i=1}^n \sum_{x\in\set{X}} p^{\tri}_{\rv{X}}(x)\cdot \left( \log{p_\rv{X}^{(i)}(x)} - \log{p_\rv{X}^{(i-1)}(x)}\right) \\
    &= \sum_{x\in\set{X}} p^{\tri}_{\rv{X}}(x)\cdot \left( \log{p_\rv{X}^{(n)}(x)} - \log{p_\rv{X}^{(0)}(x)}\right) \\
    & = D\infdiv*{p^{\tri}_{\rv{X}}}{p_\rv{X}^{(0)}} - D\infdiv*{p^{\tri}_{\rv{X}}}{p_\rv{X}^{(n)}}\\
    &\leq D\infdiv*{p^{\tri}_{\rv{X}}}{p_\rv{X}^{(0)}} \\
    &\leq \log{\size{\set{X}}}.
\end{align}
Note that $\tilde{C}(n)$ is monotonically non-decreasing in $n$ and $\tilde{C}(n)\leq\tilde{C}(W)$ for all $n$.
One must have
\begin{equation}
    \abs{\tilde{C}(W) - \tilde{C}(n)} \leq \frac{K}{n}\log{\size{\set{X}}} .
\end{equation}
Finally, we need to point out that the above discussion is a worst-case analysis and holds for all channels.
In practice, the algorithm converges  much faster.


\end{document}